\numberwithin{equation}{section}
\newtheorem{theorem}{Theorem}[section]
\newtheorem{lemma}[theorem]{Lemma}
\newtheorem{definition}[theorem]{Definition}
\newtheorem{remark}[theorem]{Remark}
\newtheorem{coro}[theorem]{Corollary}
\newtheorem{proposition}[theorem]{Proposition}
\DeclareMathOperator*{\Pf}{Pf}
\newcommand{\pt}{{\partial_t}}
\begin{document}

\title[Quasi-Pfaffians and applications]{Quasi-Pfaffians and applications}

\subjclass[2020]{39A36,~15A15}
\date{}

\dedicatory{}

\keywords{quasi-Pfaffians, non-commutative algebra, non-commutative integrable systems}

\author{Claire Gilson}
\address{School of Mathematics and Statistics, University of Glasgow, Glasgow G12 8SQ, UK}
\email{claire.gilson@glasgow.ac.uk}

\author{Shi-Hao Li}
\address{Department of Mathematics, Sichuan University, Chengdu, 610064, PR China}
\email{shihao.li@scu.edu.cn}

\author{Guo-Fu Yu}
\address{School of Mathematical Sciences, Shanghai Jiaotong University, People's Republic of China; School of Mathematical Sciences, CMA-Shanghai, Shanghai Jiao Tong University, Shanghai 200240}
\email{gfyu@sjtu.edu.cn}

\begin{abstract}
\noindent
This paper presents a non-commutative generalization of the Pfaffian which we call a quasi-Pfaffian. 
This novel concept arises from solving linear systems with non-commutative skew-symmetric coefficients. 
A new non-commutative integrable system whose solutions are expressed in terms of these quasi-Pfaffians is presented. Derivative formulae and identities satisfied by these quasi-Pfaffians are presented. 
 
\end{abstract}

\maketitle

\section{Introduction}

The Pfaffian, named after Johann Friedrich Pfaff,  is an algebraic object associated with skew-symmetric matrices. Its introduction is credited to Arthur Cayley, who discovered that the determinant of any $2n \times 2n$ skew-symmetric matrix $A$ could be written as the square of a specific polynomial in the matrix entries \cite{cayley49}. 
Formally, for a skew-symmetric matrix $A$  with entries  $a_{ij}$, $i,j=1,\cdots,{2n}$, this is captured by the fundamental identity $\Pf(A)^2=\det(A)$. 
%This makes the Pfaffian a powerful tool, offering both computational efficiency and theoretical depth for working with the determinants and inherent structures of skew-symmetric matrices.

While the Pfaffian can simply be defined as a square root of the determinant for skew-symmetric matrices, its conceptual scope extends beyond this relationship. Knuth underscores this point by asserting that ``Pfaffians are more fundamental than determinants, in the sense that determinants are merely the bipartite special case of a general sum over matchings'' \cite{knuth96}. This conceptual primacy explains why the Pfaffian itself often appears in  mathematical and physical contexts. 
The utility of the Pfaffian is exemplified in combinatorics and statistical physics. One of its foremost applications is for counting perfect matchings in graphs, a method formalized by the FKT algorithm, which is fundamental to dimer statistics and the study of domino tilings in rectangular grids \cite{kasteleyn61,kenyon05}. Furthermore, the Pfaffian provides powerful techniques in enumerative combinatorics, including the counting of shifted Young diagrams, the analysis of non-intersecting paths, and the theory of Schur Q-functions \cite{stembridge90,ishikawa96}.

In these applications, the Pfaffian is not treated as a square root of the determinant but is instead defined by its intrinsic combinatorial nature. 
 Specifically, it is defined as
\begin{align}\label{expansion}
\Pf(A):=\Pf(1,\cdots,2n)=\sum_{P} (-1)^P \Pf(i_1,i_2)\Pf(i_3,i_4)\cdots\Pf(i_{2n-1},i_{2n}),
\end{align}
where $\Pf(i,j)=a_{i,j}$ are the entries in the Pfaffian. 
The sum is over all permutations $P$ of the set $\{1,\cdots,2n\}$ into $n$ disjoint pairs satisfying the ordering condition 
\begin{align}
i_1<i_2, \: i_3<i_4, \,\cdots,i_{2n-1}<i_{2n} 
\end{align}
and $i_1<i_3<\cdots<i_{2n-1}$, and $(-1)^P$ denotes the sign of the permutation associated with this ordered pairing.
%which sums over all possible ways to partition the set 

Like determinants, Pfaffians obey bilinear identities.  These identities are of fundamental importance in integrable systems.   First investigated by Tanner \cite{tanner78}, the simplest identity states that 
\begin{align}\label{identity1}
\begin{aligned}
\Pf(\bullet,a_1,a_2,a_3,a_4)\Pf(\bullet)&=\Pf(\bullet,a_1,a_2)\Pf(\bullet,a_3,a_4)\\
&-\Pf(\bullet,a_1,a_3)\Pf(\bullet,a_2,a_4)+\Pf(\bullet,a_1,a_4)\Pf(\bullet,a_2,a_3).
\end{aligned}
\end{align}
In the above notation $\bullet=\{1,\cdots,2n\}$ are an arbitrary set of $2n$ labels. The proof of this identity was later furnished by Zajaczkowski \cite{zajaczkowski80}.
The above identity was then generalized by Perk et al., who systematized these relations into the ``compound Pfaffian theorem'' \cite[Sec. 3.2]{perk84},
culminating in the most general formula given by 
\begin{align*}
&\sum_{j=0}^M (-1)^j \Pf(b_0,\cdots,\hat{b}_j,\cdots,b_M )\Pf(b_j,c_0,\cdots,c_N )\\
&\quad=\sum_{k=0}^N (-1)^k \Pf(b_0,\cdots, b_M,c_k )\Pf(c_0,\cdots,\hat{c}_k,\cdots,c_N ).
\end{align*} 
In the above equation, $b_0,\cdots,b_M$ and $c_0,\cdots,c_N$ are $M+1$ and $N+1$ different labels, where $M,\,N\in2\mathbb{N}$. %This equation reduces to \eqref{identity1} if we take $b_0=a_1$, $b_i=i$ for $i=1,\cdots,2n$, $c_i=d_{i+2}$ for $i=0,1,2$ and $c_i=i-2$ for $i=3,\cdots,2n+2$.
The general Pfaffian bilinear identity was proved by Ohta in his PhD thesis \cite{ohta92}, and later proved by Dress and Wenzel making the use of exterior algebra and $\Delta$-matroids \cite{dress95}. 
%These bilinear identities have been widely used in different branches of mathematics.
%In reference \cite{knuth96},, the bilinear identities were employed to establish a general form of Schur's Pfaffian identity
%\begin{align*}
%\Pf\left(
%\frac{x_i-x_j}{c+b(x_i+x_j)+ax_ix_j}
%\right)_{1\leq i,j\leq 2n}=\prod_{1\leq i<j\leq 2n}\frac{x_i-x_j}{c+b(x_i+x_j)+ax_ix_j}.
%\end{align*}
Pfaffians bilinear identities have been instrumental in the theory of integrable systems \cite{hirota04,ohta04,hirota00,tsujimoto96},  in particular  the $\tau$-functions for the BKP and DKP hierarchies are naturally expressed as Pfaffians.

Over the years, with the development of non-commutative geometry and quantum algebra, Pfaffians with non-commutative entries have been investigated. Ishikawa and Wakayama considered  minor summation formulae for Pfaffians in the framework of the quantum matrix algebra $\mathcal{A}(\text{Mat}_q(m,n))$ \cite{ishikawa95}. In \cite{jing14}, these summation formulae were called quantum Pfaffians, and explicit computations using quantum Pfaffians were carried out by using recursion formulae.

 In this paper, we consider a novel kind of non-commutative generalization which we call a quasi-Pfaffian. The idea was inspired by the work of Gelfand, Retakh et al, where quasi-determinants were introduced  \cite{gelfand91,gelfand92,gelfand05}. The quasi-Pfaffian turns out to be the most natural object for playing the role of non-commutative Pfaffian.   In what follows, we present the main results of this work.

\subsection{Main results}
%Linear systems with non-commutative coefficients have been studied for over a century. While Richardson's work on quaternionic systems \cite{richardson28} revealed determinants with skew-symmetric minors in low-rank scenarios, a general closed-form solution remained elusive. 
%Motivated by the need for closed-form solutions, we address in Section \ref{sub2.1} a linear system featuring a skew-symmetric, non-commutative coefficient matrix. Our approach demonstrates that the solution can be explicitly expressed using matrix minors, which in turn leads to a natural and proper definition for quasi-Pfaffians.
 
We give a formal definition of the quasi-Pfaffian, with elements defined in a division ring $\mathcal{R}$ endowed with an anti-involution in Section \ref{sub2.2}. We demonstrate that the quasi-Pfaffian can be written as a ratio of Pfaffians in terms of commutative entries. It is crucial to note that, unlike the classical Pfaffian, the quasi-Pfaffian is not constrained to be zero when the last two labels are identical. 
In Section \ref{sub2.3}, we generalize a minor expansion formula for quasi-determinants to derive a corresponding  identity for quasi-Pfaffians \cite{krob95}. This result serves as a non-commutative generalization of identity \eqref{identity1}. A central finding, presented in Theorem \ref{thm2.3}, states that a quasi-Pfaffian of quasi-Pfaffians is itself a quasi-Pfaffian. This property demonstrates that quasi-Pfaffians adhere to a ``heredity principle''.  Furthermore, from a computational perspective, this theorem provides a condensation algorithm for quasi-Pfaffians, generalizing the known method for classical Pfaffians \cite{li20}.

%
%The application of quasi-Pfaffians to soliton theory is subsequently investigated. In literatures, quasi-determinants were found to be a fundamental tool to express the solutions for non-commutative KP hierarchy  \cite{etingof98,gilson08,gilson071,gilson072}. It has been an open problem for a long time that whether there is any non-commutative integrable system satisfied by quasi-Pfaffians.

In Section \ref{sec3}, we introduce the definitions of Grammian and Wronskian type quasi-Pfaffians and derive their key derivative formulae.
We show that the derivatives of these quasi-Pfaffians remain expressible as quasi-Pfaffians, this  enables the construction of a closed-form for ``non-commutative $\tau$-functions". 
Building upon these derivative formulas and the quasi-Pfaffian identity, in Section \ref{sec4}  the construction of a non-commutative B-Toda lattice whose solutions are explicitly given by quasi-Pfaffians is presented. Furthermore, we establish the Lax integrability of this system by constructing a family of $\mathcal{R}$-valued polynomials that satisfy a 6-term recurrence relation.

%quasi dets and pfaffians
\section{Quasi-determinants and quasi-Pfaffians}\label{sec2}
This section presents the formal definition of quasi-Pfaffians. To motivate this definition, we consider in the appendix the solution to a linear system $A \underline{x}=\underline{b}$, where $A$ is a skew-symmetric matrix with non-commutative entries.
\subsection{Quasi-determinants}
We start with quasi-determinants for later use. The usual  definition of a quasi-determinant is
\begin{equation}\label{exform1}
|A|_{ij} = a_{ij} - r^j_i \left( A^{ij} \right)^{-1} c^j_i, %quad A^{-1} = \left( |A|^{-1}_{ji} \right)_{i,j=1,\ldots,n},
\end{equation}
where $ r^j_i $ is the $i$-th row of $ A$ with the  $ j $-th element removed, \( c^j_i \) is the \( j \)-th column with the \( i \)-th element removed, and $A^{ij}$  is the matrix obtained by removing both the $ i$-th row and the  $j$-th column from the matrix $ A $, assuming that $A^{ij}$ is invertible in a certain free generated ring.  This can also be expressed simply in terms of a boxed element by
\[
|A|_{ij} =
\left|
\begin{array}{cc}
A^{ij} & c^j_i \\
r^j_i & \boxed{a_{ij}}
\end{array}
\right|.
\]
 For a detailed introduction of quasi-determinants, please refer to Gelfand et al   \cite{gelfand05}.

In the case where entries in the quasi-determinant commute we have the formula
$|A|_{ij}=(-1)^{i+j}\det(A)/\det(A^{ij})$. 
For a block matrix 
\begin{equation*}\label{block}
 \begin{pmatrix}
      A&B\\
      C&d
    \end{pmatrix}
\end{equation*}
where $d$ is a single entry, $A$ is a square matrix
 (say $n \times n$) and $B$, $C$ are column and row vectors,
the quasi-determinant expanding about the last entry is defined as 
\begin{equation*}\label{block}
 \begin{vmatrix}
      A&B\\
      C&\fbox{$d$}
    \end{vmatrix}
    =d-CA^{-1} B.
\end{equation*}
This object is a very natural one as the last entry in the inverse of the block matrix is given by the inverse of the quasi-determinant
\begin{equation*}
\left\{\begin{pmatrix}
      A&B\\
      C&d
    \end{pmatrix}^{-1}\right\}_{n+1,n+1}=(d-CA^{-1} B)^{-1}.
\end{equation*}

%second example removed

An important result in the theory of quasi-determinants is  Sylvester's identity \cite[Thm 1.5.2]{gelfand05}.  This is the non-commutative form of Jacobi's identity for determinants.
\begin{align*}
\left|\begin{array}{ccc}
A&c_1&c_2\\
r_1& d_{11}& d_{12}\\
r_2&d_{21}& \boxed{d_{21}}\end{array}
\right|
=
\left|\begin{array}{cc}
\left|\begin{array}{cc}
A&c_1\\
r_1&\boxed{d_{11}}\end{array}
\right|
&\left|\begin{array}{cc}
A&c_2\\
r_1&\boxed{d_{12}}\end{array}
\right|\\[1em]
\left|\begin{array}{cc}
A&c_1\\
r_2&\boxed{d_{21}}\end{array}
\right|
&
\boxed{\left|\begin{array}{cc}
A&c_2\\
r_2&\boxed{d_{22}}\end{array}
\right|}
\end{array}\right|,
\end{align*}
where $A$ is a square matrix, $c_i$ and $r_i$ are single column  and row vectors respectively and $d_{ij}$ are single entries.

%SUB SECTION Quasi pfaffians
\subsection{Definition of quasi-Pfaffians}\label{sub2.2}
Assume that $\mathcal{R}$ is a division ring equipped with an anti-involution $\top$ such that for any $a,b\in \mathcal{R}$,  they satisfy $(a+b)^\top=a^\top+b^\top$, $(ab)^\top=b^\top a^\top$ and $(a^\top)^\top=a$. 
The skew-symmetry of a matrix over $\mathcal{R}$ is then defined under anti-involution. 
We define the anti-involution of $A=(a_{ij})_{1\leq i,j\leq n}$ as
\begin{align*}
\left(\begin{array}{ccc}
a_{11}&\cdots&a_{1n}\\
\vdots&&\vdots\\
a_{n1}&\cdots&a_{nn}\end{array}
\right)^\top=\left(\begin{array}{ccc}
a_{11}^\top&\cdots&a_{n1}^\top\\
\vdots&&\vdots\\
a_{1n}^\top&\cdots&a_{nn}^\top\end{array}
\right).
\end{align*}
A matrix $A$ is called skew symmetric if $A^\top=-A$. 

A quasi-Pfaffian is a quasi-determinant where the main body of the matrix is a skew-symmetric matrix. We give a precise definition below.
\begin{definition} 
Given a skew-symmetric matrix  $A=(a_{i,j})_{i,j=1}^{2n}\in\mathcal{R}^{2n\times 2n}$ which is invertible in $\mathcal{R}$. We can define a quasi-Pfaffian as
\begin{align}\label{def}
\Pf(1,&\cdots,2n ,\boxed{2n+1,2n+2})=
\left|
\begin{array}{cccc}
a_{1,1}&\cdots&a_{1,2n}&a_{1,2n+2}\\
\vdots&&\vdots&\vdots\\
a_{2n,1}&\cdots&a_{2n,2n}&a_{2n,2n+2}\\
a_{2n+1,1}&\cdots&a_{2n+1,2n}&\boxed{a_{2n+1,2n+2}}
\end{array}
\right|.
\end{align}
\end{definition}
\begin{remark}
The numbers $\{1,\cdots,2n+2\}$are purely a set of labels. One can take the labels as arbitrary numbers or letters in an alphabet, such as $d_0,d_1,\cdots$. In the subsequent sections, the symbol $\bullet$ is reserved as a shorthand for a set of an even number of indices. This notation allows us to express Pfaffians more compactly; for instance, we can write $\Pf(\bullet, \boxed{
2n+1,2n+2})$ to represent $\Pf(1,\dots,2n, \boxed{2n+1, 2n+2})$ without any ambiguity.
\end{remark}

In the above definition of the quasi-Pfaffian we don't necessarily assume that every element in the quasi-Pfaffian is invertible.
However, we should require that the principle minors of even size are invertible, i.e. $(a_{ij})_{i,j=1}^{2k}$, $k=1,2,\cdots,n$ are invertible. 
%By using the formula \eqref{inverse2?} recursively, an explicit formula for the $\left[(a_{ij})_{i,j=1}^{2n}\right]^{-1}$ can be computed. 
%If we simply denote
%$A=(a_{i,j})_{i,j=0}^{2n-1}$,
Using  the standard quasi-determinant expansion formula \eqref{exform1}  we have
\begin{align}\label{expand}
\Pf(1,\cdots,2n,\boxed{2n+1,2n+2})=a_{2n+1,2n+2}-  \sum_{i,j=1}^{2n }    a_{2n+1,i}  (A^{-1})_{ij}a_{j,2n+2}.
\end{align}
Therefore, to explicitly get the value of quasi-Pfaffian, we need to compute the inverse of $A$, which is a 2-step calculation. The first step is to write down the inverse of $A$ in terms of inverse of some 2-by-2 matrix. Using the block matrix representation 
\begin{align*}
A=\left(\begin{array}{cc}
A_{11}&B\\
-B^\top&A_{22}
\end{array}
\right),
\end{align*} where $A_{11}$ is a 2-by-2 matrix, and $A_{22}$ is a $(2n-2)$-by-$(2n-2)$ invertible matrix. With principle minors being invertible, we could have 
\begin{align*}
\left(\begin{array}{cc}
A_{11}&B\\
-B^\top&A_{22}
\end{array}
\right)^{-1}=\left(\begin{array}{cc}
(A_{11}+BA_{22}^{-1}B^\top)^{-1}&-(A_{11}+BA_{22}^{-1}B^\top)^{-1}BA_{22}^{-1}\\
A_{22}^{-1}B^\top(A_{11}+BA_{22}^{-1}B^\top)^{-1}&A_{22}^{-1}-A_{22}^{-1}B^\top(A_{11}+BA_{22}^{-1}B^\top)^{-1}BA_{22}^{-1}
\end{array}
\right).
\end{align*}
Following the procedure, we can decompose the skew symmetric $(2n-2)$-by-$(2n-2)$ matrix into lower order matrices, until all matrices in $A^{-1}$ are 2-by-2 matrices. Then we go to the second step, where we need to compute the inverse of 2-by-2 skew symmetric matrices with non-commutative entries.  
We compute the inverse by considering an expansion about non-diagonal entries
\begin{align}\label{inverse1}
\left(\begin{array}{cc}
a_{11}&a_{12}\\
a_{21} &a_{22}
\end{array}
\right)^{-1}
=\left(\begin{array}{cc}
-a_{21}^{-1}a_{22}Q^{-1}     &   a_{21}^{-1}(1+  a_{22}Q^{-1}a_{11} a_{21}^{-1})\\
Q^{-1}                                       &                              -Q^{-1} a_{11}  a_{21}^{-1}
\end{array}
\right),
\end{align}
where $Q=a_{12}-a_{11}a_{21}^{-1}  a_{22}$, the quasi-determinant expanding about $a_{12}$.
If our matrix has  $a_{ji}=-a_{ij}^\top$, i.e. the structure of a skew-symmetric matrix with non-commutative entries and additionally taking the diagonal elements to be zero, i.e.  $a_{11}=a_{22}=0$  then
\begin{align*}
\left(\begin{array}{cc}
0&a_{12}\\
-a_{12}^\top &0
\end{array}
\right)^{-1}=\left(\begin{array}{cc}
0&(-a_{12}^{\top})^{ -1}\\
a_{12}^{-1} &0
\end{array}
\right).
\end{align*}
As a simple example, we have
\begin{align*}
&\Pf(1,2,\boxed{3,4})=
a_{34}-\left(
a_{31}, a_{32}
\right)\left(\begin{array}{cc}
a_{11}&a_{12}\\
a_{21}&a_{22}
\end{array}
\right)^{-1}
\left(\begin{array}{c}
a_{14}\\a_{24}
\end{array}\right)\\
&\quad=
a_{34}-\left(
a_{31}, a_{32}
\right)
\left(\begin{array}{cc}
-a_{21}^{-1}a_{22}Q^{-1}     &   a_{21}^{-1}(1+  a_{22}Q^{-1}a_{11} a_{21}^{-1})\\
Q^{-1}                                       &                              -Q^{-1} a_{11}  a_{21}^{-1}
\end{array}
\right)
\left(\begin{array}{c}
a_{14}\\a_{24}
\end{array}\right)\\
&\quad=
a_{34}-\left(a_{31}a_{12}^{-\top}a_{22}+a_{32}\right)Q^{-1}a_{14}+\left(a_{31}a_{12}^{-\top}(-a_{12}^\top+a_{22}Q^{-1}a_{11})^{-1}a_{12}^\top+a_{32}Q^{-1}a_{11}a_{12}^{-\top}\right)a_{24},
\end{align*}
where $a_{ij}^{\top}=-a_{ji}$ and $Q=a_{12}+a_{11}a^{-\top}_{12}a_{22}$.
In the commutative case, with $a_{11}=a_{22}=0$ and $Q=a_{12}$, the above formula simplifys to
\begin{align*}\Pf(1,2,\boxed{3,4})=a_{34}+a_{31}a^{-1}_{12}a_{24}-a_{32}a_{12}^{-1}a_{14}=\frac{\Pf(1,2,3,4)}{\Pf(1,2)}.
\end{align*}

\begin{proposition}
In the commutative setting,
quasi-Pfaffian defined by \eqref{def} is a ratio of Pfaffians. In other words, we have
\begin{align}\label{nc-c}
\Pf(1,\cdots,2n,\boxed{2n+1,2n+2})=\frac{\Pf(1,\cdots,2n,2n+1,2n+2)}{\Pf(1,\cdots,2n)}.
\end{align}
\end{proposition}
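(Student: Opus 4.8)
The plan is to read the right-hand side of the expansion \eqref{expand} as the Pfaffian of the $2\times 2$ Schur complement of the block carrying the labels $2n+1,2n+2$, and then to invoke the multiplicativity of the Pfaffian under block diagonalization. Thus the proof reduces \eqref{nc-c} to the classical statement that, for a skew-symmetric matrix split into blocks, the Pfaffian factors as the product of the Pfaffian of the body and the Pfaffian of the Schur complement.

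First I would set up the block form. Writing $A=(a_{ij})_{i,j=1}^{2n}$ for the invertible skew-symmetric body, I group the full $(2n+2)\times(2n+2)$ skew-symmetric matrix whose Pfaffian is $\Pf(1,\dots,2n+2)$ as
\[
\tilde A=\begin{pmatrix} A & C\\ -C^\top & D\end{pmatrix},\qquad C=(a_{i,2n+1},\,a_{i,2n+2})_{i=1}^{2n},\quad D=\begin{pmatrix} 0 & a_{2n+1,2n+2}\\ -a_{2n+1,2n+2} & 0\end{pmatrix}.
\]
In the commutative setting $(A^{-1})^\top=-A^{-1}$, so with $L=\left(\begin{smallmatrix} I & 0\\ C^\top A^{-1}& I\end{smallmatrix}\right)$ one has $L^\top=\left(\begin{smallmatrix}I & -A^{-1}C\\0&I\end{smallmatrix}\right)$, and a direct computation (using $C^\top A^{-1}A=C^\top$) gives the congruence $L\,\tilde A\,L^\top=\mathrm{diag}(A,\,S)$ with Schur complement $S=D+C^\top A^{-1}C$, itself $2\times 2$ and skew-symmetric.

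Next I would apply the transformation rule $\Pf(M X M^\top)=\det(M)\,\Pf(X)$ valid for skew-symmetric $X$. Since $\det L=1$ and the Pfaffian of a block-diagonal skew-symmetric matrix whose first block has even size is the product of the blockwise Pfaffians, this yields $\Pf(\tilde A)=\Pf(A)\,\Pf(S)$, that is,
\[
\Pf(1,\dots,2n,2n+1,2n+2)=\Pf(1,\dots,2n)\,\Pf(S).
\]
It then remains to identify $\Pf(S)$ with the quasi-Pfaffian: because $S$ is $2\times 2$ skew-symmetric, $\Pf(S)=S_{12}=a_{2n+1,2n+2}+\sum_{i,j=1}^{2n}a_{i,2n+1}(A^{-1})_{ij}a_{j,2n+2}$, and using $a_{i,2n+1}=-a_{2n+1,i}$ this is exactly the right-hand side of \eqref{expand}, namely $\Pf(1,\dots,2n,\boxed{2n+1,2n+2})$. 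Dividing by $\Pf(1,\dots,2n)$ gives \eqref{nc-c}.

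The main obstacle is conceptually mild but requires care in bookkeeping rather than ideas: one must confirm that the congruence produces precisely $S=D+C^\top A^{-1}C$, which hinges on $(A^{-1})^\top=-A^{-1}$ and hence on $L^\top$ taking the displayed form, and one must verify that the $+$ sign in $\Pf(M X M^\top)=\det(M)\Pf(X)$ combines with the skew-symmetry relation $a_{i,2n+1}=-a_{2n+1,i}$ to reproduce the minus sign appearing in \eqref{expand}. A quick check against the explicit $\Pf(1,2,\boxed{3,4})$ computation at the end of the section pins down all sign conventions; note also that the step $C^\top A^{-1}A=C^\top$ genuinely uses commutativity, which is why the factorization \eqref{nc-c} is stated only in the commutative case.
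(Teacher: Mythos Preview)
Your argument is correct. The congruence $L\tilde A L^\top=\mathrm{diag}(A,S)$ with $S=D+C^\top A^{-1}C$ is right (using $(A^{-1})^\top=-A^{-1}$), the Pfaffian factorization $\Pf(\tilde A)=\Pf(A)\Pf(S)$ follows from $\det L=1$, and your sign check correctly matches $S_{12}$ with the expansion \eqref{expand}.

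The paper takes a different route: it works at the determinant level rather than the Pfaffian level. It uses that in the commutative case the quasi-Pfaffian, being a quasi-determinant, equals the ratio of the $(2n+1)\times(2n+1)$ determinant in \eqref{def} to $\det A$, then invokes Cayley's identity \eqref{cayley} from the appendix to rewrite that $(2n+1)\times(2n+1)$ determinant as $\Pf(1,\dots,2n)\Pf(1,\dots,2n+2)$, and combines this with $\det A=\Pf(1,\dots,2n)^2$. Your Schur-complement approach is more structural and self-contained (it avoids appealing to Cayley's formula), and it makes transparent \emph{why} the quasi-Pfaffian should be a ratio of Pfaffians: it is literally the Pfaffian of the Schur complement. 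The paper's route is quicker given that Cayley's identity is already available, and ties the proposition to the classical determinant--Pfaffian dictionary that underlies the whole section.
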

\begin{proof}
This proposition could be verified by making the use of Cayley's formula \eqref{cayley}. The formulae $
\det(A)=\Pf(1,\cdots,2n)^2$, and 
\begin{align*}
\det\left(\begin{array}{cccc}
a_{1,1}&\cdots&a_{1,2n}&a_{1,2n+2}\\
\vdots&&\vdots&\vdots\\
a_{2n,1}&\cdots&a_{2n,2n}&a_{2n,2n+2}\\
a_{2n+1,1}&\cdots&a_{2n+1,2n}&a_{2n+1,2n+2}
\end{array}
\right)=\Pf(1,\cdots,2n)\Pf(1,\cdots,2n+2)
\end{align*}
leads to the result.
\end{proof}

In the commutative case, we know that the determinant of a skew symmetric matrix of odd size is zero. Thus if our final row and column in our commutative quasi-pfaffian are the same with each other, then the  Pfaffian in the numerator of  \eqref{nc-c}  will be zero. 
However, for quasi-Pfaffians, we have the following proposition, showing its difference with the commutative case. 
\begin{proposition}\label{lem1}
Let $\bullet$ be a set of $2m$ indices, and let $i, j$ be any two indices. Then,
\begin{align}
\Pf(\bullet, \boxed{
i,j}) = -\Pf(\bullet, \boxed{j,i})^\top.
\end{align}
\end{proposition}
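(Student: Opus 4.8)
The plan is to work directly from the explicit expansion \eqref{expand} of the quasi-Pfaffian and to reduce the claimed identity to two structural facts: that the anti-involution reverses products, and that the inverse of a skew-symmetric matrix is again skew-symmetric. Writing $A=(a_{kl})_{k,l\in\bullet}$ for the invertible skew-symmetric body indexed by $\bullet$, formula \eqref{expand} gives
\[
\Pf(\bullet,\boxed{i,j})=a_{ij}-\sum_{k,l\in\bullet}a_{ik}(A^{-1})_{kl}a_{lj},
\]
and the same formula with $i$ and $j$ interchanged computes $\Pf(\bullet,\boxed{j,i})$. The whole proof is then a matter of applying $\top$ to the second expression and matching it, term by term, against the first.

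First I would record the skew-symmetry relations in entrywise form. Since $A^\top=-A$ and the anti-involution acts entrywise together with a transpose of indices, this reads $a_{lk}^\top=-a_{kl}$ for all indices. I would then establish the key auxiliary fact $(A^{-1})^\top=-A^{-1}$: from $(XY)^\top=Y^\top X^\top$ one obtains $(A^{-1})^\top=(A^\top)^{-1}$, and since $A^\top=-A$ this equals $-A^{-1}$. In entries this says $\big((A^{-1})_{kl}\big)^\top=-(A^{-1})_{lk}$. These are the only structural inputs required.

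Next I would apply $\top$ to the expansion of $\Pf(\bullet,\boxed{j,i})$. Using $(abc)^\top=c^\top b^\top a^\top$ on each summand turns $a_{jk}(A^{-1})_{kl}a_{li}$ into $a_{li}^\top\,\big((A^{-1})_{kl}\big)^\top\,a_{jk}^\top$, and substituting the three skew-symmetry relations above produces three sign changes whose product is $-1$. After relabelling the dummy indices $k\leftrightarrow l$, the sum collapses to exactly $-\sum_{k,l}a_{ik}(A^{-1})_{kl}a_{lj}$, while the leading term $a_{ji}^\top$ becomes $-a_{ij}$. Factoring out the overall minus sign yields $\Pf(\bullet,\boxed{j,i})^\top=-\Pf(\bullet,\boxed{i,j})$, which rearranges to the assertion.

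I expect no genuine obstacle here, since the content is purely formal. The one step that deserves care is the auxiliary identity $(A^{-1})^\top=-A^{-1}$, as it is the sole place where one must verify that skew-symmetry is preserved under inversion in the non-commutative setting; everything else is sign and index bookkeeping, with the decisive observation being that exactly three factors flip sign, so the overall parity comes out odd and produces the required minus sign.
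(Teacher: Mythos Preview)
Your proposal is correct and follows essentially the same route as the paper's proof: both start from the expansion formula \eqref{expand}, apply the anti-involution using $(abc)^\top=c^\top b^\top a^\top$, and exploit the skew-symmetry of $A$ together with $(A^{-1})^\top=-A^{-1}$ to flip signs and swap the summation indices. The only difference is cosmetic---you spell out the auxiliary step $(A^{-1})^\top=(A^\top)^{-1}=-A^{-1}$ a bit more explicitly than the paper does.
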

\begin{proof}
Let us denote the skew symmetric matrix related to the index set as $A$, then according to the definition, we have
\begin{align*}
\Pf(\bullet,\boxed{i,j})=a_{i,j}-\sum_{k,l} a_{i,k}(A^{-1})_{k,l}a_{l,j},
\end{align*}
then by noting that $\top$ is an anti-involution over $\mathcal{R}$, from which we get
\begin{align*}
\Pf(\bullet,\boxed{i,j})^\top=-a_{j,i}-\sum_{k,l} a_{j,l}(A^{-\top})_{k,l}a_{k,i}=-a_{j,i}+\sum_{k,l}a_{j,l}(A^{-1})_{l,k}a_{k,i}=-\Pf(\bullet,\boxed{j,i}).
\end{align*}
\end{proof}
\begin{coro}
The quasi-Pfaffian $\tau = \Pf(\bullet, \boxed{
i,i})$ is antisymmetric under anti-involution $\tau^\top = -\tau$, but is not necessarily zero.
\end{coro}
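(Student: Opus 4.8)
The plan is to read off both assertions directly from Proposition \ref{lem1}. For the antisymmetry, I would specialize the identity $\Pf(\bullet,\boxed{i,j}) = -\Pf(\bullet,\boxed{j,i})^\top$ to the diagonal case $j=i$. This immediately yields $\tau = -\tau^\top$. Applying the anti-involution $\top$ to both sides and invoking $(\tau^\top)^\top = \tau$ then gives $\tau^\top = -\tau$, which is exactly the claimed relation. No further computation is needed: the antisymmetry is a purely formal consequence of the skew-symmetry already encoded in Proposition \ref{lem1}.

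The remaining content is the assertion that $\tau$ need not vanish, and here the strategy is simply to exhibit an explicit instance. The cleanest choice is to take $\bullet$ to be empty (the case $m=0$), so that the expansion \eqref{expand} has an empty sum and the quasi-Pfaffian collapses to its single boxed entry, $\Pf(\boxed{i,i}) = a_{i,i}$. Skew-symmetry of the underlying matrix forces only $a_{i,i}^\top = -a_{i,i}$; in the commutative setting, where $\top$ acts trivially on scalars, this would force $a_{i,i} = -a_{i,i}$ and hence $a_{i,i}=0$, recovering the familiar vanishing of odd-size Pfaffians. Over a genuinely non-commutative $\mathcal{R}$, however, the $(-1)$-eigenspace of $\top$ can be non-trivial, so $a_{i,i}$ is free to be any non-zero anti-fixed element.

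To make this concrete I would instantiate $\mathcal{R} = \mathbb{H}$, the quaternions, with $\top$ given by quaternionic conjugation; this is a division ring equipped with a genuine anti-involution. Taking $a_{i,i} = \mathbf{i}$, a pure quaternion, one has $a_{i,i}^\top = \overline{\mathbf{i}} = -\mathbf{i} = -a_{i,i}$, so the skew-symmetry constraint is satisfied, while $\tau = \mathbf{i} \neq 0$. This single example already establishes that the relation $\tau^\top = -\tau$ does not force $\tau = 0$. The only point requiring any care is verifying that the ambient $\mathcal{R}$ is a bona fide division ring carrying an anti-involution with a non-zero anti-fixed element, and the quaternions supply all three properties simultaneously; there is no genuine analytic or combinatorial obstacle in this corollary, as it is essentially a direct specialization of the preceding proposition together with a well-chosen counterexample.
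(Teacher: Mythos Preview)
Your argument is correct. The antisymmetry part is exactly how the paper treats it: the corollary is recorded immediately after Proposition~\ref{lem1} as the specialization $j=i$, with no further work.

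For the non-vanishing assertion the paper takes a different example. Rather than the degenerate case $\bullet=\emptyset$, it points to a $3\times 3$ skew-symmetric matrix over the real quaternions which (citing \cite{lam06}) is full-rank; since the last row is then not a linear combination of the first two, the quasi-Pfaffian $\Pf(1,2,\boxed{3,3})$ cannot vanish by the homological property recorded just after the corollary. Your example is more elementary and self-contained---it requires no outside reference and no appeal to rank considerations---while the paper's example has the advantage of being a genuinely non-degenerate instance (non-empty $\bullet$), so it also illustrates that the phenomenon persists when the sum in~\eqref{expand} is actually present. Both are valid witnesses; yours is quicker, the paper's is structurally more informative.
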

A key feature of skew-symmetric matrices with non-commutative entries is that their quasi-Pfaffians need not vanish. This is exemplified by a 3-by-3 skew-symmetric matrix over the real quaternions, which, as shown in \cite[eq (4.9)]{lam06}, can be full-rank. Consequently, its last row is linearly independent of the first two, implying that its quasi-Pfaffian is non-zero.

Moreover, we have the following proposition, which states that a quasi-Pfaffian vanishes if the indices in its final block exhibit dependence with those in the preceding.
\begin{proposition}
If the index $i$ coincide with an index in $\bullet$, then for any $j$, we have
\begin{align}\label{zerocondition}
\Pf(\bullet,\boxed{i,j})=\Pf(\bullet,\boxed{j,i})=0.
\end{align}
\end{proposition}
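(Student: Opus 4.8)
The plan is to compute $\Pf(\bullet,\boxed{i,j})$ directly from the expansion formula \eqref{expand} and to exploit the defining property $AA^{-1}=A^{-1}A=I$ of the two-sided inverse. Writing $A=(a_{k,l})_{k,l\in\bullet}$ for the $2m\times 2m$ skew-symmetric matrix attached to the index set $\bullet$, formula \eqref{expand} gives
\begin{align*}
\Pf(\bullet,\boxed{i,j})=a_{i,j}-\sum_{k,l\in\bullet} a_{i,k}(A^{-1})_{k,l}a_{l,j}.
\end{align*}
The key observation is that the hypothesis $i\in\bullet$ forces the row vector $(a_{i,k})_{k\in\bullet}$ to be \emph{exactly} the $i$-th row of $A$, i.e. $a_{i,k}=A_{i,k}$ for every $k\in\bullet$.

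First I would use associativity and distributivity in $\mathcal{R}$ to rewrite the double sum as the iterated sum $\sum_{l}\bigl(\sum_{k} a_{i,k}(A^{-1})_{k,l}\bigr)a_{l,j}$, being careful to preserve the left-to-right order of the factors since $\mathcal{R}$ is non-commutative. The inner sum $\sum_{k}A_{i,k}(A^{-1})_{k,l}$ is by definition the $(i,l)$ entry of the matrix product $AA^{-1}$, which equals $\delta_{i,l}$ because $A^{-1}$ is a genuine right inverse of $A$. Substituting collapses the outer sum to $a_{i,j}$, giving $\Pf(\bullet,\boxed{i,j})=a_{i,j}-a_{i,j}=0$.

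For the companion vanishing $\Pf(\bullet,\boxed{j,i})=0$ I would invoke Proposition \ref{lem1}, which yields $\Pf(\bullet,\boxed{i,j})=-\Pf(\bullet,\boxed{j,i})^\top$; since the left-hand side is now zero and $\top$ is an additive anti-involution (so $0^\top=0$), this forces $\Pf(\bullet,\boxed{j,i})=0$. Alternatively, one can argue symmetrically and directly: here $i\in\bullet$ makes the column vector $(a_{l,i})_{l\in\bullet}$ the $i$-th column of $A$, and the analogous collapse using the left-inverse identity $A^{-1}A=I$ via $\sum_{l}(A^{-1})_{k,l}a_{l,i}=\delta_{k,i}$ gives the result at once.

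I do not anticipate a genuine obstacle: the statement is the non-commutative analogue of the classical fact that a Pfaffian with a repeated index vanishes, and the proof is correspondingly short. The only points needing care are bookkeeping ones intrinsic to the non-commutative setting, namely that the iterated sum must be factored without reordering elements of $\mathcal{R}$, and that one must invoke the correct one-sided inverse identity ($AA^{-1}=I$ in the row case, $A^{-1}A=I$ in the column case). Both identities hold because $A$ is assumed invertible with a two-sided inverse, so the collapse is legitimate in either order.
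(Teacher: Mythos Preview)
Your argument is correct and rests on the same mechanism as the paper's: when $i\in\bullet$ the appended row (resp.\ column) of the quasi-determinant duplicates a row (resp.\ column) of $A$, and the identity $AA^{-1}=I$ (resp.\ $A^{-1}A=I$) collapses the correction term to $a_{i,j}$. The only difference is presentational: the paper simply invokes the general quasi-determinant fact that $|M|_{pq}=0$ whenever a non-boxed row or column is repeated, whereas you unpack that fact directly via the expansion \eqref{expand}; your version is self-contained but not a genuinely different route.
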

This equation is based on the basic property of quasi-determinants, where two rows/columns of a quasi-determinant coincide with each other except the ones with the boxed element. Besides, based on \cite[Thm 1.4.6]{gelfand05}, it is known that if the last row/column of the matrix is a left/right linear combination of the rest rows/columns, then its corresponding quasi-Pfaffian is also zero.
%....................................................................................................

%\begin{theorem}\label{linear}
%For a linear system
%\begin{align}\label{linearsys}
%\sum_{j=1}^{2n}a_{ij}x_j=b_i,\quad i=1,2,\cdots,2n,
%\end{align}
%where $a_{ij}\in\mathcal{R}$ satisfying $a_{ij}=-a_{ji}^\top$, it admits unique solutions $x_i\in\mathcal{R}$ if and only if $(a_{ij})_{1\leq i,j\leq 2n}$ is invertible. Moreover, the solution could be expressed in terms of quasi-Pfaffian
%\begin{align}\label{ci}
%x_i=\Pf(1,\cdots,2n,\boxed{c_i,b}),
%\end{align}
%where $\Pf(i,j)=a_{ij}$, $\Pf(j,b)=b_j$ and $\Pf(c_i,j)=-\delta_{i,j}$.
%\end{theorem}
%\begin{remark}
%The commutative case of this linear system was studied by Jacobi in 1827 \cite{jacobi27}, as an analog of ``Cramer's rule'' for the solution of general systems of skew-symmetric linear equations \eqref{linearsys}. 
%The solution is then given by \cite[Equation 6.1]{knuth96}
%\begin{align*}
%x_j=\frac{\Pf(1,\cdots,\hat{j},\cdots,2n,b)}{\Pf(1,\cdots,2n)}.
%\end{align*}
%\end{remark}

\subsection{A  quasi-Pfaffian identity}\label{sub2.3}
This part is devoted to an identity of quasi-Pfaffians, which could be viewed as a non-commutative generalization of the bilinear identity \eqref{identity1}. 
\begin{theorem}\label{thm2.3}
Quasi-Pfaffians satisfy the following identity 
\begin{align}\label{identity}
\begin{aligned}
\Pf(\bullet&,a,b,\boxed{c,d})
=\left|
\begin{array}{ccc}
\Pf(\bullet,\boxed{a,a})&\Pf(\bullet,\boxed{a,b})&\Pf(\bullet,\boxed{a,d})\\[1em]
\Pf(\bullet,\boxed{b,a})&\Pf(\bullet,\boxed{b,b})&\Pf(\bullet,\boxed{b,d})\\[1em]
\Pf(\bullet,\boxed{c,a})&\Pf(\bullet,\boxed{c,b})&\boxed{\Pf(\bullet,\boxed{c,d})}
\end{array}
\right|\\
&=\Pf(\bullet,\boxed{c,d})-\left(\Pf(\bullet,\boxed{c,a})\,\,\Pf(\bullet,\boxed{c,b})\right)\left(
\begin{array}{cc}
\Pf(\bullet,\boxed{a,a})&\Pf(\bullet,\boxed{a,b})\\[1em]
\Pf(\bullet,\boxed{b,a})&\Pf(\bullet,\boxed{b,b})
\end{array}
\right)^{-1}\left(\begin{array}{c}
\Pf(\bullet,\boxed{a,d})\\[1em]\Pf(\bullet,\boxed{b,d})
\end{array}
\right).
\end{aligned}
\end{align}
\end{theorem}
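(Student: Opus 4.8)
The plan is to work directly from the defining expansion \eqref{expand} of the left-hand side and to carry out a single block inversion, identifying each resulting block with one of the smaller quasi-Pfaffians on the right. Note first that the second equality in \eqref{identity} is nothing but the quasi-determinant expansion formula \eqref{exform1} applied to the displayed $3\times 3$ array about its boxed bottom-right entry; hence it suffices to establish the first equality. Write $B$ for the skew-symmetric matrix indexed by $\bullet$ and $A$ for the skew-symmetric matrix indexed by $\{\bullet,a,b\}$, so that $A=\left(\begin{smallmatrix}B&P\\-P^\top&S\end{smallmatrix}\right)$, where $P$ is the $2m\times 2$ matrix with columns $(a_{i,a})_{i\in\bullet}$ and $(a_{i,b})_{i\in\bullet}$ and $S=\left(\begin{smallmatrix}a_{a,a}&a_{a,b}\\a_{b,a}&a_{b,b}\end{smallmatrix}\right)$. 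By \eqref{expand}, $\Pf(\bullet,a,b,\boxed{c,d})=a_{c,d}-(r_c\,|\,s_c)\,A^{-1}\binom{\gamma_d}{t_d}$, with $r_c=(a_{c,i})_{i\in\bullet}$, $s_c=(a_{c,a},a_{c,b})$, $\gamma_d=(a_{i,d})_{i\in\bullet}$ and $t_d=\binom{a_{a,d}}{a_{b,d}}$.

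The key step is to invert $A$ by the same Schur-complement block-inversion identity recorded above \eqref{inverse1}, now taking $B$ as the invertible diagonal block and $M:=S+P^\top B^{-1}P$ as the $2\times 2$ Schur complement. Using skew-symmetry under the anti-involution in the form $a_{i,a}^\top=-a_{a,i}$, one finds $P^\top=-\left(\begin{smallmatrix}\rho_a\\\rho_b\end{smallmatrix}\right)$ with $\rho_a=(a_{a,i})_{i\in\bullet}$, and therefore the four entries of $M$ are precisely $M=\left(\begin{smallmatrix}\Pf(\bullet,\boxed{a,a})&\Pf(\bullet,\boxed{a,b})\\\Pf(\bullet,\boxed{b,a})&\Pf(\bullet,\boxed{b,b})\end{smallmatrix}\right)$, i.e. the very matrix inverted on the right-hand side. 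Substituting the block form of $A^{-1}$ into the expansion and grouping the four resulting terms, everything factors through $M^{-1}$: the scalar part $a_{c,d}-r_cB^{-1}\gamma_d$ equals $\Pf(\bullet,\boxed{c,d})$; the left factor collapses to $r_cB^{-1}P-s_c=-(\Pf(\bullet,\boxed{c,a}),\Pf(\bullet,\boxed{c,b}))$; and the right factor collapses to $t_d+P^\top B^{-1}\gamma_d=\binom{\Pf(\bullet,\boxed{a,d})}{\Pf(\bullet,\boxed{b,d})}$. Each of these identifications is again an instance of \eqref{expand}, but over the smaller body $B$ (for example $a_{c,a}-r_cB^{-1}(a_{i,a})_i=\Pf(\bullet,\boxed{c,a})$). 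Assembling the pieces yields exactly the claimed expression.

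The main obstacle is bookkeeping rather than conceptual: because $\mathcal{R}$ is non-commutative, every product must be kept in its exact left-to-right order, and the signs generated by the anti-involution—both the $-P^\top$ block and the identity $a_{ij}^\top=-a_{ji}$—must be threaded consistently through the Schur complement so that the off-diagonal correction vectors reassemble into the correct quasi-Pfaffians $\Pf(\bullet,\boxed{c,a})$, $\Pf(\bullet,\boxed{a,d})$, and so on. One must also confirm that each inverse used is legitimate: $B$ is invertible because it is an even-order principal minor, and $M$ is invertible because it is the Schur complement of the invertible matrix $A$ relative to the invertible block $B$; these are exactly the even-order invertibility hypotheses imposed in the definition of the quasi-Pfaffian.

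Conceptually, this computation is the ``heredity'' mechanism at work: contracting the $\bullet$-block first converts the large quasi-Pfaffian into a $3\times 3$ quasi-determinant whose entries are the smaller quasi-Pfaffians, which is precisely the non-commutative minor-expansion phenomenon for quasi-determinants developed in \cite{krob95}. Accordingly, an alternative and essentially equivalent route is to invoke that minor-expansion formula directly, with the intermediate quasi-minors recognized as the quasi-Pfaffians $\Pf(\bullet,\boxed{\ast,\ast})$; the explicit Schur-complement calculation above is simply the self-contained realization of that principle.
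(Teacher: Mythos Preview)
Your proof is correct. The paper's own argument proceeds by quoting the quasi-determinant Sylvester identity from \cite{gelfand05} and applying it twice (first to peel two rows and columns off the big array, then to each of the four resulting inner blocks), whereas you carry out a single block Schur-complement inversion of the full $\{\bullet,a,b\}$ matrix and identify the resulting $2\times 2$ Schur complement entrywise with the small quasi-Pfaffians. Since Sylvester's identity for quasi-determinants is precisely the Schur-complement relation, the two arguments are mathematically equivalent; your version is more self-contained (no external citation needed) and keeps the skew-symmetry and anti-involution bookkeeping explicit, while the paper's version has the advantage of packaging the computation as a clean instance of a known quasi-determinant theorem---which is the ``alternative route'' you yourself mention at the end.
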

The proof of this theorem is based on a double application of  
Sylvester's theorem \cite[Section 2.3.4]{krob95}.
%lemma
\begin{lemma}\label{lemmaa}
For an invertible square matrix $A$, rows $r_i$ and columns $c_i$ for $i=1,2,3$, and elements $d_{ij}$ for $i,j=1 \cdots ,3$, we have the following quasi-determinant  identity
\begin{align}\label{qd id3x3}
\left|\begin{array}{cccc}
A&c_1&c_2&c_3\\
r_1&d_{11}&d_{12}&d_{13}\\
r_2&d_{21}&d_{22}&d_{23}\\
r_3&d_{31}&d_{32}&\boxed{d_{33}}
\end{array}
\right|=\left|\begin{array}{ccc}
\left|\begin{array}{cc}
A&c_1\\r_1&\boxed{d_{11}}
\end{array}
\right|&\left|\begin{array}{cc}
A&c_2\\r_1&\boxed{d_{12}}
\end{array}
\right|&\left|\begin{array}{cc}
A&c_3\\
r_1&\boxed{d_{13}}
\end{array}
\right|\\[1em]
\left|\begin{array}{cc}
A&c_2\\r_1&\boxed{d_{21}}
\end{array}
\right|&\left|\begin{array}{cc}
A&c_2\\r_2&\boxed{d_{22}}
\end{array}
\right|&\left|\begin{array}{cc}
A&c_3\\r_2&\boxed{d_{23}}
\end{array}
\right|\\[1em]
\left|\begin{array}{cc}
A&c_1\\r_3&\boxed{d_{31}}
\end{array}
\right|&
\left|\begin{array}{cc}
A&c_2\\r_3&\boxed{d_{32}}
\end{array}
\right|&\boxed{\left|\begin{array}{cc}
A&c_3\\r_3&\boxed{d_{33}}
\end{array}
\right|}
\end{array}
\right|.
\end{align}
\end{lemma}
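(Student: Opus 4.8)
The plan is to derive \eqref{qd id3x3} from the $2\times2$ Sylvester identity recalled in Section~\ref{sec2} by applying it twice in succession: once to split off the single bordering pair $(r_1,c_1,d_{11})$, and once inside each of the resulting entries. Throughout, abbreviate the bordered quasi-determinants appearing on the right-hand side of \eqref{qd id3x3} by
\[
b_{ij}:=\left|\begin{array}{cc} A & c_j \\ r_i & \boxed{d_{ij}} \end{array}\right|,\qquad i,j=1,2,3,
\]
so that the right-hand side of \eqref{qd id3x3} is exactly the $3\times3$ quasi-determinant $\big|(b_{ij})_{i,j=1}^{3}\big|$ expanded about $b_{33}$.

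First I would enlarge the core. Treating the invertible block $\big(\begin{smallmatrix} A & c_1\\ r_1 & d_{11}\end{smallmatrix}\big)$ as a single core, the remaining data consist of the two bordering rows $(r_2,d_{21})$, $(r_3,d_{31})$, the two bordering columns $(c_2;d_{12})$, $(c_3;d_{13})$, and the corner entries $d_{22},d_{23},d_{32},d_{33}$. Applying the $2\times2$ Sylvester identity with this enlarged core rewrites the left-hand side of \eqref{qd id3x3} as
\[
\left|\begin{array}{cc} \hat b_{22} & \hat b_{23} \\[0.5em] \hat b_{32} & \boxed{\hat b_{33}} \end{array}\right|,
\qquad
\hat b_{ij}:=\left|\begin{array}{ccc} A & c_1 & c_j \\ r_1 & d_{11} & d_{1j} \\ r_i & d_{i1} & \boxed{d_{ij}} \end{array}\right|\quad(i,j\in\{2,3\}).
\]

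Next I would apply the $2\times2$ Sylvester identity a second time, now to each $\hat b_{ij}$ individually, taking $A$ as the core and $(r_1,c_1,d_{11})$, $(r_i,c_j,d_{ij})$ as its two bordering pairs. This expresses $\hat b_{ij}$ as a $2\times2$ quasi-determinant in the entries $b_{11},b_{1j},b_{i1},b_{ij}$, and substituting these expansions turns the left-hand side of \eqref{qd id3x3} into
\[
\left|\begin{array}{cc}
\left|\begin{array}{cc} b_{11} & b_{12} \\ b_{21} & \boxed{b_{22}} \end{array}\right|
& \left|\begin{array}{cc} b_{11} & b_{13} \\ b_{21} & \boxed{b_{23}} \end{array}\right| \\[1em]
\left|\begin{array}{cc} b_{11} & b_{12} \\ b_{31} & \boxed{b_{32}} \end{array}\right|
& \boxed{\left|\begin{array}{cc} b_{11} & b_{13} \\ b_{31} & \boxed{b_{33}} \end{array}\right|}
\end{array}\right|.
\]
But this is precisely the $2\times2$ Sylvester expansion of $\big|(b_{ij})_{i,j=1}^{3}\big|$ about $b_{33}$, with the single entry $b_{11}$ serving as the core. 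Reading the $2\times2$ Sylvester identity in the reverse direction therefore collapses the display to $\big|(b_{ij})_{i,j=1}^{3}\big|$, which is the right-hand side of \eqref{qd id3x3}, completing the argument.

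The one genuine point requiring care --- and the step I expect to be the main obstacle --- is the well-definedness of every intermediate quasi-determinant, since over the division ring $\mathcal{R}$ each core that gets inverted must itself be invertible. Concretely, I must check that the enlarged core $\big(\begin{smallmatrix} A & c_1\\ r_1 & d_{11}\end{smallmatrix}\big)$, the entry $b_{11}$, and the $2\times2$ cores arising in the second round of expansions are all invertible. These facts follow from the standing hypothesis that $A$ and the relevant principal minors are invertible, together with the heredity of invertibility under the bordering operation $X\mapsto\big|\begin{smallmatrix} X & c\\ r & \boxed{d}\end{smallmatrix}\big|$. Once this bookkeeping is settled, the two applications of the $2\times2$ Sylvester identity are purely formal and the identity \eqref{qd id3x3} follows.
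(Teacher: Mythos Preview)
Your proof is correct and follows essentially the same strategy as the paper: both arguments apply the $2\times2$ Sylvester identity twice, first to reduce the left-hand side to a $2\times2$ quasi-determinant of bordered $3\times3$ quasi-determinants, and then to each of those entries, after which the result is recognized as the Sylvester expansion of the right-hand side. The only cosmetic difference is the choice of enlarged core in the first step (you fix $(r_1,c_1)$, while the paper's intermediate display pivots differently), but the logic is identical.
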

\begin{proof}
Using Sylvester's identity, expanding out the left hand side of \eqref{qd id3x3} we get
 \begin{align}\label{qd id2x2}
\left|
   \begin{array}{cccc}
      A&c_1&c_2&c_3\\
      r_1&d_{11}&d_{12}&d_{13}\\
      r_2&d_{21}&d_{22}&d_{23}\\
      r_3&d_{31}&d_{32}&\boxed{d_{33}}
   \end{array}
\right|
=
\left|
\begin{array}{cc}
             \left|
                \begin{array}{ccc}
                     A&c_1&c_2\\r_1&d_{11}&d_{12}\\r_2&\boxed{d_{21}}& d_{22}
                \end{array}
            \right|
& 
         \left|
                \begin{array}{ccc}
                      A&c_2&c_3\\r_1&d_{1  2}& d_{13}\\r_2&d_{22}&\boxed{d_{23}}
                \end{array}
            \right|
\\[2em]
 \left|
                \begin{array}{ccc}
                      A&c_1&c_2\\r_1&d_{11}&d_{12}\\r_3&\boxed{d_{31}}&d_{32}
                \end{array}
            \right|
& 
       \boxed{  \left|
                \begin{array}{ccc}
                       A&c_2&c_3\\r_1&d_{12}&d_{13}\\r_3&d_{32}&\boxed{d_{33}}
                \end{array}
            \right|}
\end{array}
\right|.
\end{align}
Now  using Sylvester's identity again on the four quasideterminants on the righthand side of \eqref{qd id2x2} we will get the same as expression as applying Sylvester's identity to the right hand side of equation \eqref{qd id3x3}.
\end{proof}

To tie Lemma \ref{lemmaa} up with Theorem \ref{thm2.3} we take $\bullet$ as the index set related to $A$, which is a skew-symmetric matrix of even order, i.e. $A=- A^\top$  and  the columns $c_i$,  rows $r_i$ and single entries $d_{ij}$  are related  by $ r_i=- c_i^\top, d_{ij}=-d_{ji}^\top$ for $i,j=1,2$.  We relabel $c_3$ as $c_4$ , and  $d_{i3}$  as  $d_{i 4}$ for $i=1,2,3$.
Then this now gives us the quasi-Pfaffian identity easily. For a skew-symmetric matrix $A = (a_{i,j})_{1 \leq i,j \leq 2n}$, one can compute the quasi-Pfaffian recursively using the formula \eqref{identity}. This reflects the ``heredity principle'' for quasi-Pfaffians---a quasi-Pfaffian could be viewed as a quasi-Pfaffian of quasi-Pfaffians in lower orders. In fact, Theorem \ref{thm2.3} indicates a condensation algorithm for quasi-Pfaffians.

At the end of this section, we remark that in the commutative case by using \eqref{nc-c}, the above equation \eqref{identity} can be reduced to the commutative bilinear identity
\begin{align*}
\Pf(\bullet,a,b,c,d)\Pf(\bullet)=\Pf(\bullet,a,b)\Pf(\bullet,c,d)-\Pf(\bullet,a,c)\Pf(\bullet,b,d)+\Pf(\bullet,a,d)\Pf(\bullet,b,c).
\end{align*}
%Finally, we show that the label $c_i$ in \eqref{ci} is not necessarily introduced in the expression of  quasi-Pfaffian solution for linear system \eqref{linearsys}. 
%\begin{coro}
%The solution of the linear system \eqref{linearsys} could be written as
%\begin{align*}
%x_i=Q^{-1}\Pf(\cdot,\boxed{i,b})+Q^{-1}\Pf(\cdot,\boxed{i,i})\Pf(\cdot,\boxed{i,j})^{-\top}\Pf(\cdot,\boxed{j,b}),
%\end{align*}
%where $\cdot=\{1,\cdots,2n\}\backslash\{i,j\}$, and $Q=\Pf(\cdot,\boxed{i,j})+\Pf(\cdot,\boxed{i,i})\Pf(\cdot,\boxed{i,j})^{-\top}\Pf(\cdot,\boxed{j,j})$.
%\end{coro}
%\begin{proof}
%This corollary could be directly obtained by using quasi-Pfaffian identity \eqref{identity}, taking the corresponding labels as $i,\,j,\,c_i$ and $b$.
%\end{proof}

%derivatives of qpfaffians
\section{Derivative formulae for quasi-Pfaffians}\label{sec3}

In this section, we demonstrate the derivative formulae for quasi-Pfaffians. Derivative formulae for standard Pfaffians are widely used in soliton theory \cite{hirota04}.  In the literature Pfaffian solutions to integrable systems are classified into two distinct categories: Grammian type Pfaffians and Wronskian type Pfaffians.

A Grammian type Pfaffian admits elements in the form  \cite[Equation (3.99)]{hirota04}
\begin{align*}
\Pf(i,j)=c_{ij}+\int^t (f_ig_j-f_jg_i)dx,\quad c_{ij}=-c_{ji}.
\end{align*}
Derivatives for the Grammian type elements can be split into two parts. If we simply denote $\Pf(c,i)=f_i$ and $\Pf(d,i)=g_i$, then we have the derivative formula
\begin{align}\label{gram}
\frac{d}{dt} \Pf(i,j)=\Pf(c,i)\Pf(d,j)-\Pf(c,j)\Pf(d,i)=\Pf(d,c,i,j),\quad \Pf(d,c)=0.
\end{align}
Formally, we call the Pfaffians whose elements satisfy \eqref{gram} as Grammian type Pfaffians. These Pfaffians arise in the description of solutions to the BKP hierarchy \cite{hirota89,tsujimoto96}. 

A Wronskian type Pfaffian is a Pfaffian whose entries $\Pf(i,j)$ satisfy differential rules with respect to the time flows $t_1,t_2,\cdots$ such that
\begin{align}\label{wronskian}
\frac{\partial}{\partial t_n}\Pf(i,j)=\Pf(i+n,j)+\Pf(i,j+n). 
\end{align}
It has been demonstrated \cite[Equation 3.84]{hirota04} that the derivative of such a Pfaffian mirrors the derivative of a Wronskian determinant. Consequently, we define any Pfaffian obeying \eqref{wronskian} to be of Wronskian type. This kind of Pfaffian is used to express the solutions of DKP and Pfaff lattice hierarchies \cite{hirota91,adler99,kodama10}.

In this section, we assume elements $a_{i,j}$ in quasi-Pfaffians have both Grammian and Wronskian properties, in the form\footnote{Here we use the Hirota's bilinear operator $\mathcal{D}$ defined by $\mathcal{D}_t f\cdot g=\left.\frac{\partial}{\partial t}f(t+y)g(t-y)\right|_{y=0}$.}
\begin{align*}
\Pf(i,j):=a_{ij}=\int^t \mathcal{D}_t\phi_i^\top(t)\cdot \phi_j(t) dt=\int^t \phi_{i+1}^\top(t)\phi_j(t)-\phi_i^\top(t)\phi_{j+1}(t)dt,
\end{align*}
where $\phi_i: \mathbb{R}\to\mathcal{R}$ satisfying $\frac{d}{dt}\phi_i=\phi_{i+1}$. We may take this family of functions to be the moment functions, where
\begin{align*}
\phi_i(t):=\int_{\mathbb{R}}x^i W(x)e^{xt}d\mu(x),
\end{align*} 
and $W(x)$ is a $\mathcal{R}$-valued function equipped with the anti-involution $\top$. Indeed, we need to assume that $W(x)$ is symmetric under the involution, namely $W^\top(x)=W(x)$. 
The convergence of the integral is ensured by the choice of measure and an appropriate weight function  $W(x)$.
One can easily check that $a_{ij}=-a_{ji}^\top$, and that $a_{ij}$ satisfy both Wronskian and Grammian properties
\begin{align*}
\frac{d}{dt} \Pf(i,j)=\Pf(i+1,j)+\Pf(i,j+1),\quad \frac{d}{dt}\Pf(i,j)=\Pf(i,d_1)\Pf(j,d_0)^\top-\Pf(i,d_0)\Pf(j,d_1)^\top,
\end{align*}
where we introduce the notation $\Pf(i,d_j)=\phi_{i+j}^\top$ to represent moment functions. Besides, we can define $$\Pf(d_j,i)=-\Pf(i,d_j)^\top=-\phi_{i+j}.$$
Therefore, if we introduce the notation
\begin{align*}
\Pf(0,\cdots,2n-1,\boxed{2n,d_j})=\left|\begin{array}{cccc}
a_{0,0}&\cdots&a_{0,2n-1}&\phi_j^\top\\
\vdots&&\vdots&\vdots\\
a_{2n-1,0}&\cdots&a_{2n-1,2n-1}&\phi_{2n-1+j}^\top\\
a_{2n,0}&\cdots&a_{2n,2n-1}&\boxed{\phi_{2n+j}^\top}
\end{array}
\right|,
\end{align*}
then similar to Lemma \ref{lem1}, we have the following corollary.
\begin{coro}
Let $\bullet$ be a label set in $\mathbb{N}$ with even numbers. For any $i\in\mathbb{N}$, we have
\begin{align}\label{rel1}
\Pf(\bullet,\boxed{i,d_j})^\top=-\Pf(\bullet,\boxed{d_j,i}).
\end{align}
\end{coro}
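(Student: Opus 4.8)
\emph{Proof proposal.} The plan is to mimic the proof of Proposition~\ref{lem1}, replacing the ordinary last column of matrix entries by the column of moment functions and carefully tracking the extra transposes this introduces. First I would write down the quasi-determinant expansion \eqref{expand} for the object at hand. Denoting by $A=(a_{k,l})_{k,l\in\bullet}$ the skew-symmetric matrix attached to the even index set $\bullet$, and recalling the shorthand $\Pf(k,d_j)=\phi_{k+j}^\top$, the definition yields
\begin{align*}
\Pf(\bullet,\boxed{i,d_j})=\Pf(i,d_j)-\sum_{k,l}a_{i,k}(A^{-1})_{k,l}\Pf(l,d_j)=\phi_{i+j}^\top-\sum_{k,l}a_{i,k}(A^{-1})_{k,l}\phi_{l+j}^\top.
\end{align*}

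Next I would apply the anti-involution $\top$ to both sides. Since $\top$ reverses products and is an involution on $\mathcal{R}$, each summand transposes to $\phi_{l+j}(A^{-1})_{k,l}^\top a_{i,k}^\top$, while the leading term becomes $\phi_{i+j}$. I then invoke the two structural facts already exploited in Proposition~\ref{lem1}: the skew-symmetry $a_{i,k}^\top=-a_{k,i}$, and the matrix-level identity $(A^{-1})_{k,l}^\top=-(A^{-1})_{l,k}$, which follows from $A^\top=-A$ and hence $(A^{-1})^\top=(A^\top)^{-1}=-A^{-1}$. Substituting these, the two minus signs cancel, and after relabelling the dummy indices $k\leftrightarrow l$ I obtain
\begin{align*}
\Pf(\bullet,\boxed{i,d_j})^\top=\phi_{i+j}-\sum_{k,l}\phi_{k+j}(A^{-1})_{k,l}a_{l,i}.
\end{align*}

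Finally I would identify the right-hand side as $-\Pf(\bullet,\boxed{d_j,i})$. Expanding the latter by \eqref{expand} with penultimate index $d_j$ and last index $i$, and using the definition $\Pf(d_j,k)=-\Pf(k,d_j)^\top=-\phi_{k+j}$ together with $\Pf(d_j,i)=-\phi_{i+j}$, gives
\begin{align*}
\Pf(\bullet,\boxed{d_j,i})=-\phi_{i+j}+\sum_{k,l}\phi_{k+j}(A^{-1})_{k,l}a_{l,i},
\end{align*}
which is exactly the negative of the previous display, establishing \eqref{rel1}.

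The only point requiring care—and the place where this differs from Proposition~\ref{lem1}—is the bookkeeping of the transpose on the moment functions: the last column now carries entries $\phi^\top$ rather than plain skew-symmetric entries, so one must track that $(\phi_{l+j}^\top)^\top=\phi_{l+j}$ and that the definitional relation $\Pf(d_j,k)=-\Pf(k,d_j)^\top$ supplies precisely the sign needed to match the two expansions. I do not expect any genuine obstacle beyond this sign-tracking, since the interior factors $(A^{-1})_{k,l}$ and $a_{l,i}$ are handled identically to the earlier argument.
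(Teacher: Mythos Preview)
Your proposal is correct and is precisely the argument the paper intends: the corollary is stated with the remark ``similar to Lemma~\ref{lem1}'' and no separate proof, and your expansion-then-transpose computation is the natural adaptation of that proof to the case where the final column carries the moment functions $\phi^\top$ rather than ordinary skew-symmetric entries. The sign-tracking you flag (via $\Pf(d_j,k)=-\phi_{k+j}$) is exactly the only new ingredient.
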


%derivatives for Wronskian type quasi-Pfaffs
\subsection{Derivative formulae for Wronskian type quasi-Pfaffians}
We start with the derivative formulae for Wronskian-type quasi-Pfaffians. Provided the elements satisfy
\begin{align}\label{wronskian}
\begin{aligned}
&\frac{d}{dt}\Pf(i,j)=\int^t \mathcal{D}_t \phi_{i+1}^\top\cdot\phi_jdt+\int^t \mathcal{D}_t\phi_i^\top\cdot\phi_{j+1}dt=\Pf(i+1,j)+\Pf(i,j+1),\\& \frac{d}{dt}\Pf(i,d_j)=\phi_{i+j+1}^\top=\Pf(i+1,d_j),
\end{aligned}
\end{align}
the derivative yields an expression in terms of quasi-Pfaffians themselves. 
\begin{theorem}\label{thm3.2}
For Wronskian-type Pfaffian satisfying \eqref{wronskian}, one has
\begin{align}\label{wronskian1}
\begin{aligned}
\frac{d}{dt}\Pf(\bullet,\boxed{i,j})&=\Pf(\bullet,\boxed{i+1,j})+\Pf(\bullet,\boxed{i,j+1})\\
&+\Pf(\bullet,\boxed{i,c_{2n-1}})\Pf(\bullet,\boxed{2n,j})-\Pf(\bullet,\boxed{i,2n})\Pf(\bullet,\boxed{c_{2n-1},j}),
\end{aligned}
\end{align}
where we denote $\bullet=\{0,1,\cdots,2n-1\}$ for short, and the label $c_i$ is given by $
\Pf(i,c_j)=\delta_{i,j},
$
and
\begin{align*}
\Pf(0,\cdots,2n-1,\boxed{i,c_{2n-1}})=\left|\begin{array}{cccc}
a_{0,0}&\cdots&a_{0,2n-1}&0\\
\vdots&&\vdots&\vdots\\
a_{2n-1,0}&\cdots&a_{2n-1,2n-1}&1\\
a_{i,0}&\cdots&a_{i,2n-1}&\boxed{\delta_{i,2n-1}}\end{array}
\right|.
\end{align*}
\end{theorem}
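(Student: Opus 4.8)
The plan is to compute the derivative directly from the quasi-determinant expansion \eqref{expand}. Writing $A=(a_{kl})_{k,l=0}^{2n-1}$ for the skew-symmetric body, $R_i=(a_{i,0},\dots,a_{i,2n-1})$ for the bottom row and $C_j=(a_{0,j},\dots,a_{2n-1,j})^\top$ for the last column, the definition reads $\Pf(\bullet,\boxed{i,j})=a_{ij}-R_iA^{-1}C_j$. Differentiating by the product rule and using $\frac{d}{dt}A^{-1}=-A^{-1}\big(\frac{d}{dt}A\big)A^{-1}$, which is valid verbatim over the non-commutative ring $\mathcal R$ provided the factor order is preserved, I obtain
$$\frac{d}{dt}\Pf(\bullet,\boxed{i,j})=\dot a_{ij}-\dot R_i\,A^{-1}C_j+R_iA^{-1}\dot A\,A^{-1}C_j-R_iA^{-1}\dot C_j.$$

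First I would rewrite every time-derivative using the Wronskian rule \eqref{wronskian}, in the form $\dot a_{kl}=a_{k+1,l}+a_{k,l+1}$. This splits each derivative into a pure index shift by $+1$ and a column/row shift by $+1$: $\dot a_{ij}=a_{i+1,j}+a_{i,j+1}$, $\dot R_i=R_{i+1}+R_i^{\rightarrow}$, $\dot C_j=C_{j+1}+C_j^{\downarrow}$, and $\dot A=A^{\downarrow}+A^{\rightarrow}$, where the arrows denote the one-step shift of the column (resp. row) index. The pieces carrying the shifts $i\mapsto i+1$ and $j\mapsto j+1$ immediately recombine, again via \eqref{expand}, into $\Pf(\bullet,\boxed{i+1,j})+\Pf(\bullet,\boxed{i,j+1})$; the remaining work is to show that the column/row-shift pieces collapse to the two product terms.

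The key step is to separate each shift into an interior part plus a boundary contribution. Let $N$ be the $2n\times2n$ subdiagonal shift with $N_{l+1,l}=1$; then $A^{\rightarrow}=AN+C_{2n}e_{2n-1}^\top$, $A^{\downarrow}=N^\top A+e_{2n-1}R_{2n}$, $R_i^{\rightarrow}=R_iN+a_{i,2n}e_{2n-1}^\top$ and $C_j^{\downarrow}=N^\top C_j+a_{2n,j}e_{2n-1}$, where $R_{2n}$ and $C_{2n}$ are the border row and column at the overflow index $2n$. Substituting and using the two cancellation identities $R_iA^{-1}\cdot A=R_i$ and $A\cdot A^{-1}C_j=C_j$, the four interior pieces (those proportional to $N$ and $N^\top$) cancel in pairs. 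What survives is governed entirely by the last entries of the covectors $R_iA^{-1}$ and $A^{-1}C_j$, paired against the border row/column $R_{2n},C_{2n}$, and leaves $-\Pf(\bullet,\boxed{i,2n})\,(A^{-1}C_j)_{2n-1}-(R_iA^{-1})_{2n-1}\,\Pf(\bullet,\boxed{2n,j})$.

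Finally I would identify these surviving quantities with quasi-Pfaffians. Directly from the boxed expansions of the $c_{2n-1}$-Pfaffians one reads off $(A^{-1}C_j)_{2n-1}=\Pf(\bullet,\boxed{c_{2n-1},j})$ and $(R_iA^{-1})_{2n-1}=-\Pf(\bullet,\boxed{i,c_{2n-1}})$ (the Kronecker terms $\delta_{i,2n-1},\delta_{2n-1,j}$ drop out, since the nontrivial case has $i,j\notin\bullet$, the quasi-Pfaffian vanishing otherwise by \eqref{zerocondition}), while $a_{i,2n}-R_iA^{-1}C_{2n}=\Pf(\bullet,\boxed{i,2n})$ and $a_{2n,j}-R_{2n}A^{-1}C_j=\Pf(\bullet,\boxed{2n,j})$. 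Assembling the boundary terms then yields exactly $\Pf(\bullet,\boxed{i,c_{2n-1}})\Pf(\bullet,\boxed{2n,j})-\Pf(\bullet,\boxed{i,2n})\Pf(\bullet,\boxed{c_{2n-1},j})$, giving \eqref{wronskian1}. The main obstacle is the boundary bookkeeping: tracking which shifts overflow out of the block to $2n$, and maintaining the correct left/right multiplication order of the $\mathcal R$-valued factors throughout the telescoping, so that the scalars $(R_iA^{-1})_{2n-1}$ and $(A^{-1}C_j)_{2n-1}$ land on the correct sides of the two products.
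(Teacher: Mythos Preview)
Your proof is correct and follows essentially the same route as the paper: both differentiate the quasi-determinant expansion $a_{ij}-R_iA^{-1}C_j$ directly, peel off the index-shift pieces that reassemble into $\Pf(\bullet,\boxed{i+1,j})+\Pf(\bullet,\boxed{i,j+1})$, and then identify the surviving overflow-at-$2n$ boundary terms with the two product quasi-Pfaffians. The only cosmetic difference is bookkeeping---you package the interior cancellations via the shift matrix $N$, while the paper expands the $\dot A$ contribution as sums $\sum_{k=1}^{2n}$ of rank-one quasi-determinant products and invokes the vanishing property \eqref{zerocondition} to kill the $k\le 2n-1$ summands.
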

\begin{proof}
Firstly, we have the equation 
\begin{align*}
&\frac{d}{dt}\Pf(\bullet,\boxed{i,j})=
\frac{d}{dt}
\left(a_{i,j}-     a_i^{0\to 2n-1} {A^{-1}}\; a^j_{0\to 2n-1}\right)\\
&\quad=
a_{i+1,j}+a_{i,j+1}-a_i^{1\to 2n}A^{-1}a^j_{0\to 2n-1}-a_i^{0\to 2n-1}A^{-1}a^j_{1\to 2n}\\
&\quad\quad \,\,
-a_{i+1}^{0\to 2n-1}A^{-1}a^j_{0\to 2n-1}-a_i^{0\to 2n-1}A^{-1}a^{j+1}_{0\to 2n-1} + a_i^{0\to 2n-1}A^{-1}\left(\frac{d}{dt} A\right)A^{-1}a^j_{0\to 2n-1},
\end{align*}
where $a_i^{k\to l}$ (resp. $a_{k\to l}^i$) means a row (resp. column) vector from $a_{i,k}$ to $a_{i,l}$ (resp. $a_{k,i}$ to $a_{l,i}$), and $A=(a_{ij})_{i,j=0}^{2n-1}$.
Moreover, the above equation could be simplified as
\begin{align*}
\frac{d}{dt}\Pf(\bullet,\boxed{i,j})=
&\Pf(\bullet,\boxed{i+1,j})+\Pf(\bullet,\boxed{i,j+1})+\left|
\begin{array}{cc}
A&a_{0\to 2n-1}^{j}\\
a_{i}^{1\to 2n}&\boxed{0}
\end{array}
\right|+\left|
\begin{array}{cc}
A&a_{1\to 2n}^{j}\\
a_{i}^{0\to 2n-1}&\boxed{0}
\end{array}
\right|
\\
&\qquad\qquad+\sum_{k=1}^{2n}\left|
\begin{array}{cc}
A&e_k\\
a_{i}^{0\to 2n-1}&\boxed{0}
\end{array}
\right|\cdot\left|
\begin{array}{cc}
A&a_{0\to 2n-1}^{j}\\
a_{k}^{0\to 2n-1}&\boxed{0}
\end{array}
\right|
\\
&\qquad\qquad\qquad\qquad+\sum_{k=1}^{2n}\left|
\begin{array}{cc}
A&a_{0\to 2n-1}^{k}\\
a_{i}^{0\to 2n-1}&\boxed{0}
\end{array}
\right|\cdot\left|
\begin{array}{cc}
A&a_{0\to 2n-1}^{j}\\
e_k^\top&\boxed{0}
\end{array}
\right|,
\end{align*}
where $e_k$ is a unit column vector whose $k$-th element is $1$ (unity in $\mathcal{R}$), and others are zeros. $e_k^\top$ is the corresponding row vector. 
Using the property \eqref{zerocondition}, we have
\begin{align*}
\left|
\begin{array}{cc}
A&a_{0\to 2n-1}^{j}\\
a_{k}^{0\to 2n-1}&\boxed{0}
\end{array}
\right|=-a_{k,j},\quad 0\leq k\leq 2n-1,
\end{align*}
which results in the formula
\begin{align*}
&\sum_{k=1}^{2n}\left|\begin{array}{cc}
A&e_k\\a_i^{0\to2n-1}&\boxed{0}
\end{array}
\right|\cdot\left|\begin{array}{cc}
A&a_{0\to 2n-1}^j\\
a_k^{0\to 2n-1}&\boxed{0}
\end{array}
\right|+
\left|
\begin{array}{cc}
A&a_{1\to 2n}^j\\a_i^{0\to 2n-1}&\boxed{0}
\end{array}
\right|\\
&\quad=\left|
\begin{array}{cc}
A&e_{2n}\\
a_i^{0\to 2n-1}&\boxed{0}
\end{array}
\right|\cdot\left|
\begin{array}{cc}
A&a^j_{0\to 2n-1}\\
a_{2n}^{0\to 2n-1}&\boxed{a_{2n,j}}
\end{array}
\right|=\Pf(\bullet,\boxed{i,c_{2n-1}})\Pf(\bullet,\boxed{2n,j}).
\end{align*}
The desired result follows from the combination of above formulae.
\end{proof}

In the commutative case, this formula is equivalent to the equation 
\begin{align*}
\frac{d}{dt}\Pf(0,\cdots,2n-2,2n-1)=\Pf(0,\cdots,2n-2,2n),
\end{align*}
which is widely used in the construction of Wronskian-type Pfaffian solution for DKP hierarchy \cite[Sec. 3.4]{hirota04}, and the spectral transformation for skew-orthogonal polynomials \cite[Lemma 3.5]{adler99}. 
Similarly, consideration of the Pfaffian $\tau$-functions $\Pf(0,\cdots,2n-1,2n,d_0)$ yields the derivative formula \cite{chang18,li24}
\begin{align*}
\frac{d}{dt}\Pf(0,\cdots,2n-1,2n,d_0)=\Pf(0,\cdots,2n-1,2n+1,d_0).
\end{align*}
This formula is used to construct the B\"acklund transformation for DKP hierarchy. 
We now present the corresponding non-commutative analog.
\begin{theorem}
For arbitrary $i\in\mathbb{N}$, one has
\begin{align}\label{wronskian2}
\frac{d}{dt}\Pf(\bullet,\boxed{i,d_j})&=\Pf(\bullet,\boxed{i+1,d_j})-\Pf(\bullet,\boxed{i,2n})\Pf(\bullet,\boxed{c_{2n-1},d_j})+\Pf(\bullet,\boxed{i,c_{2n-1}})\Pf(\bullet,\boxed{2n,d_j}),
\end{align}
where $\bullet=\{0,\cdots,2n-1\}$ and $\Pf(d_i,c_j)=\Pf(c_j,d_i)=0$ for all $i,j\in\mathbb{N}$.
\end{theorem}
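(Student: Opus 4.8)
The plan is to follow the same strategy as the proof of Theorem \ref{thm3.2}, treating the terminal column as the moment column attached to the label $d_j$. Write $A=(a_{k,l})_{k,l=0}^{2n-1}$, let $a_i^{0\to 2n-1}=(a_{i,0},\dots,a_{i,2n-1})$ be the bordering row, and let $\Phi_j$ denote the column vector whose $k$-th entry is $\Pf(k,d_j)=\phi_{k+j}^\top$ for $0\le k\le 2n-1$. The quasi-determinant expansion \eqref{expand} then reads
\[
\Pf(\bullet,\boxed{i,d_j})=\phi_{i+j}^\top-a_i^{0\to 2n-1}A^{-1}\Phi_j .
\]
First I would differentiate this identity by the product rule, substituting $\frac{d}{dt}a_{k,l}=a_{k+1,l}+a_{k,l+1}$ and $\frac{d}{dt}A^{-1}=-A^{-1}(\frac{d}{dt}A)A^{-1}$, together with the moment rule $\frac{d}{dt}\phi_m=\phi_{m+1}$, which gives $\frac{d}{dt}\phi_{i+j}^\top=\phi_{i+j+1}^\top$ and, for the whole terminal column, $\frac{d}{dt}\Phi_j=\Phi_{j+1}$.

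The feature that distinguishes this computation from Theorem \ref{thm3.2} is that differentiating the moment column shifts its index only once, $\frac{d}{dt}\Phi_j=\Phi_{j+1}$, rather than producing the two contributions ($a_{1\to 2n}^j$ and $a^{j+1}$) that arise for a generic terminal column. Consequently the terms $\phi_{i+j+1}^\top-a_{i+1}^{0\to 2n-1}A^{-1}\Phi_j$ recombine into $\Pf(\bullet,\boxed{i+1,d_j})$, while no analogue of the $\Pf(\bullet,\boxed{i,j+1})$ term of \eqref{wronskian1} survives; this is precisely why that term is absent from \eqref{wronskian2}. After extracting $\Pf(\bullet,\boxed{i+1,d_j})$ there remain three terms: the row leftover $-a_i^{1\to 2n}A^{-1}\Phi_j$, the terminal-column leftover $-a_i^{0\to 2n-1}A^{-1}\Phi_{j+1}$, and the inverse-derivative term $a_i^{0\to 2n-1}A^{-1}(\frac{d}{dt}A)A^{-1}\Phi_j$.

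Next I would rewrite each leftover as a bordered quasi-determinant with boxed zero, using $\left|\begin{smallmatrix}A&v\\u&\boxed{0}\end{smallmatrix}\right|=-uA^{-1}v$, and split $\frac{d}{dt}A$ into its row-shift and column-shift parts, expanding the inverse-derivative term as two sums of products of such bordered quasi-determinants. The vanishing property \eqref{zerocondition} — which, exactly as in the simplification $\left|\begin{smallmatrix}A&a^{j}_{0\to 2n-1}\\ a_k^{0\to 2n-1}&\boxed{0}\end{smallmatrix}\right|=-a_{k,j}$ used in Theorem \ref{thm3.2}, collapses any bordered quasi-determinant whose bordering row is a genuine row $a_k^{0\to 2n-1}$ of $A$ ($0\le k\le 2n-1$) to a single entry — then makes the interior terms of these sums telescope against the two leftovers. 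Only the boundary contributions survive, namely those in which a shift reaches the new index $2n$ (producing the row $a_{2n}^{0\to 2n-1}$ or the column $a^{2n}_{0\to 2n-1}$) or involves the unit column $e_{2n-1}$ associated to the label $c_{2n-1}$ (with $\Pf(c_{2n-1},d_j)=0$ supplying the boxed entry). One finds that the row-shift part together with the terminal-column leftover collapses to $\Pf(\bullet,\boxed{i,c_{2n-1}})\Pf(\bullet,\boxed{2n,d_j})$, and the column-shift part together with the row leftover collapses to $-\Pf(\bullet,\boxed{i,2n})\Pf(\bullet,\boxed{c_{2n-1},d_j})$, which is \eqref{wronskian2}.

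I expect the main obstacle to be the boundary bookkeeping of this last step — in particular, pairing the terminal-column leftover $-a_i^{0\to 2n-1}A^{-1}\Phi_{j+1}$ with the correct part of the inverse-derivative term and confirming that all interior sum terms cancel. Because $\mathcal{R}$ is non-commutative, the order of every factor must be preserved throughout, so both the telescoping and the precise sign and placement of the two surviving products have to be verified term by term rather than by any reordering or symmetry argument.
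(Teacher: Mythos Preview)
Your proposal is correct and follows essentially the same approach as the paper's proof: differentiate the quasi-determinant expansion, separate out $\Pf(\bullet,\boxed{i+1,d_j})$, write the remaining three pieces as bordered quasi-determinants with boxed zero, split the inverse-derivative contribution into row-shift and column-shift sums, and use \eqref{zerocondition} (in the form $\left|\begin{smallmatrix}A&\Phi_j\\a_k^{0\to 2n-1}&\boxed{0}\end{smallmatrix}\right|=-\phi_{k+j}^\top$) to collapse all interior terms so that only the two boundary products survive. Your pairing of the leftovers with the two sums---terminal-column leftover with the row-shift sum, row leftover with the column-shift sum---is exactly the pairing the paper uses (the latter being implicit by direct analogy with Theorem~\ref{thm3.2}).
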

\begin{proof}
This proof is similar to that for Theorem \ref{thm3.2}. If we denote $\Phi_j^\top$ as a column vector composed of $\phi_j^\top,\phi_{j+1}^\top,\cdots,\phi_{j+2n-1}^\top$, then we have the derivative relation
\begin{align*}
\frac{d}{dt}\Pf(\bullet,\boxed{i,d_j})&=\phi_{i+j+1}^\top-\frac{d}{dt}(a_i^{0\to 2n-1}A^{-1}\Phi_j^\top)\\
&=\Pf(\bullet,\boxed{i+1,d_j})+\left|\begin{array}{cc}
A&\Phi_j^\top\\
a_i^{1\to2n}&\boxed{0}
\end{array}
\right|+\sum_{k=1}^{2n}\left|\begin{array}{cc}
A&e_k\\
a_i^{0\to 2n-1}&\boxed{0}
\end{array}
\right|\cdot\left|\begin{array}{cc}
A&\Phi_j^\top\\
a_k^{0\to 2n-1}&\boxed{0}
\end{array}
\right|\\
&+\sum_{k=1}^{2n}\left|\begin{array}{cc}
A&a_{0\to 2n-1}^k\\
a_i^{0\to 2n-1}&\boxed{0}
\end{array}
\right|\cdot\left|\begin{array}{cc}
A&\Phi_j^\top\\
e_k^\top&\boxed{0}\end{array}
\right|+\left|\begin{array}{cc}
A&\Phi_{j+1}^\top\\
a_i^{0\to 2n-1}&\boxed{0}
\end{array}
\right|.
\end{align*}
Moreover, according to the relation \eqref{zerocondition}, we have
\begin{align}\label{pfd0}
\left|\begin{array}{cc}
A&\Phi_j^\top\\
a_k^{0\to 2n-1}&\boxed{0}
\end{array}
\right|=-\phi_{k+j}^\top,\quad 0\leq k\leq 2n-1,
\end{align}
and thus
\begin{align*}
\sum_{k=1}^{2n}\left|\begin{array}{cc}
A&e_k\\
a_i^{0\to 2n-1}&\boxed{0}
\end{array}
\right|\cdot\left|\begin{array}{cc}
A&\Phi_j^\top\\
a_k^{0\to 2n-1}&\boxed{0}
\end{array}
\right|+\left|\begin{array}{cc}
A&\Phi_{j+1}^\top\\
a_i^{0\to 2n-1}&\boxed{0}
\end{array}
\right|=\Pf(\bullet,\boxed{i,c_{2n-1}})\Pf(\bullet,\boxed{2n,d_j}).
\end{align*}
\end{proof}

\subsection{Derivatives of Grammian type quasi-Pfaffians}
In this part, we focus on derivative formulae for Grammian-type Pfaffians, which are defined as quasi-Pfaffians with elements satisfying
\begin{align}\label{time}
\begin{aligned}
&\frac{d}{dt}\Pf(i,j)=\phi_{i+1}^\top\phi_j-\phi_i^\top\phi_{j+1}=\Pf(i,d_1)\Pf(j,d_0)^\top-\Pf(i,d_0)\Pf(j,d_1)^\top,\\
& \frac{d}{dt}\Pf(i,d_j)=\phi_{i+j+1}^\top=\Pf(i,d_{j+1}).
\end{aligned}
\end{align}
In this case, we can state the following theorem.
\begin{theorem}\label{thm3.4}
For arbitrary $i,j\in\mathbb{N}$ and  $\bullet=\{0,\cdots,2n-1\}$, one has
\begin{align}\label{gram1}
\frac{d}{dt}\Pf(\bullet,\boxed{i,j})=\Pf(\bullet,\boxed{i,d_1})\cdot\Pf(\bullet,\boxed{j,d_0})^\top-\Pf(\bullet,\boxed{i,d_0})\cdot\Pf(\bullet,\boxed{j,d_1})^\top.
\end{align}
\end{theorem}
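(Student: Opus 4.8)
The plan is to follow the same route as in the proof of Theorem \ref{thm3.2}: start from the quasi-determinant expansion \eqref{expand}, differentiate directly, and then recognise the pieces as bordered quasi-Pfaffians. Writing $R_i=a_i^{0\to 2n-1}$ for the row $(a_{i,0},\dots,a_{i,2n-1})$, $C_j=a_{0\to 2n-1}^{j}$ for the column $(a_{0,j},\dots,a_{2n-1,j})^\top$, and $A=(a_{k,l})_{k,l=0}^{2n-1}$, I would begin from
\begin{align*}
\Pf(\bullet,\boxed{i,j})=a_{i,j}-R_iA^{-1}C_j,
\end{align*}
and apply the product rule together with $\frac{d}{dt}A^{-1}=-A^{-1}\big(\frac{d}{dt}A\big)A^{-1}$, producing four terms.

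The decisive input is that the Grammian rule \eqref{time} forces each time derivative to be a rank-two outer product. Setting $\Phi_0=(\phi_0,\dots,\phi_{2n-1})$ and $\Phi_1=(\phi_1,\dots,\phi_{2n})$ (row vectors), with transposes $\Phi_0^\top,\Phi_1^\top$ the corresponding columns, one checks entrywise that
\begin{align*}
\tfrac{d}{dt}A=\Phi_1^\top\Phi_0-\Phi_0^\top\Phi_1,\quad \tfrac{d}{dt}R_i=\phi_{i+1}^\top\Phi_0-\phi_i^\top\Phi_1,\quad \tfrac{d}{dt}C_j=\Phi_1^\top\phi_j-\Phi_0^\top\phi_{j+1},
\end{align*}
together with $\frac{d}{dt}a_{i,j}=\phi_{i+1}^\top\phi_j-\phi_i^\top\phi_{j+1}$. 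Substituting these and abbreviating the scalars $\alpha_0=\Phi_0A^{-1}C_j$, $\alpha_1=\Phi_1A^{-1}C_j$, $\beta_0=R_iA^{-1}\Phi_0^\top$, $\beta_1=R_iA^{-1}\Phi_1^\top$, each of the four terms collapses because the outer-product factors peel off the scalars $\phi_{i+1}^\top,\phi_i^\top$ on the left and $\phi_j,\phi_{j+1}$ on the right. The eight resulting monomials then regroup into the two bilinear products
\begin{align*}
\tfrac{d}{dt}\Pf(\bullet,\boxed{i,j})=(\phi_{i+1}^\top-\beta_1)(\phi_j-\alpha_0)+(\beta_0-\phi_i^\top)(\phi_{j+1}-\alpha_1).
\end{align*}

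It then remains to identify each factor with a bordered quasi-Pfaffian. By the definition preceding \eqref{rel1}, $\Pf(\bullet,\boxed{i,d_k})=\phi_{i+k}^\top-R_iA^{-1}\Phi_k^\top$, so the two left-hand factors are exactly $\Pf(\bullet,\boxed{i,d_1})=\phi_{i+1}^\top-\beta_1$ and $\Pf(\bullet,\boxed{i,d_0})=\phi_i^\top-\beta_0$. For the right-hand factors I would apply the anti-involution to $\Pf(\bullet,\boxed{j,d_k})=\phi_{j+k}^\top-R_jA^{-1}\Phi_k^\top$ and use the skew-symmetry identities $(A^{-1})^\top=-A^{-1}$ and $R_j^\top=-C_j$ (both consequences of $A^\top=-A$ with $(ab)^\top=b^\top a^\top$), obtaining $\Pf(\bullet,\boxed{j,d_k})^\top=\phi_{j+k}-\Phi_kA^{-1}C_j$; for $k=0,1$ these are precisely $\phi_j-\alpha_0$ and $\phi_{j+1}-\alpha_1$. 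Substituting and using $(\beta_0-\phi_i^\top)=-\Pf(\bullet,\boxed{i,d_0})$ yields the claimed formula \eqref{gram1}.

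The main obstacle is bookkeeping in the non-commutative setting. I must verify that in each differentiated term the scalar functions $\phi$ genuinely factor out on the correct side (the $i$-indexed scalars on the left, the $j$-indexed scalars on the right), so that the regrouping into two clean products is legitimate and not merely formal; this is exactly where commutativity is normally used and where care is required. The second delicate point is the transpose identification of the $j$-factors, where the sign from $(A^{-1})^\top=-A^{-1}$ must cancel the sign from $R_j^\top=-C_j$; tracking these two signs correctly is what produces the factor $\Pf(\bullet,\boxed{j,d_k})^\top$ rather than its negative, and hence the correct relative sign between the two terms of \eqref{gram1}.
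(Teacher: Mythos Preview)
Your proposal is correct and follows essentially the same route as the paper. Both arguments differentiate the quasi-determinant expansion directly, exploit the rank-two outer-product structure of the Grammian derivatives $\tfrac{d}{dt}A=\Phi_1^\top\Phi_0-\Phi_0^\top\Phi_1$ (and similarly for $R_i$, $C_j$), and regroup the resulting eight monomials into two bilinear products; the only cosmetic difference is that the paper records the intermediate pieces as bordered quasi-determinants and then invokes \eqref{rel1} to pass from $\Pf(\bullet,\boxed{d_k,j})$ to $-\Pf(\bullet,\boxed{j,d_k})^\top$, whereas you compute that transpose identity directly from $(A^{-1})^\top=-A^{-1}$ and $R_j^\top=-C_j$.
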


\begin{proof}
Following the notation above, we have
\begin{align*}
\frac{d}{dt}\Pf(\bullet ,\boxed{i,j})&=\phi_{i+1}^\top\phi_j-\phi_i^\top\phi_{j+1}-\phi_{i+1}^\top \left|\begin{array}{cc}
A&a^j_{0\to 2n-1}\\
-\Phi_0&\boxed{0}
\end{array}
\right|+\phi_i^\top \left|\begin{array}{cc}
A&a^j_{0\to 2n-1}\\
-\Phi_1&\boxed{0}
\end{array}\right|\\
&- \left|\begin{array}{cc}
A&\Phi_1^\top\\
a_i^{0\to 2n-1}&\boxed{0}
\end{array}\right|\cdot \left|\begin{array}{cc}
A&a^j_{0\to 2n-1}\\
-\Phi_0&\boxed{0}
\end{array}\right|+\left|\begin{array}{cc}
A&\Phi_0^\top\\
a_i^{0\to 2n-1}&\boxed{0}
\end{array}\right|\cdot 
\left|\begin{array}{cc}
A&a^j_{0\to 2n-1}\\
-\Phi_1&\boxed{0}
\end{array}\right|\\
&+\left|\begin{array}{cc}
A&\Phi_1^\top\\
a_i^{0\to 2n-1}&\boxed{0}
\end{array}\right|\phi_j-\left|\begin{array}{cc}
A&\Phi_0^\top\\
a_i^{0\to 2n-1}&\boxed{0}
\end{array}\right|\phi_{j+1}.
\end{align*}
Applying \eqref{pfd0}, we simplify the above equation to
\begin{align*}
&-\left|\begin{array}{cc}
A&\Phi_1^\top\\
a_i^{0\to 2n-1}&\boxed{\phi_{i+1}^\top}
\end{array}\right|\cdot\left|\begin{array}{cc}
A&a^j_{0\to 2n-1}\\
-\Phi_0&\boxed{-\phi_j}
\end{array}\right|+\left|\begin{array}{cc}
A&\Phi_0^\top\\
a_i^{0\to 2n-1}&\boxed{\phi_i^\top}
\end{array}\right|\cdot\left|\begin{array}{cc}
A&a^j_{0\to 2n-1}\\
-\Phi_1&\boxed{-\phi_{j+1}}
\end{array}\right|\\
&=-\Pf(\bullet,\boxed{i,d_1})\cdot\Pf(\bullet,\boxed{d_0,j})+\Pf(\bullet,\boxed{i,d_0})\cdot\Pf(\bullet,\boxed{d_1,j}).
\end{align*}
We complete our proof by recognizing the relation \eqref{rel1}.
\end{proof}

\begin{remark}
This formula also has a commutative counterpart. In \cite[Section 2.11]{hirota04}, it has been shown that if Pfaffian elements satisfy a Grammian type derivative formula
\begin{align*}
\frac{d}{dt}\Pf(i,j)=\Pf(d_0,d_1,i,j), \quad \Pf(d_0,d_1)=0,
\end{align*}
then there holds
\begin{align*}
\frac{d}{dt}\Pf(0,\cdots,2n-1)=\Pf(d_0,d_1,0,\cdots,2n-1).
\end{align*}
Theorem \ref{thm3.4} is a generalization of this formula. By taking $\bullet=\{0,\cdots,2n-1\}$ and $i=2n$, $j=2n+1$, then the left hand side in \eqref{gram1} is equivalent to \begin{align*}\frac{d}{dt}\left(\frac{\Pf(\bullet,2n,2n+1)}{\Pf(\bullet)}\right)
=\frac{\Pf(\bullet)\frac{d}{dt}\Pf(\bullet,2n,2n+1)-\Pf(\bullet,2n,2n+1)\frac{d}{dt}\Pf(\bullet)}{\Pf(\bullet)^2},
\end{align*}
while the right hand side is equal to 
\begin{align*}
&\frac{1}{\Pf(\bullet)^2}\left(
\Pf(d_0,\bullet,2n)\Pf(d_1,\bullet,2n+1)-\Pf(d_0,\bullet,2n+1)\Pf(d_1,\bullet,2n)
\right)\\
&\quad=\frac{1}{\Pf(\bullet)^2}\left(\Pf(d_0,d_1,\bullet,2n-2,2n-1)\Pf(\bullet)-\Pf(d_0,d_1,\bullet)\Pf(\bullet,2n-2,2n-1)\right),
\end{align*}
where the last equality is based on the Pfaffian identity. A comparison of each term leads to the derivative formula in commutative case.
\end{remark}

\begin{theorem}\label{thm3.6}
For $\bullet=\{0,\cdots,2n-1\}$ and arbitrary $i\in\mathbb{N}$, one has
\begin{align}\label{gram2}
\frac{d}{dt}\Pf(\bullet,\boxed{i,d_0})=\Pf(\bullet,\boxed{i,d_1})\left(
1-\Pf(\bullet,\boxed{d_0,d_0})
\right)+\Pf(\bullet,\boxed{i,d_0})\cdot\Pf(\bullet,\boxed{d_1,d_0}),
\end{align}
where $\Pf(d_i,d_j)=0$ for any $i,j\in\mathbb{N}$.
\end{theorem}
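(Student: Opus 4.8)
The plan is to mirror the argument for Theorem~\ref{thm3.4}: expand the quasi-Pfaffian through the quasi-determinant formula \eqref{expand}, differentiate using the Grammian evolution rules \eqref{time}, and then reassemble the output into lower quasi-Pfaffians. Keeping the notation of the earlier proofs, write $A=(a_{k,l})_{k,l=0}^{2n-1}$, let $a_i^{0\to 2n-1}$ be the last row, and let $\Phi_j^\top$ be the column $(\phi_j^\top,\dots,\phi_{j+2n-1}^\top)^\top$, with $\Phi_j=(\Phi_j^\top)^\top$ the associated row vector. The starting point is
\begin{align*}
\Pf(\bullet,\boxed{i,d_0})=\phi_i^\top-a_i^{0\to 2n-1}A^{-1}\Phi_0^\top.
\end{align*}

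First I would differentiate the four factors separately. The moment rule in \eqref{time} gives $\frac{d}{dt}\phi_i^\top=\phi_{i+1}^\top$ and $\frac{d}{dt}\Phi_0^\top=\Phi_1^\top$; the Grammian rule gives the entrywise derivative $\frac{d}{dt}a_i^{0\to 2n-1}=\phi_{i+1}^\top\Phi_0-\phi_i^\top\Phi_1$; and the inverse rule gives $\frac{d}{dt}A^{-1}=-A^{-1}\big(\frac{d}{dt}A\big)A^{-1}$ with $\frac{d}{dt}A=\Phi_1^\top\Phi_0-\Phi_0^\top\Phi_1$. Substituting these into the product rule yields six terms over $\mathcal{R}$.

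The second step is regrouping by the tail $A^{-1}\Phi_0^\top$. The two terms not ending in $A^{-1}\Phi_0^\top$ combine into $\phi_{i+1}^\top-a_i^{0\to 2n-1}A^{-1}\Phi_1^\top=\Pf(\bullet,\boxed{i,d_1})$. Of the four remaining terms, the two ending in $\Phi_0A^{-1}\Phi_0^\top$ factor as $-\big(\phi_{i+1}^\top-a_i^{0\to 2n-1}A^{-1}\Phi_1^\top\big)\Phi_0A^{-1}\Phi_0^\top=-\Pf(\bullet,\boxed{i,d_1})\,\Phi_0A^{-1}\Phi_0^\top$, while the two ending in $\Phi_1A^{-1}\Phi_0^\top$ factor as $\big(\phi_i^\top-a_i^{0\to 2n-1}A^{-1}\Phi_0^\top\big)\Phi_1A^{-1}\Phi_0^\top=\Pf(\bullet,\boxed{i,d_0})\,\Phi_1A^{-1}\Phi_0^\top$. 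It then remains to recognize the two scalar tails: since the $d$-row obeys $\Pf(d_j,k)=-\Pf(k,d_j)^\top=-\phi_{k+j}$ while $\Pf(d_0,d_0)=\Pf(d_1,d_0)=0$, the expansion \eqref{expand} gives $\Pf(\bullet,\boxed{d_0,d_0})=\Phi_0A^{-1}\Phi_0^\top$ and $\Pf(\bullet,\boxed{d_1,d_0})=\Phi_1A^{-1}\Phi_0^\top$. Collecting the first two pieces as $\Pf(\bullet,\boxed{i,d_1})\big(1-\Pf(\bullet,\boxed{d_0,d_0})\big)$ reproduces \eqref{gram2}.

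I expect the only real difficulty to be sign bookkeeping rather than anything structural. Two independent minus signs interact: the one from $\frac{d}{dt}A^{-1}=-A^{-1}(\cdots)A^{-1}$ and the one from $\Pf(d_j,k)=-\phi_{k+j}$, and they must combine so that the coefficient of $\Pf(\bullet,\boxed{i,d_1})$ is $1-\Pf(\bullet,\boxed{d_0,d_0})$ rather than $1+\Pf(\bullet,\boxed{d_0,d_0})$. As in the proof of Theorem~\ref{thm3.4}, a safe way to keep the signs honest is to route the simplification through the boxed-quasideterminant identity \eqref{pfd0}, replacing $a_k^{0\to 2n-1}A^{-1}\Phi_j^\top$ by $\phi_{k+j}^\top$ wherever the last row falls onto a row of $A$; the end result can also be cross-checked against its commutative reduction via \eqref{nc-c}.
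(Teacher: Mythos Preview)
Your proposal is correct and follows essentially the same approach as the paper: expand $\Pf(\bullet,\boxed{i,d_0})=\phi_i^\top-a_i^{0\to 2n-1}A^{-1}\Phi_0^\top$, differentiate using the Grammian rules $\frac{d}{dt}a_i^{0\to 2n-1}=\phi_{i+1}^\top\Phi_0-\phi_i^\top\Phi_1$, $\frac{d}{dt}A=\Phi_1^\top\Phi_0-\Phi_0^\top\Phi_1$, $\frac{d}{dt}\Phi_0^\top=\Phi_1^\top$, and regroup the six resulting terms into the three quasi-Pfaffian factors. The only cosmetic difference is that the paper records the intermediate terms in boxed quasi-determinant notation (e.g.\ $\left|\begin{smallmatrix}A&\Phi_0^\top\\-\Phi_0&\boxed{0}\end{smallmatrix}\right|$) before identifying them, while you work directly with the matrix products $\Phi_jA^{-1}\Phi_0^\top$; your identifications $\Pf(\bullet,\boxed{d_0,d_0})=\Phi_0A^{-1}\Phi_0^\top$ and $\Pf(\bullet,\boxed{d_1,d_0})=\Phi_1A^{-1}\Phi_0^\top$ are exactly what the paper is encoding.
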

\begin{proof}
It is noted that 
\begin{align*}
\frac{d}{dt}\Pf(\bullet,\boxed{i,d_0})=\phi_{i+1}^\top-\frac{d}{dt} (a_i^{0\to 2n-1}A^{-1}\Phi_0^\top),
\end{align*}
and the derivative formulas for $a_i^{0\to 2n-1}$, $A$ and $\Phi_i$ satisfy
\begin{align*}
\frac{d}{dt} a_i^{0\to 2n-1}=\phi_{i+1}^\top\Phi_0-\phi_i^\top\Phi_1,\quad \frac{d}{dt} A=\Phi_1^\top\Phi_0-\Phi_0^\top\Phi_1,\quad \frac{d}{dt} \Phi_i=\Phi_{i+1}.
\end{align*}
It leads to the result
\begin{align*}
\frac{d}{dt}\Pf(\bullet,\boxed{i,d_0})&=\phi_{i+1}^\top-\phi_{i+1}^\top\left|
\begin{array}{cc}
A&\Phi_0^\top\\
-\Phi_0&\boxed{0}
\end{array}
\right|+\phi_i^\top\left|
\begin{array}{cc}
A&\Phi_0^\top\\
-\Phi_1&\boxed{0}
\end{array}
\right|
+\left|
\begin{array}{cc}
A&\Phi_1^\top\\
a_i^{0\to 2n-1}&\boxed{0}
\end{array}
\right|
\\
&-\left|
\begin{array}{cc}
A&\Phi_1^\top\\
a_i^{0\to 2n-1}&\boxed{0}
\end{array}
\right|\left|
\begin{array}{cc}
A&\Phi_0^\top\\
-\Phi_0&\boxed{0}
\end{array}
\right|+\left|
\begin{array}{cc}
A&\Phi_0^\top\\
a_i^{0\to 2n-1}&\boxed{0}
\end{array}
\right|\left|
\begin{array}{cc}
A&\Phi_0^\top\\
-\Phi_1&\boxed{0}
\end{array}
\right|\\
&=\left|
\begin{array}{cc}
A&\Phi_1^\top\\
a_i^{0\to 2n-1}&\boxed{\phi_{i+1}^\top}
\end{array}
\right|\left(-\left|
\begin{array}{cc}
A&\Phi_0^\top\\
-\Phi_0&\boxed{0}
\end{array}
\right|+1\right)+\left|
\begin{array}{cc}
A&\Phi_0^\top\\
a_i^{0\to 2n-1}&\boxed{\phi_i^\top}
\end{array}
\right|\left|
\begin{array}{cc}
A&\Phi_0^\top\\
-\Phi_1&\boxed{0}
\end{array}
\right|.
\end{align*}
\end{proof}
\begin{remark}
In the commutative case, this formula is equivalent to the derivative formulae \cite[Eqs. (4.98b) \& (4.98d)]{hirota04}.
The left hand side in the above formula is equal to
\begin{align*}
\frac{d}{dt}
\left(
\frac{\Pf(\bullet,2n,d_0)}{\Pf(\bullet)}
\right)
=
\frac{1}{\Pf(\bullet)^2}     
 \left(
   \Pf(\bullet) \frac{d}{dt}\Pf(\bullet,2n,d_0)-\Pf(\bullet,2n,d_0)\frac{d}{dt}\Pf(\bullet)
\right),
\end{align*}
while the right hand side is equal to 
\begin{align*}
\frac{\Pf(\bullet,2n,d_1)}{\Pf(\bullet)}+\frac{\Pf(\bullet,2n,d_0)\Pf(\bullet,d_1,d_0)}{\Pf(\bullet)^2},
\end{align*}
from which we can deduce
\begin{align*}
\frac{d}{dt}\Pf(\bullet,2n,d_0)=\Pf(\bullet,2n,d_1).
\end{align*}
This formula has been widely applied to the construction of B\"acklund transformation and Toda-type equations for BKP hierarchy. 
\end{remark}
An analogous argument to the proof of theorem \ref{thm3.6} leads to the following proposition.
\begin{proposition}
It holds that
\begin{align}\label{gram3}
\begin{aligned}
\frac{d}{dt}\Pf(\bullet,\boxed{d_0,d_0})&=\Pf(\bullet,\boxed{d_1,d_0})\cdot\Pf(\bullet,\boxed{d_0,d_1})-\Pf(\bullet,\boxed{d_0,d_1})\cdot\Pf(\bullet,\boxed{d_1,d_0})\\
&=\left[\Pf(\bullet,\boxed{d_0,d_1}),\Pf(\bullet,\boxed{d_0,d_1})\right],
\end{aligned}
\end{align}
where the bracket is defined by $[a,b]=ab^\top-ba^\top$.
\end{proposition}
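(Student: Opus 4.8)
The plan is to mirror the proof of Theorem~\ref{thm3.6}, treating $\Pf(\bullet,\boxed{d_0,d_0})$ as the degenerate case of $\Pf(\bullet,\boxed{i,d_0})$ in which the free index $i$ is specialised to the moment label $d_0$. First I would apply the expansion \eqref{expand} to write
\begin{align*}
\Pf(\bullet,\boxed{d_0,d_0})=\Pf(d_0,d_0)-(-\Phi_0)\,A^{-1}\,\Phi_0^\top=\Phi_0 A^{-1}\Phi_0^\top,
\end{align*}
using $\Pf(d_0,d_0)=0$ and the fact that the bordering row and column attached to the boxed corner are precisely the moment vectors $-\Phi_0$ and $\Phi_0^\top$, exactly as in the construction preceding Theorem~\ref{thm3.6}. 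The structural difference from that theorem is that here \emph{both} bordering vectors are moment vectors, so each carries only the single-term derivative $\frac{d}{dt}\Phi_0=\Phi_1$, and the boxed corner $\Pf(d_0,d_0)$ is constant rather than the Grammian entry $\phi_i^\top$.

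Next I would differentiate by the product rule, feeding in the derivative rules already in play, namely $\frac{d}{dt}\Phi_0=\Phi_1$ and $\frac{d}{dt}A=\Phi_1^\top\Phi_0-\Phi_0^\top\Phi_1$ from \eqref{time} (as recorded in the proof of Theorem~\ref{thm3.6}), together with $\frac{d}{dt}A^{-1}=-A^{-1}(\frac{d}{dt}A)A^{-1}$. This produces two ``outer'' contributions $\Phi_1 A^{-1}\Phi_0^\top$ and $\Phi_0 A^{-1}\Phi_1^\top$, plus a ``middle'' contribution from $\frac{d}{dt}A^{-1}$ that splits into the two rank-one pieces $-(\Phi_0 A^{-1}\Phi_1^\top)(\Phi_0 A^{-1}\Phi_0^\top)$ and $(\Phi_0 A^{-1}\Phi_0^\top)(\Phi_1 A^{-1}\Phi_0^\top)$. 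I would then translate each quadratic form back into a quasi-Pfaffian through the border/boxed-zero dictionary supplied by \eqref{pfd0} and \eqref{zerocondition}, reading off $\Phi_1 A^{-1}\Phi_0^\top=\Pf(\bullet,\boxed{d_1,d_0})$, $\Phi_0 A^{-1}\Phi_1^\top=\Pf(\bullet,\boxed{d_0,d_1})$ and $\Phi_0 A^{-1}\Phi_0^\top=\Pf(\bullet,\boxed{d_0,d_0})$, which are the same identifications used in Theorem~\ref{thm3.6}.

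The final step is to collect the resulting terms and fold them into the antisymmetric bracket $[a,b]=ab^\top-ba^\top$. Here I would invoke \eqref{rel1} together with $(A^{-1})^\top=-A^{-1}$ (valid since $A$ is skew-symmetric under $\top$) to record the antisymmetry identities $\Pf(\bullet,\boxed{d_1,d_0})=-\Pf(\bullet,\boxed{d_0,d_1})^\top$ and $\Pf(\bullet,\boxed{d_0,d_0})^\top=-\Pf(\bullet,\boxed{d_0,d_0})$, and use them to reassemble the collected terms as the commutator on the right-hand side of \eqref{gram3}. The main obstacle I anticipate is exactly this bookkeeping: because the boxed corner now contributes no derivative term, the cancellation that produced the clean $(1-\Pf(\bullet,\boxed{d_0,d_0}))$ factor in Theorem~\ref{thm3.6} takes a different form here, and one must track the non-commutative ordering of the scalar factors carefully to determine which terms persist and how they combine. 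As a consistency check I would verify at the end that the assembled right-hand side is antisymmetric under $\top$, matching the antisymmetry of both $\frac{d}{dt}\Pf(\bullet,\boxed{d_0,d_0})$ and of any bracket $[a,b]$.
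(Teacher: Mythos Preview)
Your approach is correct and matches the paper's, which simply states that the result follows by an argument analogous to the proof of Theorem~\ref{thm3.6}: write $\Pf(\bullet,\boxed{d_0,d_0})=\Phi_0 A^{-1}\Phi_0^\top$, differentiate using $\frac{d}{dt}\Phi_0=\Phi_1$ and $\frac{d}{dt}A=\Phi_1^\top\Phi_0-\Phi_0^\top\Phi_1$, and reinterpret the resulting quadratic forms as quasi-Pfaffians. Your anticipated obstacle is the right one---the four terms arising from the product rule (two ``outer'' and two ``middle'') do not collapse in quite the same way as in Theorem~\ref{thm3.6}, so careful tracking of the non-commutative ordering and the antisymmetry relations is exactly where the work lies.
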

In fact, the above equation is a trivial equation in the commutative case. However, in the non-commutative circumstance, the anti-commutator may not vanish. 
We conclude by addressing the derivative formulae for the following Grammian-type quasi-Pfaffians, which are used later.
\begin{coro}
For any $i\in\mathbb{N}$, there holds that
\begin{subequations}
\begin{align}
&\frac{d}{dt}\Pf(\bullet,\boxed{i,c_{2n-1}})=-\Pf(\bullet,\boxed{i,d_1})\Pf(\bullet,\boxed{c_{2n-1},d_0})^\top+\Pf(\bullet,\boxed{i,d_0})\Pf(\bullet,\boxed{c_{2n-1},d_1})^\top,\label{gram4}\\
&\frac{d}{dt}\Pf(\bullet,\boxed{c_{2n-1},d_0})=\Pf(\bullet,\boxed{c_{2n-1},d_1})\left(1-\Pf(\bullet,\boxed{d_0,d_0})\right)+\Pf(\bullet,\boxed{c_{2n-1},d_0})\Pf(\bullet,\boxed{d_1,d_0}).\label{gram5}
\end{align}
\end{subequations}
\end{coro}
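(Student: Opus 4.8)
The plan is to establish both identities by mirroring the computations already carried out for Theorems \ref{thm3.4} and \ref{thm3.6}, exploiting the fact that the label $c_{2n-1}$ behaves, for the purpose of these time derivatives, like an ordinary index whose inner products happen to be constant. Concretely, since $\Pf(k,c_{2n-1})=\delta_{k,2n-1}$ is independent of $t$ and $\Pf(c_{2n-1},d_j)=0$, the row and column attached to $c_{2n-1}$ are constant, and their formal Grammian derivative $\Pf(c_{2n-1},d_1)\Phi_0-\Pf(c_{2n-1},d_0)\Phi_1$ vanishes identically. This is precisely the compatibility needed to re-run the earlier arguments with $i$ or $j$ specialised to $c_{2n-1}$.

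For \eqref{gram4}, I would start from the quasi-determinant expansion $\Pf(\bullet,\boxed{i,c_{2n-1}})=\delta_{i,2n-1}-a_i^{0\to 2n-1}A^{-1}u$, where $u=(0,\dots,0,1)^\top$ is the constant column attached to $c_{2n-1}$. Differentiating and using $\frac{d}{dt}u=0$, together with $\frac{d}{dt}a_i^{0\to 2n-1}=\phi_{i+1}^\top\Phi_0-\phi_i^\top\Phi_1$ and $\frac{d}{dt}A^{-1}=-A^{-1}(\Phi_1^\top\Phi_0-\Phi_0^\top\Phi_1)A^{-1}$ coming from \eqref{time}, I would collect the resulting terms according to the two scalar factors $\Phi_0 A^{-1}u$ and $\Phi_1 A^{-1}u$. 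The coefficients of these factors are exactly $\phi_{i+1}^\top-a_i^{0\to 2n-1}A^{-1}\Phi_1^\top=\Pf(\bullet,\boxed{i,d_1})$ and $\phi_i^\top-a_i^{0\to 2n-1}A^{-1}\Phi_0^\top=\Pf(\bullet,\boxed{i,d_0})$, while each factor $\Phi_j A^{-1}u$ is identified with $\pm\Pf(\bullet,\boxed{c_{2n-1},d_j})^\top$ by applying the anti-involution and the skew-symmetry $A^\top=-A$, exactly as in \eqref{rel1}. Assembling these pieces yields \eqref{gram4}.

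For \eqref{gram5}, I would observe that the derivation of Theorem \ref{thm3.6} uses only three inputs associated with the outer index $i$: the Grammian derivative of its row $a_i^{0\to 2n-1}$, the moment derivative $\frac{d}{dt}\Phi_0^\top=\Phi_1^\top$, and $\frac{d}{dt}A=\Phi_1^\top\Phi_0-\Phi_0^\top\Phi_1$. Replacing $i$ by $c_{2n-1}$, the row becomes the constant vector attached to $c_{2n-1}$, whose derivative vanishes and hence matches the pattern $\Pf(c_{2n-1},d_1)\Phi_0-\Pf(c_{2n-1},d_0)\Phi_1=0$, while the boxed entry $\Pf(c_{2n-1},d_0)=0$ satisfies $\frac{d}{dt}\Pf(c_{2n-1},d_0)=0=\Pf(c_{2n-1},d_1)$. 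Hence every step of the proof of Theorem \ref{thm3.6} transcribes with $i\mapsto c_{2n-1}$, producing \eqref{gram5} with each $\Pf(\bullet,\boxed{i,d_k})$ replaced by $\Pf(\bullet,\boxed{c_{2n-1},d_k})$.

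The main obstacle I anticipate is the sign and transpose bookkeeping in \eqref{gram4}: one must track the anti-involution carefully when converting $\Phi_j A^{-1}u$ into $\Pf(\bullet,\boxed{c_{2n-1},d_j})^\top$, and apply the skew-symmetry convention for the $c_{2n-1}$ row and column consistently, since this is where the minus signs relative to a naive substitution into Theorem \ref{thm3.4} originate. A secondary point requiring care is the justification that $c_{2n-1}$ may legitimately be treated as an ordinary index in the derivative rules; this rests on verifying that both the Grammian and moment derivative patterns collapse to zero for $c_{2n-1}$, which follows from $\Pf(c_{2n-1},d_j)=0$ and the $t$-independence of $\Pf(k,c_{2n-1})$.
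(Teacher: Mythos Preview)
Your proposal is correct and matches the paper's treatment: the paper states this result as a corollary without giving a separate proof, precisely because both formulae follow by repeating the computations of Theorems~\ref{thm3.4} and~\ref{thm3.6} verbatim with the column (respectively row) attached to $c_{2n-1}$ in place of the one attached to an ordinary index $j$ (respectively $i$), using that this column is constant and that $\Pf(c_{2n-1},d_j)=0$. Your identification of the sign and anti-involution bookkeeping in converting $\Phi_jA^{-1}u$ into $\Pf(\bullet,\boxed{c_{2n-1},d_j})^\top$ as the only delicate point is exactly right.
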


% \section{On a non-commutative B-Toda lattice equation}
\section{On a non-commutative B-Toda lattice equation}\label{sec4}
This section is to demonstrate that the solution of some non-commutative integrable systems could be expressed by quasi-Pfaffians. 
\subsection{B-Toda lattice and its Pfaffian solution}
While the survey \cite{hirota00} presented various commutative coupled and discrete equations solvable by Pfaffians, we narrow our focus to the so-called Toda equation of BKP type (B-Toda equation for short) with the form
\begin{align}\label{b-toda}
\mathcal{D}_t^2\tau_n\cdot\tau_n=2\mathcal{D}_t\tau_{n-1}\cdot\tau_{n+1}.
\end{align}
This commutative equation has been extensively studied. Its bilinear B\"acklund transformation was established in \cite{gilson03}, while its molecule solutions and wave functions were constructed in \cite{chang18}. Furthermore, a reduction from the 2d-Toda hierarchy was presented in \cite{li22}, leading to the derivation of a bilinear hierarchy. It is also noteworthy that the equation is connected to a condensation algorithm for Pfaffians, which is the simplest equation in this hierarchy \cite{li20}.  We first state the following proposition regarding  the Pfaffian solution for B-Toda equation. 
\begin{proposition}[\cite{chang18}]
The solution of B-Toda equation \eqref{b-toda} can be expressed by
\begin{align*}
\tau_{2n}=\Pf(0,1,\cdots,2n-1),\quad \tau_{2n+1}=\Pf(d_0,0,1,\cdots,2n),
\end{align*}
with Pfaffian entries satisfying 
\begin{align*}
&\frac{d}{dt}\Pf(d_0,i)=\Pf(d_0,i+1)=\Pf(d_1,i),\\
& \frac{d}{dt} \Pf(i,j)=\Pf(i+1,j)+\Pf(i,j+1)=\Pf(d_0,d_1,i,j).
\end{align*}
\end{proposition}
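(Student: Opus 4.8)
The plan is to substitute the proposed $\tau$-functions into \eqref{b-toda} and reduce the resulting expression to the classical (commutative) Pfaffian bilinear identities, since here the entries are ordinary commuting Pfaffian elements rather than quasi-Pfaffians. First I would rewrite the Hirota form in terms of ordinary $t$-derivatives: expanding the bilinear operators gives $\mathcal{D}_t^2\tau_n\cdot\tau_n=2(\tau_n\tau_n''-(\tau_n')^2)$ and $\mathcal{D}_t\tau_{n-1}\cdot\tau_{n+1}=\tau_{n-1}'\tau_{n+1}-\tau_{n-1}\tau_{n+1}'$, so that \eqref{b-toda} is equivalent to
\begin{align*}
\tau_n\tau_n''-(\tau_n')^2=\tau_{n-1}'\tau_{n+1}-\tau_{n-1}\tau_{n+1}',
\end{align*}
where a prime denotes $\frac{d}{dt}$. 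Because $\tau_n$ is given by two different templates according to the parity of $n$, I would treat the cases $n=2m$ and $n=2m+1$ separately.

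Second, I would compute the first and second derivatives of each $\tau$ using the two derivation rules stated in the proposition. The essential simplification is telescoping: for a Pfaffian built on a run of consecutive labels, the Wronskian rule $\frac{d}{dt}\Pf(i,j)=\Pf(i+1,j)+\Pf(i,j+1)$ raises every slot by one, but raising an interior label repeats an existing one, so the corresponding Pfaffian vanishes and only the term raising the largest label survives. This yields single-term expressions such as $\tau_{2m}'=\Pf(0,\dots,2m-2,2m)$, while the odd-index $\tau$'s produce two terms because the extra slot $d_0$ can also be raised via $\frac{d}{dt}\Pf(d_0,i)=\Pf(d_0,i+1)=\Pf(d_1,i)$, giving e.g. $\tau_{2m-1}'=\Pf(d_1,0,\dots,2m-2)+\Pf(d_0,0,\dots,2m-3,2m-1)$. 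For the second derivative it is convenient to switch to the Grammian rule $\frac{d}{dt}\Pf(i,j)=\Pf(d_0,d_1,i,j)$, which collapses $\tau_{2m}''$ into the single Pfaffian $\Pf(d_0,d_1,0,\dots,2m-2,2m)$. Having both representations of each derivative available is the flexibility that makes the final matching possible.

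Third, after substitution each side becomes a sum of products of Pfaffians sharing a common block of labels, and I would collapse it onto the compound Pfaffian (Plücker-type) identity \eqref{identity1}, taking the common set to be the maximal shared run and the four free labels among $\{d_0,d_1,2m-1,2m,2m+1\}$ (with the analogous choice in the odd case). Because the classical identity \eqref{identity1} is phrased for an even common block while the natural shared run here has odd length, I would invoke the odd-common-block variant of the Pfaffian Plücker relation,
\begin{align*}
\Pf(\bullet,a)\Pf(\bullet,b,c,e)-\Pf(\bullet,b)\Pf(\bullet,a,c,e)+\Pf(\bullet,c)\Pf(\bullet,a,b,e)-\Pf(\bullet,e)\Pf(\bullet,a,b,c)=0,
\end{align*}
which is itself an instance of the general compound Pfaffian theorem quoted in the introduction.

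The step I expect to be the main obstacle is precisely this final matching. The two-term first derivatives of the odd-index $\tau$-functions generate several products, and a single Plücker relation captures only some of them; reconciling the leftover products forces one to re-express the $d_1$-labelled Pfaffians through $\Pf(d_1,i)=\Pf(d_0,i+1)$ and, in general, to apply the bilinear identity a second time, all while tracking the signs produced when $d_0,d_1$ and the raised numeric labels are permuted into canonical order. I would control this by fixing a reference ordering of all labels at the outset, rewriting every Pfaffian relative to it, and then checking $n=2m$ and $n=2m+1$ term by term; once every product is recognized as the expansion of a common Pfaffian Plücker relation, the equality follows and the identity reduces, as the text notes, to the condensation algorithm for Pfaffians.
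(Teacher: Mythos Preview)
The paper does not prove this proposition; it is quoted verbatim from \cite{chang18} and used as background for the non-commutative construction in Section~\ref{sec4}. So there is no ``paper's own proof'' to compare against, and your plan of direct verification via derivative formulae and the Pfaffian Pl\"ucker relations is exactly the standard route taken in that reference and in Hirota's book.

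That said, there is a concrete slip in your derivative computation. You write
\[
\tau_{2m-1}'=\Pf(d_1,0,\dots,2m-2)+\Pf(d_0,0,\dots,2m-3,2m-1),
\]
arguing that the slot $d_0$ ``can also be raised'' to $d_1$. This double-counts. The entry rule $\frac{d}{dt}\Pf(d_0,i)=\Pf(d_0,i+1)=\Pf(d_1,i)$ gives \emph{one} term, written two ways; it does not contribute an extra $d_0\to d_1$ raising on top of the numeric shifts. Applying the Wronskian rule to $\Pf(d_0,0,\dots,2m-2)$ therefore produces the single term $\Pf(d_0,0,\dots,2m-3,2m-1)$, which (by the equality $\Pf(d_1,i)=\Pf(d_0,i+1)$ applied throughout) can also be written as $\Pf(d_1,0,\dots,2m-2)$. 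This is precisely the formula $\frac{d}{dt}\Pf(\bullet,2n,d_0)=\Pf(\bullet,2n,d_1)$ that the paper records just before Section~\ref{sec3}.1.

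With the correct single-term derivatives the verification is in fact easier than you anticipate: in each parity case the left- and right-hand sides assemble directly into one instance of the Pfaffian Pl\"ucker relation (the odd-common-block form you quote), and no second application or elaborate sign bookkeeping is needed. Fixing that computation, your outline goes through.
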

This proposition reveals that the B-Toda equation is intrinsically a coupled system
\begin{subequations}
\begin{align}
&\mathcal{D}_t^2\tau_{2n}\cdot\tau_{2n}=2\mathcal{D}_t\tau_{2n-1}\cdot\tau_{2n+1},\label{b-toda1}\\
&\mathcal{D}_t^2\tau_{2n+1}\cdot\tau_{2n+1}=2\mathcal{D}_t\tau_{2n}\cdot\tau_{2n+2}.\label{b-toda2}
\end{align}
\end{subequations}
In contrast to the Pfaff lattice hierarchy studied by Adler et al. \cite{adler02}, the B-Toda equation involves two distinct families of Pfaffian tau-functions.
The lattice equation \eqref{b-toda} is the simplest integrable system with Pfaffian tau-functions, serving the same role as the Toda equation in the Toda/KP hierarchy.

\subsection{Non-commutative B-Toda equation and its quasi-Pfaffian solution}
We now introduce a non-commutative analog of the B-Toda lattice \eqref{b-toda}, anticipating solutions in terms of quasi-Pfaffians. 
Since the conventional bilinear formalism breaks down in the non-commutative setting, our strategy is to start from quasi-Pfaffians and construct the associated lattice equations, thereby deriving the non-commutative integrable system.
First of all, we need to introduce some variables 
\begin{align*}
&u_n=\Pf(\bullet,\boxed{d_1,d_0}), \quad\quad\,\,\,\, \sigma_n=\Pf(\bullet,\boxed{2n,d_0}),\quad \quad\,\,\,\,\tilde{\sigma}_n=\Pf(\bullet,\boxed{2n,d_1}),\\
&\hat{\sigma}_n=\Pf(\bullet,\boxed{2n+1,d_0}),\quad \tilde{\hat{\sigma}}_n=\Pf(\bullet,\boxed{2n+1,d_1}),
\end{align*}
in which $\bullet=\{0,\cdots,2n-1\}$.
The following proposition illustrates the relationship between the variables.
\begin{proposition}
The following equations hold
\begin{subequations}
\begin{align}
u_{n+1}&=u_n+(\tilde{\sigma}_n^\top A_n+\tilde{\hat{\sigma}}_n^\top B_n){\sigma}_n+(\tilde{\sigma}_n^\top C_n+\tilde{\hat{\sigma}}_n^\top D_n)\tilde{\hat{\sigma}}_n,\label{nceq1}\\
\tilde{\sigma}_n&=\left(\frac{d}{dt} \sigma_n-\sigma_nu_n^\top\right)\left(
1-\int[u_n,u_n]dt
\right)^{-1},\label{nceq2}\\
 \tilde{\hat{\sigma}}_n&=\left(\frac{d}{dt}\hat{\sigma}_n-\hat{\sigma}_nu_n^\top\right)\left(
1-\int[u_n,u_n]dt
\right)^{-1},\label{nceq3}
\end{align}
\end{subequations}
where $A_n,\,B_n,\,C_n,\,D_n$ are given by
\begin{align}\label{abcd}
\left(\begin{array}{cc}
A_n&C_n\\
B_n&D_n
\end{array}
\right)=\left(\begin{array}{cc}
\int[\tilde{\sigma}_n,\sigma_n]dt&\int (\tilde{\sigma}_n\hat{\sigma}_n^\top-\sigma_n\tilde{\hat{\sigma}}_n^\top) dt\\
-\int(\hat{\sigma}_n\tilde{\sigma}_n^\top-\tilde{\hat{\sigma}}_n\sigma_n^\top)dt&\int [\tilde{\hat{\sigma}}_n,\hat{\sigma}_n] dt
\end{array}
\right)^{-1}.
\end{align}
\end{proposition}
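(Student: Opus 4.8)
The plan is to treat the three relations with different tools: \eqref{nceq2} and \eqref{nceq3} will come from rearranging the Grammian derivative formula of Theorem \ref{thm3.6}, whereas \eqref{nceq1} will come from the quasi-Pfaffian identity of Theorem \ref{thm2.3} combined with the derivative formula of Theorem \ref{thm3.4}. First I would set $i=2n$ in \eqref{gram2}, which expresses $\frac{d}{dt}\sigma_n$ as $\tilde\sigma_n\bigl(1-\Pf(\bullet,\boxed{d_0,d_0})\bigr)$ plus a term bilinear in $\sigma_n$ and $u_n$; solving for $\tilde\sigma_n$ and inverting the scalar factor on the right yields \eqref{nceq2}, with the transpose on $u_n$ supplied by Proposition \ref{lem1}. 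To turn the scalar factor into the integral displayed there, I integrate \eqref{gram3} in $t$: since $\frac{d}{dt}\Pf(\bullet,\boxed{d_0,d_0})=[u_n,u_n]$ and the diagonal quasi-Pfaffian is normalised to vanish at the base point, one gets $\Pf(\bullet,\boxed{d_0,d_0})=\int[u_n,u_n]dt$, hence the factor $1-\int[u_n,u_n]dt$. The choice $i=2n+1$, with $\hat\sigma_n$ and $\tilde{\hat{\sigma}}_n$ in place of $\sigma_n$ and $\tilde\sigma_n$, reproduces \eqref{nceq3} verbatim.

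For \eqref{nceq1} I would apply Theorem \ref{thm2.3} to $u_{n+1}=\Pf(0,\dots,2n+1,\boxed{d_1,d_0})$, taking the base set to be $\bullet=\{0,\dots,2n-1\}$ and the distinguished labels to be $a=2n$, $b=2n+1$, $c=d_1$, $d=d_0$ in \eqref{identity}. The leading term is $\Pf(\bullet,\boxed{d_1,d_0})=u_n$; the column vector collects $\Pf(\bullet,\boxed{2n,d_0})=\sigma_n$ and $\Pf(\bullet,\boxed{2n+1,d_0})=\hat\sigma_n$; and the row vector collects $\Pf(\bullet,\boxed{d_1,2n})$ and $\Pf(\bullet,\boxed{d_1,2n+1})$, which Proposition \ref{lem1} rewrites as $-\tilde\sigma_n^\top$ and $-\tilde{\hat{\sigma}}_n^\top$. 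These two sign flips absorb the overall minus sign in \eqref{identity} and produce the transposed prefactors $\tilde\sigma_n^\top,\tilde{\hat{\sigma}}_n^\top$ appearing in \eqref{nceq1}.

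It then remains to identify the inverse $2\times2$ block in \eqref{identity}, whose entries are the diagonal-type quasi-Pfaffians $\Pf(\bullet,\boxed{p,q})$ with $p,q\in\{2n,2n+1\}$, with the matrix of \eqref{abcd}. Each entry is obtained by integrating the Grammian formula \eqref{gram1}; for example $\frac{d}{dt}\Pf(\bullet,\boxed{2n,2n})=\tilde\sigma_n\sigma_n^\top-\sigma_n\tilde\sigma_n^\top=[\tilde\sigma_n,\sigma_n]$, so $\Pf(\bullet,\boxed{2n,2n})=\int[\tilde\sigma_n,\sigma_n]dt$, while the remaining entries integrate to $\int(\tilde\sigma_n\hat\sigma_n^\top-\sigma_n\tilde{\hat{\sigma}}_n^\top)dt$, $-\int(\hat\sigma_n\tilde\sigma_n^\top-\tilde{\hat{\sigma}}_n\sigma_n^\top)dt$ and $\int[\tilde{\hat{\sigma}}_n,\hat\sigma_n]dt$. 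This is exactly the matrix inverted in \eqref{abcd}, so the block occurring in \eqref{identity} equals $\left(\begin{smallmatrix}A_n&C_n\\B_n&D_n\end{smallmatrix}\right)$; multiplying the row vector, this block and the column vector then reproduces the bilinear expression of \eqref{nceq1}.

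The hard part will be the bookkeeping in the last two steps rather than any single computation: keeping every transpose and sign consistent when Proposition \ref{lem1} is used to flip the row-vector entries, matching the orientation of the $2\times2$ block against \eqref{abcd}, and checking that the integration constants arising from \eqref{gram1} and \eqref{gram3} can be set to zero so that the diagonal quasi-Pfaffians coincide exactly with the displayed time integrals. Once these are under control, all three relations follow by direct substitution.
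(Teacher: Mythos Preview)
Your proposal is correct and follows essentially the same route as the paper: the paper also derives \eqref{nceq1} by applying the quasi-Pfaffian identity \eqref{identity} after using \eqref{gram1} to write $\Pf(\bullet,\boxed{2n,2n})$, $\Pf(\bullet,\boxed{2n,2n+1})$, $\Pf(\bullet,\boxed{2n+1,2n+1})$ as the integrals in \eqref{abcd}, and it obtains \eqref{nceq2}--\eqref{nceq3} from \eqref{gram2} together with the identification $\Pf(\bullet,\boxed{d_0,d_0})=\int[u_n,u_n]\,dt$ via \eqref{gram3}. Your more explicit bookkeeping with Proposition~\ref{lem1} for the row-vector signs is a welcome addition, though note that Proposition~\ref{lem1} alone does not turn the $u_n$ coming from $\Pf(\bullet,\boxed{d_1,d_0})$ in \eqref{gram2} into $u_n^\top$; that step deserves a second look.
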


%In the commutative case, these variables could be written as 
%\begin{align*}
%u_n=r_n=\frac{\pt\tau_{2n}}{\tau_{2n}},\quad \sigma_n=\frac{\tau_{2n+1}}{\tau_{2n}},\quad \hat{\sigma}_n=\tilde{\sigma}_n=\frac{\pt\tau_{2n+1}}{\tau_{2n}},\quad \tilde{\hat{\sigma}}_n=\frac{\pt^2\tau_{2n+1}}{\tau_{2n}},\quad v_n=\frac{\tau_{2n-1}}{\tau_{2n}}.
%\end{align*}
\begin{proof}
From the Grammian type derivative formula \eqref{gram1}, we obtain
\begin{align*}
&\Pf(\bullet,\boxed{2n,2n})=\int [\tilde{\sigma}_n,\sigma_n]dt,\\
& \Pf(\bullet,\boxed{2n+1,2n+1})=\int [\tilde{\hat{\sigma}}_n,\hat{\sigma}_n]dt, \\
&\Pf(\bullet,\boxed{2n,2n+1})=\int \tilde{\sigma}_n\hat{\sigma}_n^\top-\sigma_n\tilde{\hat{\sigma}}_n^\top dt.
\end{align*}
Applying the quasi-Pfaffian identity \eqref{identity} yields the equation \eqref{nceq1}.
%This equation is equivalent to $D_t^2\tau_{2n+1}\cdot\tau_{2n+1}=2D_t\tau_{2n}\cdot\tau_{2n+2}$ in commutative case. 
Furthermore, the Grammian-type derivative formula \eqref{gram3} reveals that the variables $\tilde{\sigma}_n$ and $\tilde{\hat{\sigma}}_n$ can be expressed in terms of $u_n$, $\sigma_n$ and $\hat{\sigma}_n$.  By recognizing that
\begin{align}\label{d0d0}
\Pf(\bullet,\boxed{d_0,d_0})=\int[u_n,u_n]dt,
\end{align}
%(trivial equation in commutative case)
we could obtain \eqref{nceq2} and \eqref{nceq3} directly from \eqref{gram2}.
\end{proof}
A substitution of equations \eqref{nceq2} and \eqref{nceq3} into \eqref{nceq1} demonstrates that $\tilde{\sigma}_n$ and $\tilde{\hat{\sigma}}_n$ are auxiliary variables which can be eliminated. One expects to express $\hat{\sigma}_n$ in terms of $u_n$ and $\sigma_n$. 
The following proposition is a direct consequence of the Wronskian-type derivative formula \eqref{wronskian2} with variables
\begin{align*}
v_n=\Pf(\bullet,\boxed{c_{2n-1},d_0}),\quad r_n=\Pf(\bullet,\boxed{2n,c_{2n-1}}).
\end{align*}
\begin{proposition}
We have the following relations between variables
\begin{align}\label{hatsigma}
\hat{\sigma}_n=\frac{d}{dt}\sigma_n-r_n\sigma_n+\int [\sigma_n,\tilde{\sigma}_n]dt \cdot v_n,
\end{align}
where $r_n$ is expressed in terms of $u_n,\,\sigma_n$ and $v_n$ by
\begin{align}\label{nceq7}
\begin{aligned}
\frac{d}{dt} r_n&=-\tilde{\sigma}_nv_n^\top+\sigma_n\left(1-\int[u_n,u_n]dt\right)^{-1}\left(\frac{d}{dt} v_n^\top-u^\top_nv_n^\top\right)\\
&=\left(\sigma_nu_n^\top-\frac{d}{dt}\sigma_n\right)\left(
1-\int[u_n,u_n]dt
\right)^{-1}v_n^\top+\sigma_n\left(
1-\int[u_n,u_n]dt
\right)^{-1}\left(\frac{d}{dt} v_n^\top-u_n^\top v_n^\top\right),
\end{aligned}
\end{align}
while $v_n$ satisfies a recurrence relation
\begin{align}\label{nceq6}
v_{n+1}=(D_nr_n-B_n)\sigma_n-D_n\frac{d}{dt} \sigma_n+\int [\sigma_n,\tilde{\sigma}_n]dt\cdot v_n,
\end{align}
where $B_n$ and $D_n$ is given by \eqref{abcd}.
\end{proposition}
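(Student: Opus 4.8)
The plan is to obtain the three relations by specializing the derivative formulae of Section \ref{sec3} to the labels $i=2n$, $j=0$ and then eliminating the auxiliary quantities that appear; the only new ingredient needed beyond differentiation is the heredity/condensation identity \eqref{identity}, which drives the level-raising recurrence \eqref{nceq6}. First I would establish \eqref{hatsigma}. Setting $i=2n$, $j=0$ in the Wronskian-type formula \eqref{wronskian2} and reading off $\Pf(\bullet,\boxed{2n+1,d_0})=\hat{\sigma}_n$, $\Pf(\bullet,\boxed{c_{2n-1},d_0})=v_n$ and $\Pf(\bullet,\boxed{2n,c_{2n-1}})=r_n$ gives
\[\frac{d}{dt}\sigma_n=\hat{\sigma}_n-\Pf(\bullet,\boxed{2n,2n})\,v_n+r_n\,\sigma_n.\]
Solving for $\hat{\sigma}_n$ and inserting the value $\Pf(\bullet,\boxed{2n,2n})=\int[\tilde{\sigma}_n,\sigma_n]\,dt$ already computed via \eqref{gram1} in the proof of the preceding proposition yields \eqref{hatsigma}.

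Next, for \eqref{nceq7}, I would differentiate $r_n=\Pf(\bullet,\boxed{2n,c_{2n-1}})$ directly through the Grammian corollary \eqref{gram4} with $i=2n$, which produces $\frac{d}{dt}r_n=-\tilde{\sigma}_n v_n^\top+\sigma_n\,\Pf(\bullet,\boxed{c_{2n-1},d_1})^\top$. The first expression in \eqref{nceq7} then follows by eliminating the auxiliary quantity $\Pf(\bullet,\boxed{c_{2n-1},d_1})$ via \eqref{gram5}, using \eqref{d0d0} to identify $\Pf(\bullet,\boxed{d_0,d_0})=\int[u_n,u_n]\,dt$; the second expression follows from the first by substituting \eqref{nceq2} for $\tilde{\sigma}_n$.

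Finally, the recurrence \eqref{nceq6} is where the heredity principle enters. I would write $v_{n+1}=\Pf(0,\dots,2n+1,\boxed{c_{2n+1},d_0})$ and apply the quasi-Pfaffian identity \eqref{identity} with $\bullet=\{0,\dots,2n-1\}$, $a=2n$, $b=2n+1$, $c=c_{2n+1}$, $d=d_0$. The crucial observation is that the $2\times 2$ matrix to be inverted in \eqref{identity} is exactly the matrix of second-order quasi-Pfaffians $\Pf(\bullet,\boxed{2n,2n})$, $\Pf(\bullet,\boxed{2n,2n+1})$, etc., whose inverse defines the blocks $A_n,B_n,C_n,D_n$ in \eqref{abcd}. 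Evaluating the quasi-Pfaffians carrying the dual label $c_{2n+1}$ over the truncated index set $\bullet$ — all of which vanish except the boxed entry $\Pf(\bullet,\boxed{c_{2n+1},2n+1})=-1$ — collapses the identity to an expression for $v_{n+1}$ in terms of $B_n$, $D_n$, $\sigma_n$ and $\hat{\sigma}_n$; substituting \eqref{hatsigma} for $\hat{\sigma}_n$ produces \eqref{nceq6}.

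The main obstacle is this last step: correctly bookkeeping the dual labels $c_{2n-1}$ and $c_{2n+1}$ when they are evaluated against a truncated index set, and controlling all the transposes and signs generated by the anti-involution $\top$ — for instance the relation \eqref{rel1} must be invoked to pass between $\Pf(\bullet,\boxed{\cdot,d_j})$ and $\Pf(\bullet,\boxed{d_j,\cdot})$. The differentiations leading to \eqref{hatsigma} and \eqref{nceq7} are routine once the derivative formulae are in hand, but verifying that the $c_{2n+1}$-labeled quasi-Pfaffians vanish as claimed, and hence that the condensation identity \eqref{identity} reproduces precisely the inverse blocks appearing in \eqref{abcd}, is the delicate part of the argument.
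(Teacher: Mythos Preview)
Your proposal is correct and follows essentially the same route as the paper: equation \eqref{hatsigma} comes from specializing \eqref{wronskian2} at $i=2n$, $j=0$; equation \eqref{nceq7} from \eqref{gram4} with $i=2n$ followed by elimination of $\Pf(\bullet,\boxed{c_{2n-1},d_1})$ via \eqref{gram5}; and the recurrence \eqref{nceq6} from the condensation identity \eqref{identity} applied to $\Pf(\bullet,2n,2n+1,\boxed{c_{2n+1},d_0})$, yielding $v_{n+1}=-B_n\sigma_n-D_n\hat{\sigma}_n$, into which \eqref{hatsigma} is then substituted. Your observation that the $c_{2n+1}$-row over the truncated index set has all entries zero except $\Pf(\bullet,\boxed{c_{2n+1},2n+1})=-1$ is precisely the mechanism that collapses the identity to the $(0,1)$-row combination the paper writes down.
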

\begin{proof}
First, it is obvious that equation \eqref{hatsigma} is directly obtained by \eqref{wronskian2}. Then following \eqref{gram5}, we get 
\begin{align*}
\Pf(\bullet,\boxed{c_{2n-1},d_1})=\left(\frac{d}{dt} v_n-v_nu_n\right)\left(1-\int [u_n,u_n]dt\right)^{-1},
\end{align*}
from which we can express $r_n$ by using \eqref{gram4}. Moreover, by applying the quasi-Pfaffian identity to $\Pf(\bullet,2n,2n+1,\boxed{c_{2n+1},d_0})$, we get %This is a variable transformation in bilinear equation.
\begin{align*}
v_{n+1}=-(0,1)\left(\begin{array}{cc}
A_n&C_n\\
B_n&D_n
\end{array}
\right)\left(
\sigma_n \atop \hat{\sigma}_n
\right)=-B_n\sigma_n-D_n\hat{\sigma}_n.
\end{align*}
Substitution of $\hat{\sigma}_n$ into the above equation results in the desired formula.
\end{proof}
 %This formula is equivalent to $D_t^2\tau_{2n}\cdot\tau_{2n}=D_t\tau_{2n-1}\cdot\tau_{2n+1}$
Similar to the equation \eqref{hatsigma}, 
we have an alternative formula for $\tilde{\hat{\sigma}}_n$, namely, %trivial equation
\begin{align}\label{nceq4}
\tilde{\hat{\sigma}}_n=\frac{d}{dt} \tilde{\sigma}_n-r_n\tilde{\sigma}_n+\int[\sigma_n,\tilde{\sigma}_n]dt\cdot v_n.
\end{align}
Therefore, from the compatibility condition of equations \eqref{nceq3} and \eqref{nceq4}, we have the following equation.
\begin{proposition}
The compatibility condition yields the equation
\begin{align}\label{nceq5}
\begin{aligned}
&\left(
-\frac{d}{dt} r_n\sigma_n+\frac{d}{dt}\left(
\int[\sigma_n,\tilde{\sigma}_n]dt\cdot v_n
\right)+\sigma_n\frac{d}{dt} u_n^\top
\right)\left(
1-\int[u_n,u_n]dt
\right)^{-1}\\
&\quad=\left(\frac{d}{dt} \sigma_n-\sigma_nu_n^\top\right)\frac{d}{dt} \left(1-\int [u_n,u_n]dt\right)^{-1}+\int[\sigma_n,\tilde{\sigma}_n]dt\left(1+u_n^\top (1-\int[u_n,u_n]dt)^{-1}\right).
\end{aligned}
\end{align}
 \end{proposition}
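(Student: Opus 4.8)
The plan is to read \eqref{nceq5} as a genuine compatibility relation: the quantity $\tilde{\hat{\sigma}}_n=\Pf(\bullet,\boxed{2n+1,d_1})$ has been produced in two independent ways, once through the Grammian route \eqref{nceq3} (which carries it through $\hat{\sigma}_n$) and once through the Wronskian route \eqref{nceq4}. Since both evaluate the same quasi-Pfaffian, they must agree, and equating them after eliminating the auxiliary quantities $\hat{\sigma}_n$ and $\tilde{\sigma}_n$ is precisely the asserted identity. Concretely, I would start from the equality of \eqref{nceq3} and \eqref{nceq4},
\[
\Bigl(\tfrac{d}{dt}\hat{\sigma}_n-\hat{\sigma}_nu_n^\top\Bigr)\Bigl(1-\int[u_n,u_n]dt\Bigr)^{-1}
=\tfrac{d}{dt}\tilde{\sigma}_n-r_n\tilde{\sigma}_n+\int[\sigma_n,\tilde{\sigma}_n]dt\cdot v_n ,
\]
and reduce it to \eqref{nceq5} by substitution.

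First I would insert the closed forms of the auxiliary variables. Writing $W:=1-\int[u_n,u_n]dt$ and $J:=\int[\sigma_n,\tilde{\sigma}_n]dt$ for brevity, I would substitute $\hat{\sigma}_n=\frac{d}{dt}\sigma_n-r_n\sigma_n+Jv_n$ from \eqref{hatsigma} into the left-hand side and $\tilde{\sigma}_n=(\frac{d}{dt}\sigma_n-\sigma_nu_n^\top)W^{-1}$ from \eqref{nceq2} into the right-hand side. Differentiating these via the Leibniz rule $\frac{d}{dt}(fg)=\frac{df}{dt}g+f\frac{dg}{dt}$ together with $\frac{d}{dt}(W^{-1})=-W^{-1}(\frac{d}{dt}W)W^{-1}$, and scrupulously preserving the left-to-right order of every factor, produces second derivatives $\frac{d^2}{dt^2}\sigma_n$ on both sides.

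The decisive point is that all these troublesome terms cancel. After expansion, the left-hand side carries $\frac{d^2}{dt^2}\sigma_n\,W^{-1}$ from differentiating $\hat{\sigma}_n$, while the right-hand side carries an identical $\frac{d^2}{dt^2}\sigma_n\,W^{-1}$ from differentiating $\tilde{\sigma}_n$; likewise the blocks $\frac{d}{dt}\sigma_n\,u_n^\top W^{-1}$, $r_n\frac{d}{dt}\sigma_n\,W^{-1}$ and $r_n\sigma_n u_n^\top W^{-1}$ (the last two supplied on the right by the term $-r_n\tilde{\sigma}_n$) appear on the two sides with opposite signs and drop out. Collecting the surviving terms that are still right-multiplied by $W^{-1}$ onto the left and moving the $\frac{d}{dt}(W^{-1})$ contribution and the remaining $J$-terms to the right then delivers \eqref{nceq5}.

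The main obstacle is organisational rather than conceptual: because $\mathcal{R}$ is non-commutative, each factor must be held on its correct side throughout, and it is exactly the cancellation of the $\frac{d^2}{dt^2}\sigma_n\,W^{-1}$ terms that renders the final relation first order and hence a bona fide evolution equation. I would therefore carry out the expansion in a fixed left-to-right convention, track the $W^{-1}$-right-multiplied block separately from the $\frac{d}{dt}(W^{-1})$ block, and invoke \eqref{nceq2} once more only if a residual $\tilde{\sigma}_n$ needs to be rewritten. No ingredient beyond \eqref{nceq2}, \eqref{hatsigma}, the Leibniz rule, and the derivative of an inverse is required.
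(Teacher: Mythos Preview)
Your plan is correct and is exactly the approach the paper itself indicates: the proposition is stated in the paper without a detailed proof, merely as ``the compatibility condition of equations \eqref{nceq3} and \eqref{nceq4}'', and your proposal spells out precisely this computation --- equate the two expressions for $\tilde{\hat{\sigma}}_n$, substitute \eqref{hatsigma} and \eqref{nceq2}, and observe that the second-derivative and cross terms cancel. Your identification of the cancelling blocks $\frac{d^2}{dt^2}\sigma_n\,W^{-1}$, $\frac{d}{dt}\sigma_n\,u_n^\top W^{-1}$, $r_n\frac{d}{dt}\sigma_n\,W^{-1}$, and $r_n\sigma_n u_n^\top W^{-1}$ is accurate, and the residual terms are exactly those appearing in \eqref{nceq5}.
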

Therefore, the non-commutative B-Toda lattice is described by the closed system defined by  $u_n,\,\sigma_n,\,v_n$ and $r_n$, and equations \eqref{nceq1}, \eqref{nceq7}, \eqref{nceq6} and \eqref{nceq5}. In the commutative case, these variables can be expressed in terms of Pfaffian $\tau$-functions via relations
\begin{align*}
&\sigma_n=\frac{\tau_{2n+1}}{\tau_{2n}},\quad v_n=\frac{\tau_{2n-1}}{\tau_{2n}}, \quad u_n=r_n=-\frac{\frac{d}{dt} \tau_{2n+1}}{\tau_{2n}},\\
&\tilde{\sigma}_n=\hat{\sigma}_n=\frac{\frac{d}{dt} \tau_{2n+1}}{\tau_{2n}},\quad \tilde{\hat{\sigma}}_n=\frac{\frac{d^2}{dt^2}\tau_{2n+1}}{\tau_{2n}},
\end{align*}
and
\begin{align*}
\left(\begin{array}{cc}
A_n&C_n\\
B_n&D_n
\end{array}
\right)=\left(
\begin{array}{cc}
0&{\tau_{2n+2}}{\tau_{2n}^{-1}}\\
-{\tau_{2n+2}}{\tau_{2n}^{-1}}&0
\end{array}
\right)^{-1}=\left(
\begin{array}{cc}
0&-{\tau_{2n}}{\tau_{2n+2}^{-1}}\\
{\tau_{2n}}{\tau_{2n+2}^{-1}}&0
\end{array}
\right).
\end{align*}
Substituting these commutative expressions into the system, one finds that equations \eqref{nceq1} and \eqref{nceq7} reduce to equations \eqref{b-toda1} and \eqref{b-toda2} respectively, while equations \eqref{nceq6} and \eqref{nceq5} become trivial. The non-commutative equation is more complicated 
than the commutative one, as there are trivial equations needed to get a closed system.

\subsection{Lax pair of the system}
Orthogonal polynomial theory serves as a powerful tool for constructing Lax pairs for these systems. 
Notably, the Pfaff lattice hierarchy and the B-Toda lattice hierarchy have been derived from the theories of skew-orthogonal polynomials \cite{adler99} and partial-skew-orthogonal polynomials \cite{chang18}, respectively. Non-commutative integrable systems and matrix-valued orthogonal polynomials has also been extensively studied \cite{gilson25,li242,li252,doliwa25}. 
In this paper, we construct the Lax pair of non-commutative B-Toda lattice by introducing $\mathcal{R}$-valued partial-skew-orthogonal polynomials. 

Let us define an $\mathcal{R}$-valued skew-symmetric inner product
\begin{align}\label{sip}
\langle x^i,y^j\rangle=\int_{\mathbb{R}_+^2}\frac{x-y}{x+y}x^iy^jW^\top(x)W(y)dxdy
\end{align}
with an $\mathcal{R}$-valued weight function $W:\mathbb{R}_+\to\mathcal{R}$. Clearly, defining $a_{i,j}=\langle x^i,y^j\rangle$ implies that $a_{i,j}^\top=-a_{j,i}$. Furthermore, introducing the time-deformed weight function $W(x;t)=\exp(xt)W(x)$ leads to the time evolutions \eqref{time} for $a_{i,j}(t)$, where $\phi_i(t)=\int_{\mathbb{R}_+}x^iW(x)\exp(xt)dx$.

\begin{definition}
With the skew symmetric inner product \eqref{sip}, we can define a family of monic polynomials with even order $\{P_{2n}(x;t)\}_{n\in\mathbb{N}}$ by skew-orthogonality
\begin{align}\label{sop1}
\langle P_{2n}(x;t),x^i\rangle=0,\quad i=0,\cdots,2n-1.
\end{align}
\end{definition}
This family of polynomials is well defined. Assuming 
\begin{align*}
P_{2n}(x;t)=x^{2n}+\xi_{2n,2n-1}x^{2n-1}+\cdots+\xi_{2n,0},\quad \xi_{2n,i}\in\mathcal{R},
\end{align*}
the skew-orthogonality condition \eqref{sop1} yields the linear system
\begin{align*}
a_{2n,i}+\xi_{2n,2n-1}a_{2n-1,i}+\cdots+\xi_{2n,0}a_{0,i}=0,\quad i=0,1,\cdots,2n-1.
\end{align*}
Then, solving the linear system with non-commutative coefficients leads to a quasi-Pfaffian expression for the coefficients, resulting in (see appendix)
\begin{align*}
\xi_{2n,i}=\Pf(0,\cdots,2n-1,\boxed{2n,c_i}).
\end{align*} 
To conclude, we have the following proposition for these even order polynomials.
\begin{proposition}
$\{P_{2n}(x;t)\}_{n\in\mathbb{N}}$ satisfying \eqref{sop1} have the following quasi-Pfaffian expressions
\begin{align*}
P_{2n}(x;t)=\Pf(0,\cdots,2n-1,\boxed{2n,x})=\left|
\begin{array}{cccc}
a_{0,0}&\cdots&a_{0,2n-1}&1\\
\vdots&&\vdots&\vdots\\
a_{2n-1,0}&\cdots&a_{2n-1,2n-1}&x^{2n-1}\\
a_{2n,0}&\cdots&a_{2n,2n-1}&\boxed{x^{2n}}
\end{array}
\right|.
\end{align*}
\end{proposition}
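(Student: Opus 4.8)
The plan is to show that the proposed quasi-determinant, call it $\tilde{P}_{2n}(x):=\Pf(0,\cdots,2n-1,\boxed{2n,x})$, is a monic $\mathcal{R}$-valued polynomial of degree $2n$ satisfying the defining skew-orthogonality \eqref{sop1}, and then to invoke uniqueness of the skew-orthogonal family. First I would expand $\tilde{P}_{2n}$ about its boxed entry $x^{2n}$ using the standard quasi-determinant formula \eqref{expand}: writing $A=(a_{i,j})_{i,j=0}^{2n-1}$, which is invertible by the standing assumption that the even principal minors are invertible, one obtains
\begin{align*}
\tilde{P}_{2n}(x)=x^{2n}-\sum_{j,k=0}^{2n-1}a_{2n,j}\,(A^{-1})_{jk}\,x^{k}.
\end{align*}
Since $x$ is a central indeterminate, this is manifestly monic of degree $2n$, with left coefficient of $x^{k}$ equal to $\xi_{2n,k}=-\sum_{j}a_{2n,j}(A^{-1})_{jk}$, in agreement with the coefficient formula $\xi_{2n,k}=\Pf(0,\cdots,2n-1,\boxed{2n,c_k})$ recorded just above the proposition.

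Next I would verify \eqref{sop1}. Extending the inner product $\mathcal{R}$-bilinearly so that left coefficients emerge on the left and $\langle x^{k},x^{i}\rangle=a_{k,i}$, the pairing $\langle\tilde{P}_{2n},x^{i}\rangle$ is obtained from the quasi-determinant defining $\tilde{P}_{2n}$ simply by replacing its last column $(1,x,\cdots,x^{2n})^{\top}$ with $(a_{0,i},\cdots,a_{2n,i})^{\top}$, that is,
\begin{align*}
\langle\tilde{P}_{2n},x^{i}\rangle=\Pf(0,\cdots,2n-1,\boxed{2n,i}).
\end{align*}
For $i\in\{0,\cdots,2n-1\}$ the last column now coincides with the $i$-th column of the body of the quasi-Pfaffian, so the vanishing property \eqref{zerocondition} gives $\langle\tilde{P}_{2n},x^{i}\rangle=0$ at once; equivalently, one checks the cancellation by hand using $A^{-1}A=I$, which forces $\sum_{k}(A^{-1})_{jk}a_{k,i}=\delta_{ji}$ and hence $\sum_{j,k}a_{2n,j}(A^{-1})_{jk}a_{k,i}=a_{2n,i}$, cancelling the leading contribution $\langle x^{2n},x^{i}\rangle=a_{2n,i}$.

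Finally, since the monic polynomial obeying \eqref{sop1} is unique — the $2n$ conditions form the non-commutative linear system $\underline{\xi}\,A=-\underline{a}_{2n}$ for the left coefficients, whose matrix $A$ is invertible — I conclude $P_{2n}=\tilde{P}_{2n}$. Alternatively one may bypass uniqueness and assemble the result directly: by linearity of the quasi-determinant in its last column, $\Pf(0,\cdots,2n-1,\boxed{2n,x})=\sum_{k=0}^{2n}x^{k}\,\Pf(0,\cdots,2n-1,\boxed{2n,c_k})$, where the $k=2n$ term supplies the monic leading coefficient and the terms $k\le 2n-1$ reproduce $\xi_{2n,k}x^{k}$. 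I expect the only delicate point to be the non-commutative bookkeeping: one must track the left/right $\mathcal{R}$-module structure so that coefficients remain on the correct side when the polynomial variable is pulled through the inner product, and one must repeatedly invoke the invertibility of $A$ (guaranteed by the skew-symmetry $a_{i,j}^{\top}=-a_{j,i}$ together with the hypothesis on even principal minors), both for $\tilde{P}_{2n}$ to be well defined and for the uniqueness step.
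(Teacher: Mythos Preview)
Your proposal is correct and follows essentially the same route as the paper: the paper sets up the linear system $\sum_{k}\xi_{2n,k}a_{k,i}=-a_{2n,i}$ from \eqref{sop1}, solves it via the appendix (Theorem~\ref{linear}) to obtain $\xi_{2n,i}=\Pf(0,\cdots,2n-1,\boxed{2n,c_i})$, and then assembles the polynomial, which is exactly your ``alternative'' last-column linearity argument; your primary argument (verify the quasi-Pfaffian satisfies \eqref{sop1} via \eqref{zerocondition} and invoke uniqueness) is the same computation run in reverse.
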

In the commutative case, odd-order polynomials associated with a skew-symmetric inner product are not well-defined due to the singularity of odd-sized skew-symmetric matrices.
However, this singularity can be circumvented by modifying the skew-orthogonality. To construct the Lax pair for the non-commutative B-Toda lattice, we consider a family of odd-order polynomials $\{P_{2n+1}(x;t)\}_{n\in\mathbb{N}}$ satisfying
\begin{align}\label{sop2}
\langle P_{2n+1}(x;t),x^i\rangle=-\phi_i,\quad i=0,1,\cdots,2n+1.
\end{align}
By assuming that
\begin{align*}
P_{2n+1}(x;t)=\xi_{2n+1,2n+1}x^{2n+1}+\xi_{2n+1,2n}x^{2n}+\cdots+\xi_{2n+1,0},
\end{align*}
these coefficients could be solved as 
\begin{align*}
\xi_{2n+1,i}=\Pf(0,\cdots,2n+1,\boxed{d_0,c_i}).
\end{align*}
Moreover, we have the following proposition.
\begin{proposition}
$\{P_{2n+1}(x;t)\}_{n\in\mathbb{N}}$ satisfying \eqref{sop2} have the following quasi-Pfaffian expressions
\begin{align*}
P_{2n+1}(x;t)=\Pf(0,\cdots,2n+1,\boxed{d_0,x})=\left|\begin{array}{cccc}
a_{0,0}&\cdots&a_{0,2n+1}&1\\
\vdots&&\vdots&\vdots\\
a_{2n+1,0}&\cdots&a_{2n+1,2n+2}&x^{2n+1}\\
-\phi_0&\cdots&-\phi_{2n+1}&\boxed{0}
\end{array}
\right|.
\end{align*}
where $\Pf(d_0,x)=0$.
\end{proposition}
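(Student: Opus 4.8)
The plan is to show that the quasi-Pfaffian on the right-hand side, regarded as a function of $x$, is exactly the polynomial whose coefficients were already identified as $\xi_{2n+1,i}=\Pf(0,\dots,2n+1,\boxed{d_0,c_i})$, and then to confirm that this polynomial obeys the defining skew-orthogonality \eqref{sop2}. First I would apply the quasi-determinant expansion \eqref{expand} to $\Pf(0,\dots,2n+1,\boxed{d_0,x})$. Here the body is the even-sized (hence invertible) skew-symmetric block $A=(a_{k,l})_{k,l=0}^{2n+1}$, the last row is the moment row $(\Pf(d_0,0),\dots,\Pf(d_0,2n+1))=(-\phi_0,\dots,-\phi_{2n+1})$, the last column is $(\Pf(0,x),\dots,\Pf(2n+1,x))^\top=(1,x,\dots,x^{2n+1})^\top$, and the boxed entry is $\Pf(d_0,x)=0$. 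The expansion then reads
\begin{align*}
\Pf(0,\dots,2n+1,\boxed{d_0,x})=-\sum_{k,l=0}^{2n+1}\Pf(d_0,k)\,(A^{-1})_{k,l}\,x^{l}.
\end{align*}

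The key observation is that $A^{-1}$ and the moment row are independent of $x$, while the monomials $x^{l}$ are central scalars; hence the expansion is genuinely $\mathcal{R}$-linear in the last column. Writing that column as $\sum_{l}e_l\,x^{l}$ with $e_l$ the $l$-th unit vector, and using $\Pf(j,c_l)=\delta_{j,l}$ together with the boxed values $\Pf(d_0,x)=\Pf(d_0,c_l)=0$, the coefficient of $x^{l}$ is precisely the quasi-Pfaffian with last column $e_l$, so that
\begin{align*}
\Pf(0,\dots,2n+1,\boxed{d_0,x})=\sum_{l=0}^{2n+1}\Pf(0,\dots,2n+1,\boxed{d_0,c_l})\,x^{l}=\sum_{l=0}^{2n+1}\xi_{2n+1,l}\,x^{l}=P_{2n+1}(x;t).
\end{align*}
This is the claimed identity, and it is built directly on the coefficient formula $\xi_{2n+1,l}=\Pf(0,\dots,2n+1,\boxed{d_0,c_l})$ established before the proposition.

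Finally I would check that this polynomial satisfies \eqref{sop2}. Pairing the last column against $x^i$ and using the $\mathcal{R}$-linearity of the inner product replaces each monomial $x^k$ by $\langle x^k,x^i\rangle=a_{k,i}$; for $i\le 2n+1$ this column becomes the $i$-th column of $A$, so $A^{-1}$ collapses it via $A^{-1}A=I$ to a single unit vector and the moment row selects exactly the prescribed right-hand side of \eqref{sop2}. The main obstacle is purely one of non-commutative bookkeeping: one must justify that the quasi-determinant expansion is genuinely linear in the last column over $\mathcal{R}$ — which holds because neither $A^{-1}$ nor the moment row depends on the central indeterminate $x$ — and keep careful track of the boxed diagonal values $\Pf(d_0,x)=\Pf(d_0,c_l)=0$, since it is precisely their vanishing that legitimizes the superposition over $l$ and removes any spurious constant term.
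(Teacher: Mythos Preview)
Your proposal is correct and matches the paper's own implicit argument: the paper states just before the proposition that solving the linear system attached to \eqref{sop2} gives $\xi_{2n+1,i}=\Pf(0,\dots,2n+1,\boxed{d_0,c_i})$, and the proposition is then a direct consequence of expanding the quasi-Pfaffian $\Pf(0,\dots,2n+1,\boxed{d_0,x})$ via \eqref{expand} and using that the powers $x^{l}$ are central. Your final verification paragraph is a welcome sanity check, though strictly speaking it is already built into the paper's derivation of the $\xi_{2n+1,i}$ from the linear system.
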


The following theorem is a derivative formula for these polynomials.
\begin{theorem}\label{sopthm1}
There exists the following derivative formula
\begin{align}\label{t1}
\partial_t P_{2n}(x;t)+\alpha_n\partial_tP_{2n-1}(x;t)=\beta_nP_{2n-1}(x;t),
\end{align}
where 
\begin{align*}
\alpha_n=-\sigma_n\left(1-\int[u_n,u_n]dt\right)^{-1},\quad \beta_n=\tilde{\sigma}_n+\sigma_n\left(1-\int[u_n,u_n]dt\right)^{-1}u_n^\top.
\end{align*}
\end{theorem}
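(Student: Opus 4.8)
The plan is to compute $\partial_t P_{2n}$ and $\partial_t P_{2n-1}$ directly from their quasi-determinant expansions, in the same spirit as the proofs of Theorems \ref{thm3.4} and \ref{thm3.6}, and then to eliminate a common auxiliary vector. Writing $\chi(x)=(1,x,\dots,x^{2n-1})^\top$ for the ($t$-independent) monomial column and $A=(a_{i,j})_{i,j=0}^{2n-1}$, the expansion \eqref{expand} gives $P_{2n}(x;t)=x^{2n}-a_{2n}^{0\to 2n-1}A^{-1}\chi(x)$ and $P_{2n-1}(x;t)=\Phi_0 A^{-1}\chi(x)$, where $\Phi_j$ denotes the row $(\phi_j,\dots,\phi_{j+2n-1})$. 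From \eqref{time} I record the time evolutions $\frac{d}{dt}A=\Phi_1^\top\Phi_0-\Phi_0^\top\Phi_1$, $\frac{d}{dt}a_{2n}^{0\to 2n-1}=\phi_{2n+1}^\top\Phi_0-\phi_{2n}^\top\Phi_1$ and $\frac{d}{dt}\Phi_0=\Phi_1$, together with the identifications $\sigma_n=\phi_{2n}^\top-a_{2n}^{0\to2n-1}A^{-1}\Phi_0^\top$, $\tilde\sigma_n=\phi_{2n+1}^\top-a_{2n}^{0\to2n-1}A^{-1}\Phi_1^\top$ and $u_n=\Phi_1A^{-1}\Phi_0^\top$.

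First I would differentiate $P_{2n}$. Inserting $\frac{d}{dt}A^{-1}=-A^{-1}\big(\frac{d}{dt}A\big)A^{-1}$ and regrouping the resulting terms, the bare contributions $\phi_{2n}^\top$ and $\phi_{2n+1}^\top$ cancel against $a_{2n}^{0\to2n-1}A^{-1}\Phi_0^\top$ and $a_{2n}^{0\to2n-1}A^{-1}\Phi_1^\top$, leaving the compact relation $\frac{d}{dt}P_{2n}=-\tilde\sigma_n P_{2n-1}+\sigma_n(\Phi_1A^{-1}\chi)$. The auxiliary vector $\Phi_1A^{-1}\chi$ is then eliminated by differentiating $P_{2n-1}=\Phi_0A^{-1}\chi$: using $\frac{d}{dt}\Phi_0=\Phi_1$, the anti-involution identity $\Phi_0A^{-1}\Phi_1^\top=-u_n^\top$ (a consequence of $(A^{-1})^\top=-A^{-1}$), and the identification $\Phi_0A^{-1}\Phi_0^\top=\Pf(\bullet,\boxed{d_0,d_0})=\int[u_n,u_n]dt$ from \eqref{d0d0}, one obtains $\Phi_1A^{-1}\chi$ as the left multiple of $\frac{d}{dt}P_{2n-1}-u_n^\top P_{2n-1}$ by $\big(1-\int[u_n,u_n]dt\big)^{-1}$. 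Substituting back into the expression for $\frac{d}{dt}P_{2n}$ and collecting the coefficients of $\frac{d}{dt}P_{2n-1}$ and of $P_{2n-1}$ produces the relation \eqref{t1}, from which $\alpha_n$ and $\beta_n$ can be read off.

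I expect the main obstacle to be the non-commutative bookkeeping rather than any conceptual difficulty. Every $\mathcal{R}$-valued coefficient (such as $u_n^\top$ or the factor $1-\int[u_n,u_n]dt$) multiplies from the left, so the order of the products and the placement of the inverse $\big(1-\int[u_n,u_n]dt\big)^{-1}$ must be tracked carefully; the rule $(A^{-1})^\top=-A^{-1}$ governs each sign in the reduction and is the easiest point at which to slip. One must also assume the invertibility of $1-\int[u_n,u_n]dt$ in $\mathcal{R}$, the same hypothesis underlying \eqref{nceq2}. Indeed, applying \eqref{gram2} to $\sigma_n=\Pf(\bullet,\boxed{2n,d_0})$ reproduces $\tilde\sigma_n=\big(\frac{d}{dt}\sigma_n-\sigma_nu_n\big)\big(1-\int[u_n,u_n]dt\big)^{-1}$, which furnishes a convenient consistency check on the coefficients appearing in $\alpha_n$ and $\beta_n$.
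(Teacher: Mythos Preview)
Your plan is essentially the paper's own argument: the paper packages the two Grammian derivative computations you describe as a lemma, obtaining $\partial_tP_{2n}=\tilde\sigma_nP_{2n-1}-\sigma_n\tilde P_{2n-1}$ and $\partial_tP_{2n-1}=-u_n^\top P_{2n-1}-(1-\int[u_n,u_n]\,dt)\tilde P_{2n-1}$ with the auxiliary polynomial $\tilde P_{2n-1}:=\Pf(\bullet,\boxed{d_1,x})=\Phi_1A^{-1}\chi$, and then eliminates $\tilde P_{2n-1}$ exactly as you propose. One caution: the signs in your sketched intermediate formulae (e.g.\ the sign on $\tilde\sigma_n$ in $\partial_tP_{2n}$, and $1-\int[u_n,u_n]\,dt$ versus $1+\int[u_n,u_n]\,dt$) do not quite reassemble into the stated $\beta_n$, so when you write it out in full be careful with the regrouping---this is precisely the non-commutative bookkeeping you already flagged as the main hazard.
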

The proof of this theorem employs the following lemma and a family of auxiliary polynomials.
\begin{lemma}
For polynomials $\{P_{2n}(x;t)\}_{n\in\mathbb{N}}$, we have the following derivative formula \begin{align}
\partial_t P_{2n}(x;t)=\tilde{\sigma}_{n}P_{2n-1}(x;t)-\sigma_{n}\tilde{P}_{2n-1}(x;t),\label{sopd1}
\end{align}
where $\tilde{P}_{2n-1}(x;t)=\Pf(0,\cdots,2n-1,\boxed{d_1,x})$. Moreover, we have
\begin{align}\label{sopd2}
\partial_t P_{2n-1}(x;t)=-u_n^\top P_{2n-1}(x;t)-\left(1-\int[u_n,u_n]dt\right)\tilde{P}_{2n-1}(x;t).
\end{align}
\end{lemma}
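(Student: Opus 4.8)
The plan is to derive both derivative formulae \eqref{sopd1} and \eqref{sopd2} by differentiating the quasi-Pfaffian expressions for $P_{2n}(x;t)$ and $P_{2n+1}(x;t)$ directly, following the same bordered-quasideterminant mechanics used in the proofs of Theorems \ref{thm3.2} and \ref{thm3.6}. Recall that $P_{2n}(x;t)=\Pf(\bullet,\boxed{2n,x})$ with $\bullet=\{0,\dots,2n-1\}$, where the last column carries the monomial data $(1,x,\dots,x^{2n})^\top$ rather than moment functions. Writing this out as $P_{2n}=x^{2n}-a_{2n}^{0\to 2n-1}A^{-1}\underline{X}$ with $\underline{X}=(1,x,\dots,x^{2n-1})^\top$ and $A=(a_{ij})_{i,j=0}^{2n-1}$, I would differentiate in $t$. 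The bottom-row entries $x^i$ are $t$-independent, so $\underline{X}$ does not contribute a derivative term; only $a_{2n}^{0\to2n-1}$ and $A$ evolve, via the Grammian rules $\frac{d}{dt}a_{ij}=\phi_{i+1}^\top\phi_j-\phi_i^\top\phi_{j+1}$ and $\frac{d}{dt}a_{2n,j}=\phi_{2n+1}^\top\phi_j-\phi_{2n}^\top\phi_{j+1}$.

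First I would expand $\partial_t P_{2n}$ exactly as in the proof of Theorem \ref{thm3.4}: the product rule produces a $-\frac{d}{dt}a_{2n}^{0\to2n-1}\cdot A^{-1}\underline X$ term and a $+a_{2n}^{0\to2n-1}A^{-1}(\frac{d}{dt}A)A^{-1}\underline X$ term, and each derivative factorizes through the vectors $\Phi_0^\top,\Phi_1^\top$ (the columns of moment values $\phi_j^\top,\dots,\phi_{j+2n-1}^\top$ for $j=0,1$). Using \eqref{pfd0}, i.e. the vanishing property \eqref{zerocondition} applied to the bordered quasideterminants, each bordered block collapses into a recognizable quasi-Pfaffian. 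The key recognitions are that the borders built from $\Phi_1^\top$ and the $x$-column assemble into $P_{2n-1}(x;t)=\Pf(\bullet',\boxed{d_1,x})$-type objects after relabeling, with the prefactors $\tilde\sigma_n=\Pf(\bullet,\boxed{2n,d_1})$ and $\sigma_n=\Pf(\bullet,\boxed{2n,d_0})$ emerging as the scalar quasi-Pfaffian coefficients. Matching terms against the definitions of $P_{2n-1}(x;t)$ and $\tilde P_{2n-1}(x;t)=\Pf(\bullet,\boxed{d_1,x})$ should yield \eqref{sopd1} after an application of the transpose relation \eqref{rel1} to align the indices inside the boxes.

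For \eqref{sopd2} I would repeat the computation starting from the odd-order polynomial $P_{2n-1}(x;t)=\Pf(\bullet',\boxed{d_0,x})$, whose structure mirrors the $\Pf(\bullet,\boxed{i,d_0})$ object differentiated in Theorem \ref{thm3.6}. The same product-rule expansion, now with the bordered moment vector $\Phi_0^\top$ in the final column, produces a term proportional to $P_{2n-1}$ itself (giving the $-u_n^\top P_{2n-1}$ contribution, with $u_n=\Pf(\bullet,\boxed{d_1,d_0})$) and a term proportional to $\tilde P_{2n-1}$ weighted by the factor $1-\Pf(\bullet,\boxed{d_0,d_0})=1-\int[u_n,u_n]dt$, exactly as the $(1-\Pf(\bullet,\boxed{d_0,d_0}))$ factor appears in \eqref{gram2} and is identified in \eqref{d0d0}.

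The main obstacle will be bookkeeping of the non-commutative ordering: in contrast to the commutative Pfaffian manipulations, the scalar-looking quasi-Pfaffian coefficients $\sigma_n,\tilde\sigma_n,u_n$ are elements of $\mathcal R$ and must stay on the correct side of the polynomial factors $P_{2n-1},\tilde P_{2n-1}$, and the anti-involution $\top$ must be threaded through each bordered block so that the final boxed indices read in the intended order. Getting the placement of $u_n^\top$ (rather than $u_n$) and the left-multiplication by $(1-\int[u_n,u_n]dt)$ correct is precisely where the relation \eqref{rel1} and the structure of \eqref{gram2}, \eqref{gram3} must be invoked carefully; the rest of the argument is the routine bordered-quasideterminant expansion already exercised in the proofs above.
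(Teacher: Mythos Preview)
Your proposal is correct and follows essentially the same route as the paper: write $P_{2n}=x^{2n}-a_{2n}^{0\to 2n-1}A^{-1}\chi_{2n-1}(x)$ and $P_{2n-1}=\Phi_0 A^{-1}\chi_{2n-1}(x)$, differentiate using the Grammian rules $\frac{d}{dt}A=\Phi_1^\top\Phi_0-\Phi_0^\top\Phi_1$ and $\frac{d}{dt}a_{2n}^{0\to2n-1}=\phi_{2n+1}^\top\Phi_0-\phi_{2n}^\top\Phi_1$, and recognise the resulting bordered expressions as $\tilde\sigma_n,\sigma_n,u_n$ and $P_{2n-1},\tilde P_{2n-1}$. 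One small simplification: because the Grammian derivative of $A$ is already a rank-two outer product, the factorisation into quasi-Pfaffians is immediate and you do not actually need the sum-over-$k$ collapse via \eqref{pfd0}/\eqref{zerocondition} that was required in the Wronskian-type proofs; likewise \eqref{rel1} is only needed at the very end to identify $\Pf(\bullet,\boxed{d_0,d_1})$ with $u_n^\top$ in \eqref{sopd2}, not for \eqref{sopd1}.
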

\begin{proof}
This proof is based on the derivative formula for Grammian type elements. 
By introducing the vector $\chi_{2n-1}(x)=(1,x,\cdots,x^{2n-1})^\top$, we have the derivative formula
\begin{align*}
\pt P_{2n}(x;t)&=\pt\left(
x^{2n}-a_{2n}^{0\to 2n-1}A^{-1}\chi_{2n-1}(x)
\right)\\
&=-(\phi_{2n+1}^\top\Phi_0-\phi_{2n}^\top\Phi_1)A^{-1}\chi_{2n-1}(x)+a_{2n}^{0\to 2n-1}A^{-1}(\Phi_1^\top\Phi_0-\Phi_0^\top\Phi_1)A^{-1}\chi_{2n-1}(x)\\
&=\Pf(0,\cdots,2n-1,\boxed{2n,d_1})P_{2n-1}(x;t)-\Pf(0,\cdots,2n-1,\boxed{2n,d_0})\tilde{P}_{2n-1}(x;t),
\end{align*}
which is exactly the equation \eqref{sopd1}. Moreover, if we consider the derivative of $P_{2n-1}(x;t)$, we have
\begin{align*}
\pt P_{2n-1}(x;t)&=\pt\left(
\Phi_0A^{-1}\chi_{2n-1}(x)
\right)\\
&=\Phi_1A^{-1}\chi_{2n-1}(x)-\Phi_0A^{-1}(\Phi_1^\top\Phi_0-\Phi_0^\top\Phi_1)A^{-1}\chi_{2n-1}(x)
\\
&=-\Pf(0,\cdots,2n-1,\boxed{d_0,d_1})P_{2n-1}(x;t)-(1-\Pf(0,\cdots,2n-1,\boxed{d_0,d_0}))\tilde{P}_{2n-1}(x;t).
\end{align*}
An application of equation \eqref{d0d0} results in the formula.
\end{proof}

On the other hand, we have the following spectral problems, as generalizations of the equations \cite[Equation 3.7]{adler99} and \cite[Equation 3.15]{li24}.
\begin{theorem}\label{sopthm2}
We have the spectral problems
\begin{subequations}
\begin{align}
(\pt+x)P_{2n}(x;t)&=a_n^{-1}P_{2n+1}(x;t)+(r_n-a_n^{-1}b_n)P_{2n}(x;t)\label{spec1}\\
&-(a_n^{-1}+\int[\sigma_n,\tilde{\sigma}_n]dt D_{n-1}a_{n-1}^{-1})P_{2n-1}(x;t)\nonumber\\
&+\int [\sigma_n,\tilde{\sigma}_n]dt (B_{n-1}-D_{n-1}a_{n-1}^{-1}b_{n-1})P_{2n-2}(x;t)\nonumber\\
&-\int[\sigma_n,\tilde{\sigma}_n]dt D_{n-1}a_{n-1}^{-1} P_{2n-3}(x;t),\nonumber\\
(\pt+x)P_{2n-1}(x;t)&=v_n P_{2n}(x;t)+\sigma_nD_{n-1}a_{n-1}^{-1}P_{2n-1}(x;t)\label{spec2}\\
&+\sigma_n(B_{n-1}-D_{n-1}a_{n-1}^{-1}b_{n-1})P_{2n-2}(x;t)-\sigma_n D_{n-1}a_{n-1}^{-1}P_{2n-3}(x;t),\nonumber
\end{align}
\end{subequations}
where
\begin{align*}
a_n=\sigma_n^\top C_n+\hat{\sigma}_n^\top D_n,\quad b_n=\sigma_n^\top A_n+\hat{\sigma}_n^\top B_n.
\end{align*}
\end{theorem}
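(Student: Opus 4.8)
The plan is to derive both spectral problems by combining the $t$-derivative formulae \eqref{sopd1}--\eqref{sopd2} with a multiplication-by-$x$ recurrence, and then to fix the expansion coefficients through the skew-orthogonality \eqref{sop1}--\eqref{sop2}. The starting point is a compatibility identity for the inner product \eqref{sip} under the time-deformed weight: since $\frac{d}{dt}[e^{xt}e^{yt}]=(x+y)e^{xt}e^{yt}$ and $\frac{x-y}{x+y}(x+y)=x-y$, one has for any polynomial $f$
\begin{align*}
\frac{d}{dt}\langle f,x^i\rangle=\langle\pt f,x^i\rangle+\langle xf,x^i\rangle+\langle f,x^{i+1}\rangle,\qquad\text{hence}\qquad\langle(\pt+x)f,x^i\rangle=\frac{d}{dt}\langle f,x^i\rangle-\langle f,x^{i+1}\rangle.
\end{align*}
Applying this to $f=P_{2n}$ and $f=P_{2n-1}$, and using \eqref{sop1}, \eqref{sop2} together with $\frac{d}{dt}\phi_i=\phi_{i+1}$, shows at once that both $(\pt+x)P_{2n}$ and $(\pt+x)P_{2n-1}$ are skew-orthogonal to $x^0,\dots,x^{2n-2}$.

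Because $(\pt+x)P_{2n}$ has degree $2n+1$ and $(\pt+x)P_{2n-1}$ has degree $2n$, and both are skew-orthogonal to $x^0,\dots,x^{2n-2}$, I would expand them in the basis $\{P_m\}$ and argue—as in the commutative partial-skew-orthogonal-polynomial theory \cite{adler99,chang18,li24}—that the support is confined to the band $P_{2n+1},P_{2n},P_{2n-1},P_{2n-2},P_{2n-3}$ (respectively $P_{2n},\dots,P_{2n-3}$). The leading coefficient is fixed by degree matching: the coefficient $a_n^{-1}$ in front of $P_{2n+1}$ amounts to identifying $a_n$ with the leading coefficient $\xi_{2n+1,2n+1}=\Pf(0,\dots,2n+1,\boxed{d_0,c_{2n+1}})$ of $P_{2n+1}$. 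This identification follows from the quasi-Pfaffian identity \eqref{identity}: peeling off the top $2\times2$ block $(2n,2n+1)$ produces precisely the inverse matrix in \eqref{abcd}, and, using $\Pf(\bullet,\boxed{d_0,2n})=-\sigma_n^\top$, $\Pf(\bullet,\boxed{d_0,2n+1})=-\hat{\sigma}_n^\top$ (via \eqref{rel1}) and $\Pf(\bullet,\boxed{2n+1,c_{2n+1}})=1$, it collapses to $a_n=\sigma_n^\top C_n+\hat{\sigma}_n^\top D_n$. The top coefficient of \eqref{spec2} is obtained the same way.

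The remaining coefficients I would pin down by substituting the explicit derivative formula $\pt P_{2n}=\tilde{\sigma}_nP_{2n-1}-\sigma_n\tilde{P}_{2n-1}$ from \eqref{sopd1} (and its analogue \eqref{sopd2}) into $(\pt+x)P_{2n}$, so that the task reduces to expanding $xP_{2n}$ and the auxiliary polynomial $\tilde{P}_{2n-1}=\Pf(0,\dots,2n-1,\boxed{d_1,x})$ in the $\{P_m\}$ basis. Reading off the subleading terms against the test functionals at levels $2n-1,2n-2,2n-3$ produces the quantities $r_n-a_n^{-1}b_n$, $D_{n-1}a_{n-1}^{-1}$, $B_{n-1}-D_{n-1}a_{n-1}^{-1}b_{n-1}$, and so on; the appearance of the level-$(n-1)$ data is exactly the statement that the low-degree part of $\tilde{P}_{2n-1}$ (equivalently of $xP_{2n}$) must be re-expanded using the spectral relation one step down, i.e. the inverse of the level-$(n-1)$ recurrence encoded by $a_{n-1}^{-1}$ and $D_{n-1}$ from \eqref{abcd}. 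The recurrences \eqref{hatsigma} for $\hat{\sigma}_n$ and \eqref{nceq6} for $v_{n+1}$, together with the definitions of $a_n$ and $b_n$, are then used to rewrite everything into the stated closed form.

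The main obstacle is the lower-order coefficient bookkeeping. Two features make it delicate in the non-commutative setting. First, the odd polynomials are normalized by the moment functions, $\langle P_{2k+1},x^i\rangle=-\phi_i$, rather than by vanishing, so the expansion is not governed by a clean triangular system and the vanishing of the low coefficients must instead be extracted from the quasi-Pfaffian antisymmetry \eqref{rel1}. Second, every coefficient has to be carried together with its anti-involution $\top$ and kept on the correct side, since quantities such as $v_n$ and $\xi_{2n-1,2n-1}$ differ precisely by $-(\cdot)^\top$. Verifying that these transposes and the level-$(n-1)$ inverses assemble into exactly the combinations written in \eqref{spec1}--\eqref{spec2} is the technical heart of the argument; by comparison the degree count and the orthogonality range established above are routine.
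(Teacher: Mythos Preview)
Your inner-product compatibility identity and the identification of the leading coefficient $a_n=\xi_{2n+1,2n+1}$ via the quasi-Pfaffian identity \eqref{identity} are both correct. However, the route you sketch for the remaining coefficients does not close, and it is also different from what the paper actually does.

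The paper never uses the skew-orthogonality relations to pin down the recurrence. It works directly on the quasi-Pfaffian expression of $P_{2n}$ with the \emph{Wronskian}-type derivative (index shift $a_{i,j}\mapsto a_{i+1,j}+a_{i,j+1}$, not the Grammian formula \eqref{sopd1} you invoke). That computation, which is Lemma~\ref{soplem1}, produces in one stroke
\[
(\pt+x)P_{2n}=Q_{2n+1}+r_nP_{2n}-\int[\sigma_n,\tilde\sigma_n]\,dt\;\tilde Q_{2n-1},
\]
where the two auxiliary polynomials $Q_{2n+1}=\Pf(0,\dots,2n-1,\boxed{2n+1,x})$ and $\tilde Q_{2n-1}=\Pf(0,\dots,2n-1,\boxed{c_{2n-1},x})$ arise precisely from the row/column shifts that the Wronskian derivative generates. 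A second, purely algebraic step (Proposition~\ref{prop1}) applies \eqref{identity} to $\Pf(\bullet,2n,2n+1,\boxed{d_0,x})$ and $\Pf(\bullet,2n,2n+1,\boxed{c_{2n+1},x})$ to re-express $Q_{2n+1}$ and $\tilde Q_{2n\pm1}$ as explicit $\mathcal R$-linear combinations of $P_{2n+1},P_{2n},P_{2n-1}$; substituting back yields \eqref{spec1}. Lemma~\ref{soplem2} does the same for $P_{2n-1}$ and gives \eqref{spec2}.

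The gap in your plan is twofold. First, the orthogonality you establish, $\langle(\pt+x)P_{2n},x^i\rangle=0$ for $i\le 2n-2$, does \emph{not} confine the expansion in $\{P_m\}$ to the band $P_{2n-3},\dots,P_{2n+1}$: because the odd polynomials are normalized by $\langle P_{2k+1},x^i\rangle=-\phi_i$ (nonzero for every $i\le 2k+1$), the linear system $\big\langle\sum_m c_mP_m,\,x^i\big\rangle=0$ is not block-triangular in the $c_m$, and one cannot read off vanishing of the low-order coefficients from it. You acknowledge this but do not give a replacement mechanism. Second, your fallback---substitute \eqref{sopd1} and then ``expand $xP_{2n}$ in the $\{P_m\}$ basis''---is circular: that expansion \emph{is} the spectral recurrence you are trying to prove. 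What is missing is exactly the index-shifted object $Q_{2n+1}$; it comes from the Wronskian side of the derivative structure, not the Grammian side, and without it there is no independent handle on $xP_{2n}$.
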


We prove this theorem by making use of the following facts. 
\begin{lemma}\label{soplem1}
We claimed that 
\begin{align*}
(\pt+x)P_{2n}(x;t)=Q_{2n+1}(x;t)+r_nP_{2n}(x;t)-\int[\sigma_n,\tilde{\sigma}_n]dt\cdot \tilde{Q}_{2n-1}(x;t),
\end{align*}
where 
\begin{align*}
Q_{2n+1}(x;t)=\Pf(0,\cdots,2n-1,\boxed{2n+1,x}),\quad \tilde{Q}_{2n-1}(x;t)=\Pf(0,\cdots,2n-1,\boxed{c_{2n-1},x}).
\end{align*}
In the above notations, we require that $\Pf(c_{2n-1},x)=0$. 
\end{lemma}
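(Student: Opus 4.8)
The plan is to evaluate $(\pt+x)P_{2n}$ by writing the polynomial through its quasi-determinant, $P_{2n}(x;t)=x^{2n}-a_{2n}^{0\to2n-1}A^{-1}\chi_{2n-1}(x)$ with $A=(a_{ij})_{i,j=0}^{2n-1}$ and $\chi_{2n-1}(x)=(1,x,\dots,x^{2n-1})^\top$, and then treating the spectral multiplication and the time derivative on a common footing. For the spectral part, since $x$ is central it merely rescales the boxed column, so $xP_{2n}=x^{2n+1}-a_{2n}^{0\to2n-1}A^{-1}\bigl(x\chi_{2n-1}(x)\bigr)$; the decomposition $x\chi_{2n-1}(x)=S\chi_{2n-1}(x)+x^{2n}e_{2n-1}$, with $S$ the upper shift $S_{k,k+1}=1$ on $\{0,\dots,2n-1\}$ and $e_{2n-1}$ the last unit vector, isolates a boundary piece $-a_{2n}^{0\to2n-1}A^{-1}e_{2n-1}\,x^{2n}=r_nx^{2n}$, using $r_n=\Pf(\bullet,\boxed{2n,c_{2n-1}})=-a_{2n}^{0\to2n-1}A^{-1}e_{2n-1}$. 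For the time part I would use $\pt A=\Phi_1^\top\Phi_0-\Phi_0^\top\Phi_1$ and $\pt a_{2n}^{0\to2n-1}=\phi_{2n+1}^\top\Phi_0-\phi_{2n}^\top\Phi_1$ to expand $\pt P_{2n}$ through $\pt(A^{-1})=-A^{-1}(\pt A)A^{-1}$.

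The engine of the proof is the observation that the spectral shift and the time flow agree in the bulk: because of the moment relation $a_{i+1,j}+a_{i,j+1}=\phi_{i+1}^\top\phi_j-\phi_i^\top\phi_{j+1}$ forced by the kernel $\tfrac{x-y}{x+y}$ in \eqref{sip}, one has $SA+AS^\top=\pt A+R_\partial$, where $R_\partial$ is a correction supported only on the last row and column and built from the moment row $a_{2n}^{0\to2n-1}$. This is precisely why $\pt+x$ is the natural operator. Writing $w=A^{-1}\chi_{2n-1}(x)$ and feeding this relation into the expansion, the genuinely polynomial term $a_{2n}^{0\to2n-1}A^{-1}S\chi_{2n-1}$ produced by $xP_{2n}$ cancels exactly against the $S\chi_{2n-1}$ piece of the bulk term $a_{2n}^{0\to2n-1}A^{-1}(\pt A)w$ produced by $\pt P_{2n}$, since $(\pt A)w=S\chi_{2n-1}+AS^\top w-R_\partial w$ while $Aw=\chi_{2n-1}$. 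What survives is purely boundary data: the leading $x^{2n+1}$, which I combine with $a_{2n+1}^{0\to2n-1}A^{-1}\chi_{2n-1}$ to form $Q_{2n+1}=\Pf(\bullet,\boxed{2n+1,x})$; the term $r_nx^{2n}$, which combines with $w$ to give $r_nP_{2n}$; and the residual $R_\partial$ and $a_{2n}^{0\to2n-1}S^\top$ contributions, which collapse onto the single component $w_{2n-1}=\tilde Q_{2n-1}=\Pf(\bullet,\boxed{c_{2n-1},x})$.

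The hard part will be the boundary bookkeeping in this last step, where the anti-involution signs must be tracked with care. Concretely I must show that the coefficient of $\tilde Q_{2n-1}$ produced by the last column of $R_\partial$ reduces to the contraction $\sum_{k,l}a_{2n,k}(A^{-1})_{kl}a_{2n,l}^\top=\Pf(\bullet,\boxed{2n,2n})$, and that this equals $\int[\tilde\sigma_n,\sigma_n]dt$ by the Grammian evaluation obtained earlier from \eqref{gram1}; the sign relation $\int[\tilde\sigma_n,\sigma_n]dt=-\int[\sigma_n,\tilde\sigma_n]dt$ then yields the stated coefficient. Equally delicate is confirming that the spurious $P_{2n-1}$ and $\tilde P_{2n-1}$ contributions coming separately from $\pt P_{2n}$ (through \eqref{sopd1}) and from the interior of $SA+AS^\top$ cancel identically rather than merely partially, which is guaranteed by the same bulk relation but requires matching the scalars $a_{2n}^{0\to2n-1}A^{-1}\Phi_0^\top=\phi_{2n}^\top-\sigma_n$ and $a_{2n}^{0\to2n-1}A^{-1}\Phi_1^\top=\phi_{2n+1}^\top-\tilde\sigma_n$ against $\pt a_{2n}^{0\to2n-1}$. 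Throughout, the convention $\Pf(c_{2n-1},x)=0$ keeps $\tilde Q_{2n-1}$ of degree at most $2n-1$ and makes the degree count in $x$ close consistently, leaving exactly $Q_{2n+1}+r_nP_{2n}-\int[\sigma_n,\tilde\sigma_n]dt\cdot\tilde Q_{2n-1}$.
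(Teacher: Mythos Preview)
Your shift-operator strategy is sound and is, at bottom, the same computation the paper carries out; however, the paper's execution is considerably simpler because it uses the \emph{Wronskian} form of the derivatives from the outset. Writing $\pt a_{2n}^{0\to 2n-1}=a_{2n+1}^{0\to 2n-1}+a_{2n}^{1\to 2n}$ and $(\pt A)_{ij}=a_{i+1,j}+a_{i,j+1}$ directly produces every shifted row and column you need: the $a_{2n+1}^{0\to 2n-1}$ contribution pairs with $x^{2n+1}$ to form $Q_{2n+1}$, the boundary pieces of the row/column shifts inside $A^{-1}(\pt A)A^{-1}$ yield $r_nP_{2n}$ and $-\Pf(\bullet,\boxed{2n,2n})\,\tilde Q_{2n-1}$, and the bulk shifts regroup into $-xP_{2n}$ via the zero condition \eqref{zerocondition}. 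No $\Phi$-vectors, no $P_{2n-1}$, no $\tilde P_{2n-1}$ ever enter.

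By contrast, you begin with the Grammian forms $\pt A=\Phi_1^\top\Phi_0-\Phi_0^\top\Phi_1$ and $\pt a_{2n}^{0\to 2n-1}=\phi_{2n+1}^\top\Phi_0-\phi_{2n}^\top\Phi_1$, and only afterwards invoke the moment relation to reintroduce shifts. This detour is what generates the ``spurious $P_{2n-1}$ and $\tilde P_{2n-1}$ contributions'' you worry about, together with the scalar matching $a_{2n}^{0\to 2n-1}A^{-1}\Phi_j^\top=\phi_{2n+j}^\top-(\cdot)$. None of that work is necessary: once you use $SA+AS^\top=\pt A+R_\partial$ on $\pt A$, you should also use the row analogue $\phi_{2n+1}^\top\Phi_0-\phi_{2n}^\top\Phi_1=a_{2n+1}^{0\to 2n-1}+a_{2n}^{0\to 2n-1}S^\top+a_{2n,2n}e_{2n-1}^\top$, after which the $a_{2n}^{0\to 2n-1}S^\top w$ term cancels against the surviving $AS^\top$ piece, not against $\tilde Q_{2n-1}$ as you suggest. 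Two smaller points: your contraction formula $\sum_{k,l}a_{2n,k}(A^{-1})_{kl}a_{2n,l}^\top=\Pf(\bullet,\boxed{2n,2n})$ omits the diagonal entry $a_{2n,2n}$, which need not vanish here; and the term $a_{2n+1}^{0\to 2n-1}A^{-1}\chi_{2n-1}$ you pair with $x^{2n+1}$ does not arise from Grammian differentiation of the row, only from the Wronskian form, so you must apply the moment relation there as well before $Q_{2n+1}$ can appear.
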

\begin{proof}
Noting that
\begin{align*}
\pt P_{2n}(x;t)&=\pt \left(x^{2n}-
a_{2n}^{0\to 2n-1}A^{-1}\chi_{2n-1}(x)
\right)\\
&=-a_{2n+1}^{0\to 2n-1}A^{-1}\chi_{2n-1}(x)-a_{2n}^{1\to 2n}A^{-1}\chi_{2n-1}(x)+a_{2n}^{0\to 2n-1}A^{-1}(\pt A)A^{-1}\chi_{2n-1}(x)\\
&=-x^{2n+1}+\Pf(0,\cdots,2n-1,\boxed{2n+1,x})+\left|\begin{array}{cc}
A&\chi_{2n-1}(x)\\
a_{2n}^{1\to 2n}&\boxed{0}
\end{array}
\right|\\
&+\sum_{k=1}^{2n}\left|\begin{array}{cc}
A&a^{k}_{0\to 2n-1}\\
a_{2n}^{0\to 2n-1}&\boxed{0}
\end{array}
\right|\cdot\left|\begin{array}{cc}
A&\chi_{2n-1}(x)\\
e_k^\top&\boxed{0}
\end{array}
\right|\\
&+\sum_{k=1}^{2n}
\left|\begin{array}{cc}
A&e_k\\
a_{2n}^{0\to 2n-1}&\boxed{0}
\end{array}
\right|\cdot\left|\begin{array}{cc}
A&\chi_{2n-1}(x)\\
a_{k}^{0\to 2n-1}&\boxed{0}
\end{array}
\right|,
\end{align*}
and from the zero condition \eqref{zerocondition}, we have
\begin{align*}
\left|\begin{array}{cc}
A&a^{k}_{0\to 2n-1}\\
a_{2n}^{0\to 2n-1}&\boxed{0}
\end{array}
\right|=-a_{2n,k},\quad
\left|\begin{array}{cc}
A&\chi_{2n-1}(x)\\
a_{k}^{0\to 2n-1}&\boxed{0}
\end{array}
\right|=-x^{k},\quad
 1\leq k\leq 2n-1.
\end{align*}
Therefore,
\begin{align*}
&\sum_{k=1}^{2n}\left|\begin{array}{cc}
A&a^{k}_{0\to 2n-1}\\
a_{2n}^{0\to 2n-1}&\boxed{0}
\end{array}
\right|\cdot\left|\begin{array}{cc}
A&\chi_{2n-1}(x)\\
e_k^\top&\boxed{0}
\end{array}
\right|\\
&\quad=-\left|\begin{array}{cc}
A&\chi_{2n-1}(x)\\
a_{2n}^{1\to 2n}&\boxed{0}
\end{array}
\right|-\Pf(0,\cdots,2n-1,\boxed{2n,2n})\tilde{Q}_{2n-1}(x;t),
\end{align*}
and 
\begin{align*}
&\sum_{k=1}^{2n}
\left|\begin{array}{cc}
A&e_k\\
a_{2n}^{0\to 2n-1}&\boxed{0}
\end{array}
\right|\cdot\left|\begin{array}{cc}
A&\chi_{2n-1}(x)\\
a_{k}^{0\to 2n-1}&\boxed{0}
\end{array}
\right|\\&\quad=-x(-x^{2n}+P_{2n}(x;t))+\Pf(0,\cdots,2n-1,\boxed{2n,c_{2n-1}})P_{2n}(x;t).
\end{align*}
The claim follows from a combination of the preceding formulas.
\end{proof}

In fact, $Q_{2n+1}(x)$ have an intimate relation with the partial-skew-orthogonal polynomials by using the quasi-Pfaffian identity.
\begin{proposition}\label{prop1}
We have the following relationship between $Q_{2n+1}(x;t)$, $\tilde{Q}_{2n-1}(x;t)$ and $P_{n}(x;t)$ via
\begin{align*}
Q_{2n+1}(x;t)=a_n^{-1}\left(P_{2n+1}(x;t)-b_nP_{2n}(x;t)-P_{2n-1}(x;t)\right).
\end{align*}
Besides, we have the equation
\begin{align*}
\tilde{Q}_{2n+1}(x;t)&=D_na_n^{-1}P_{2n+1}(x;t)+(B_n-D_na_n^{-1}b_n)P_{2n}(x;t)-D_na_n^{-1}P_{2n-1}(x;t).
\end{align*}
\end{proposition}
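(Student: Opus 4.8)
The plan is to derive both identities from a single application of the quasi-Pfaffian identity \eqref{identity} of Theorem \ref{thm2.3}, expanding each of the larger quasi-Pfaffians about the pair of labels $2n,2n+1$ over the base set $\bullet=\{0,\cdots,2n-1\}$. In each case the expansion turns the big quasi-Pfaffian into the boxed ``base'' term minus a row--times--inverse--matrix--times--column contraction, and the whole proof amounts to recognizing every factor that appears as an object already named in this section.

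For the expression for $Q_{2n+1}$, I would write $P_{2n+1}(x;t)=\Pf(0,\cdots,2n-1,2n,2n+1,\boxed{d_0,x})$ and apply \eqref{identity} with $a=2n$, $b=2n+1$, $c=d_0$, $d=x$. The resulting quasi-Pfaffians are identified as $\Pf(\bullet,\boxed{d_0,x})=P_{2n-1}$, $\Pf(\bullet,\boxed{2n,x})=P_{2n}$ and $\Pf(\bullet,\boxed{2n+1,x})=Q_{2n+1}$, while the central $2\times2$ block is, entry by entry, precisely the matrix whose inverse defines $A_n,B_n,C_n,D_n$ in \eqref{abcd}: its entries $\Pf(\bullet,\boxed{2n,2n})$, $\Pf(\bullet,\boxed{2n,2n+1})$, $\Pf(\bullet,\boxed{2n+1,2n+1})$ were computed earlier from the Grammian formula \eqref{gram1}, and the lower-left entry follows from Proposition \ref{lem1}. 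Hence this block equals $\left(\begin{smallmatrix}A_n&C_n\\ B_n&D_n\end{smallmatrix}\right)^{-1}$. The outer row vector $\bigl(\Pf(\bullet,\boxed{d_0,2n}),\,\Pf(\bullet,\boxed{d_0,2n+1})\bigr)$ becomes $\bigl(-\sigma_n^\top,\,-\hat{\sigma}_n^\top\bigr)$ by \eqref{rel1}. Since $\bigl(\sigma_n^\top,\,\hat{\sigma}_n^\top\bigr)\left(\begin{smallmatrix}A_n&C_n\\ B_n&D_n\end{smallmatrix}\right)=(b_n,\,a_n)$, the identity collapses to $P_{2n+1}=P_{2n-1}+b_nP_{2n}+a_nQ_{2n+1}$, and left-multiplying by $a_n^{-1}$ gives the first claim.

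For $\tilde{Q}_{2n+1}$, I would run the same expansion on $\Pf(0,\cdots,2n+1,\boxed{c_{2n+1},x})$ with $c=c_{2n+1}$, $d=x$. Two simplifications are key. First, $\Pf(\bullet,\boxed{c_{2n+1},x})=0$, because the row attached to $c_{2n+1}$ is the unit vector supported at the position $2n+1\notin\bullet$ and therefore vanishes identically over $\bullet$. Second, the outer row vector $\bigl(\Pf(\bullet,\boxed{c_{2n+1},2n}),\,\Pf(\bullet,\boxed{c_{2n+1},2n+1})\bigr)$ reduces to $(0,-1)$ from the duality $\Pf(i,c_j)=\delta_{i,j}$ together with \eqref{rel1}. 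The identity then yields $\tilde{Q}_{2n+1}=B_nP_{2n}+D_nQ_{2n+1}$, and substituting the expression for $Q_{2n+1}$ just obtained gives $\tilde{Q}_{2n+1}=D_na_n^{-1}P_{2n+1}+(B_n-D_na_n^{-1}b_n)P_{2n}-D_na_n^{-1}P_{2n-1}$, which is the second claim.

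The main obstacle will be the bookkeeping forced by non-commutativity rather than any conceptual difficulty. One must keep every factor on its correct side so that $a_n^{-1}$ and $D_na_n^{-1}$ act from the left; confirm that the $2\times2$ block generated by \eqref{identity} is literally the inverse of the matrix in \eqref{abcd}, not its transpose or a sign variant (this rests on matching $\Pf(\bullet,\boxed{2n+1,2n})=-\Pf(\bullet,\boxed{2n,2n+1})^\top$ to the off-diagonal entry of \eqref{abcd}); and pin down the signs of the boundary entries $\Pf(\bullet,\boxed{d_0,\cdot})$ and $\Pf(\bullet,\boxed{c_{2n+1},\cdot})$ using \eqref{rel1} and $\Pf(i,c_j)=\delta_{i,j}$. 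No individual step is deep, but the interplay of transposes, inverses and duality relations leaves little margin for sign errors, so the careful tracking of these three points is where the real work lies.
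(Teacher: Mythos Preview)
Your proposal is correct and follows essentially the same route as the paper: both apply the quasi-Pfaffian identity \eqref{identity} to $\Pf(\bullet,2n,2n+1,\boxed{d_0,x})$ and to $\Pf(\bullet,2n,2n+1,\boxed{c_{2n+1},x})$, identify the central $2\times 2$ block with the inverse of the matrix in \eqref{abcd}, and then solve the resulting linear relations for $Q_{2n+1}$ and $\tilde{Q}_{2n+1}$. Your write-up is in fact more explicit than the paper's about the bookkeeping (the vanishing of $\Pf(\bullet,\boxed{c_{2n+1},x})$, the reduction of the outer row to $(0,-1)$, and the sign matching via \eqref{rel1}), but the argument is the same.
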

\begin{proof}
Applying the identity \eqref{identity} to the quasi-Pfaffian $\Pf(\bullet,2n,2n+1,\boxed{d_0,x})$, we could obtain
\begin{align*}
P_{2n+1}(x)=P_{2n-1}(x)-\left(
\Pf(\bullet,\boxed{d_0,2n}),\Pf(\bullet,\boxed{d_0,2n+1})
\right)\left(\begin{array}{cc}
A_n&C_n\\
B_n&D_n
\end{array}
\right)\left(\begin{array}{c}
\Pf(\bullet,\boxed{2n,x})\\
\Pf(\bullet,\boxed{2n+1,x})
\end{array}
\right).
\end{align*}
This is the exactly the first equation by using the notation in \eqref{abcd}. Then applying the identity to $\Pf(\bullet,2n,2n+1,\boxed{c_{2n+1},x})$, we could obtain
\begin{align*}
\tilde{Q}_{2n+1}(x;t)&=(0,1)\left(\begin{array}{cc}
A_n&C_n\\
B_n&D_n
\end{array}
\right)\left(\begin{array}{c}
P_{2n}(x;t)\\
Q_{2n+1}(x;t)
\end{array}
\right)=B_nP_{2n}(x;t)+D_nQ_{2n+1}(x;t).
\end{align*}
A substitution of ${Q}_{2n+1}(x;t)$ results in the formula for $\tilde{Q}_{2n+1}(x;t)$.
\end{proof}
Having established equation \eqref{spec1} via Proposition \ref{prop1} and Lemma \ref{soplem1}, we now turn to the spectral problem for odd-order polynomials. By analogy with the proof of Lemma \ref{soplem1}, we obtain the following result.
\begin{lemma}\label{soplem2}
For odd order polynomials $\{P_{2n-1}(x;t)\}_{n\in\mathbb{N}_+}$, we have the formula
\begin{align*}
(\pt+x)P_{2n-1}(x;t)=v_nP_{2n}(x;t)+\sigma_n\tilde{Q}_{2n-1}(x;t).
\end{align*}
\end{lemma}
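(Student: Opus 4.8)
The plan is to mimic the proof of Lemma~\ref{soplem1} line by line, replacing the bordering row $a_{2n}^{0\to 2n-1}$ (which builds $P_{2n}$) by the moment row $-\Phi_0=(-\phi_0,\dots,-\phi_{2n-1})$ (which builds $P_{2n-1}$). Setting $A=(a_{ij})_{i,j=0}^{2n-1}$ and $\chi_{2n-1}(x)=(1,x,\dots,x^{2n-1})^\top$, the block expansion of the quasi-determinant gives $P_{2n-1}(x;t)=\Phi_0A^{-1}\chi_{2n-1}(x)$. First I would differentiate this expression, using $\pt\Phi_0=\Phi_1$ (from $\tfrac{d}{dt}\phi_i=\phi_{i+1}$), the Wronskian rule $\pt a_{ij}=a_{i+1,j}+a_{i,j+1}$ for $\pt A$, and $\pt(A^{-1})=-A^{-1}(\pt A)A^{-1}$, so that $\pt P_{2n-1}=\Phi_1A^{-1}\chi_{2n-1}(x)-\Phi_0A^{-1}(\pt A)A^{-1}\chi_{2n-1}(x)$.

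The middle term is handled exactly as in Lemma~\ref{soplem1}: splitting $\pt A$ into its row-shift and column-shift parts and inserting them into $A^{-1}(\pt A)A^{-1}$ turns $\Phi_0A^{-1}(\pt A)A^{-1}\chi_{2n-1}(x)$ into two sums over $k$ of products of bordered quasi-determinants, one carrying the extra column $a^{2n}_{0\to 2n-1}$ and one carrying the extra row $a_{2n}^{0\to 2n-1}$. The crucial simplification is the zero condition~\eqref{zerocondition}: for every internal index $k\le 2n-1$ the bordered quasi-determinant built on a genuine column (resp.\ row) of $A$ collapses to a single moment $\phi_k$ (resp.\ a single power $x^k$), so the bulk of each sum telescopes and only the boundary index $k=2n$ survives as a true quasi-Pfaffian. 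In this way the column-shift sum reproduces the leading term $\Phi_1A^{-1}\chi_{2n-1}(x)=\tilde P_{2n-1}(x;t)$ together with a single surviving factor $\Pf(\bullet,\boxed{d_0,2n})\,\tilde Q_{2n-1}(x;t)$, while the row-shift sum reproduces $xP_{2n-1}(x;t)$ together with a single surviving factor $(\Phi_0A^{-1}e_{2n-1})\,P_{2n}(x;t)$, the multiplier being precisely the leading coefficient of $P_{2n-1}$.

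Assembling these, the $\tilde P_{2n-1}(x;t)$ regenerated by the column-shift sum cancels the explicit leading term coming from $\pt\Phi_0$, and the $xP_{2n-1}(x;t)$ regenerated by the row-shift sum enters $\pt P_{2n-1}$ with a minus sign and cancels the $+x$ contribution of the operator $\pt+x$. What remains are exactly the two boundary quasi-Pfaffians, and after identifying their coefficients with $v_n=\Pf(\bullet,\boxed{c_{2n-1},d_0})$ and $\sigma_n=\Pf(\bullet,\boxed{2n,d_0})$ one obtains $(\pt+x)P_{2n-1}(x;t)=v_nP_{2n}(x;t)+\sigma_n\tilde Q_{2n-1}(x;t)$. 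Notably, in contrast to Lemma~\ref{soplem1}, the $d_1$-type term drops out completely, which is why the right-hand side has only two summands.

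The main obstacle is the non-commutative sign-and-transpose bookkeeping in that final identification. The surviving boundary factors appear naturally as $\Phi_0A^{-1}e_{2n-1}$ and $\Pf(\bullet,\boxed{d_0,2n})$, which are anti-involutions of $v_n$ and $\sigma_n$ rather than $v_n$ and $\sigma_n$ themselves; converting them demands Proposition~\ref{lem1}, the relation~\eqref{rel1}, and the skew-symmetry identity $(A^{-1})^\top=-A^{-1}$, all applied with attention to the left/right placement of the $\mathcal R$-valued coefficients (with $x$ treated as central). Forcing every transpose and sign across the telescoping sums to land correctly, so that the two halves collapse to exactly $v_nP_{2n}$ and $\sigma_n\tilde Q_{2n-1}$, is the delicate heart of the argument; the rest is the same bordered-quasi-determinant manipulation already carried out in Lemma~\ref{soplem1}.
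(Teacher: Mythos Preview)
Your plan is exactly the paper's approach: the paper provides no independent argument for this lemma but merely states it follows ``by analogy with the proof of Lemma~\ref{soplem1}'', and replacing the bordering row $a_{2n}^{0\to 2n-1}$ by $-\Phi_0$ throughout that proof is precisely the intended analogy. Your description of how the two telescoping sums collapse (the $\Phi_1A^{-1}\chi_{2n-1}$ term cancelling the explicit $\partial_t\Phi_0$ contribution, the $xP_{2n-1}$ term cancelling against the operator, and only the $k=2n$ boundary terms surviving) is correct, and your observation that the absence of a ``row-index shift'' on $d_0$ is what eliminates the analogue of $Q_{2n+1}$ is the right explanation for why only two terms remain.

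One clarification on the point you flag as delicate: the raw output of the analogy is
\[
(\partial_t+x)P_{2n-1}=\Pf(\bullet,\boxed{d_0,c_{2n-1}})\,P_{2n}-\Pf(\bullet,\boxed{d_0,2n})\,\tilde Q_{2n-1},
\]
which is the exact counterpart of $r_nP_{2n}-\Pf(\bullet,\boxed{2n,2n})\tilde Q_{2n-1}$ in Lemma~\ref{soplem1} with the label $2n$ replaced by $d_0$ in the row slot. Via Proposition~\ref{lem1} these coefficients are $-v_n^\top$ and $\sigma_n^\top$, not literally $v_n$ and $\sigma_n$; there is no further identity that removes the transposes in a genuinely non-commutative ring. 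So you should not try to ``force every transpose to land'' on $v_n$ and $\sigma_n$ themselves: the identity with the $d_0$-first labels is what the analogy actually produces and is what the paper implicitly uses downstream. The sign/transpose bookkeeping you were worried about is therefore real but routine, not an obstacle to completing the proof.
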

Therefore, from Proposition \ref{prop1} and Lemma \ref{soplem2}, we could get the equation \eqref{spec2}.
To conclude,  we have the following recurrence relations for polynomials from Theorem \ref{sopthm1} and Theorem \ref{sopthm2}. 
\begin{theorem}
The $\mathcal{R}$-valued polynomials $\{P_n(x;t)\}_{n\in\mathbb{N}}$ satisfy the following recurrence relation
\begin{align}\label{recurrence1}
\begin{aligned}
x(P_{2n}(x;t)&+\alpha_nP_{2n-1}(x;t))=a_n^{-1}P_{2n+1}(x;t)+(r_n-a_n^{-1}b_n+\alpha_nv_n)P_{2n}(x;t)\\
&-(a_n^{-1}+\beta_n+(\int [\sigma_n,\tilde{\sigma}_n]dt-\alpha_n\sigma_n)D_{n-1}a_{n-1}^{-1})P_{2n-1}(x;t)\\
&+(\int [\sigma_n,\tilde{\sigma}_n]dt+\alpha_n\sigma_n)(B_{n-1}-D_{n-1}a_{n-1}^{-1}b_{n-1})P_{2n-2}(x;t)\\
&-(\int [\sigma_n,\tilde{\sigma}_n]dt+\alpha_n\sigma_n)D_{n-1}a_{n-1}^{-1}P_{2n-3}(x;t).
\end{aligned}
\end{align}
\end{theorem}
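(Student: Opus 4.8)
The plan is to derive the recurrence relation \eqref{recurrence1} by eliminating the auxiliary family $\{\tilde{P}_{2n-1}(x;t)\}$ and combining the two derivative/spectral results already in hand. Concretely, Theorem \ref{sopthm1} gives the ``time-shift'' relation
\begin{align*}
\pt P_{2n}(x;t)+\alpha_n\pt P_{2n-1}(x;t)=\beta_n P_{2n-1}(x;t),
\end{align*}
and Theorem \ref{sopthm2} gives the two spectral problems \eqref{spec1} and \eqref{spec2} expressing $(\pt+x)P_{2n}$ and $(\pt+x)P_{2n-1}$ as $\mathcal{R}$-linear combinations of $P_{2n+1},\dots,P_{2n-3}$. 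The idea is that forming the combination $(\pt+x)\bigl(P_{2n}+\alpha_n P_{2n-1}\bigr)$ in two different ways will isolate the purely multiplication-by-$x$ piece on the left of \eqref{recurrence1}.

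First I would compute $(\pt+x)(P_{2n}+\alpha_n P_{2n-1})$ using the two spectral problems \eqref{spec1} and \eqref{spec2}. Writing $(\pt+x)(P_{2n}+\alpha_nP_{2n-1}) = \bigl[(\pt+x)P_{2n}\bigr]+\alpha_n\bigl[(\pt+x)P_{2n-1}\bigr]$, I would substitute the right-hand sides of \eqref{spec1} and \eqref{spec2}, being careful that $\alpha_n$ multiplies from the \emph{left} of the whole expression \eqref{spec2}. This yields an $\mathcal{R}$-linear combination of $P_{2n+1},P_{2n},P_{2n-1},P_{2n-2},P_{2n-3}$, where the coefficient of $P_{2n+1}$ is $a_n^{-1}$, and the coefficients of the lower-order terms assemble (after grouping the $\int[\sigma_n,\tilde\sigma_n]dt$ and $\alpha_n\sigma_n$ contributions multiplying $D_{n-1}a_{n-1}^{-1}$, $B_{n-1}-D_{n-1}a_{n-1}^{-1}b_{n-1}$, etc.) into exactly the combinations displayed on the right of \eqref{recurrence1}.

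Second, I would handle the left-hand side: $(\pt+x)(P_{2n}+\alpha_nP_{2n-1}) = \pt(P_{2n}+\alpha_nP_{2n-1})+x(P_{2n}+\alpha_nP_{2n-1})$. Here I would invoke Theorem \ref{sopthm1} to replace $\pt(P_{2n}+\alpha_nP_{2n-1})$ by $\beta_nP_{2n-1}$, so that moving this single term to the right-hand side contributes exactly a $-\beta_nP_{2n-1}$ into the coefficient of $P_{2n-1}$. Reconciling this with the expansion from the first step is precisely what produces the term $-(a_n^{-1}+\beta_n+(\int[\sigma_n,\tilde\sigma_n]dt-\alpha_n\sigma_n)D_{n-1}a_{n-1}^{-1})$ as the coefficient of $P_{2n-1}$, and leaves $x(P_{2n}+\alpha_nP_{2n-1})$ isolated on the left, matching \eqref{recurrence1}.

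The main obstacle I anticipate is purely bookkeeping in a non-commutative setting: because all coefficients lie in $\mathcal{R}$ and need not commute, I must track the side from which each factor acts and respect that $\alpha_n$, $\beta_n$, $a_n^{-1}$, $b_n$, $r_n$, $v_n$ and the matrix blocks $A_{n-1},\dots,D_{n-1}$ all multiply in a fixed order. The delicate point is verifying that the $\alpha_n$-weighted copy of \eqref{spec2} recombines its $\sigma_n$-prefactored terms with the $\int[\sigma_n,\tilde\sigma_n]dt$-prefactored terms of \eqref{spec1} into the clean sums $\int[\sigma_n,\tilde\sigma_n]dt\pm\alpha_n\sigma_n$ appearing in \eqref{recurrence1}; this requires that the left-multiplying $\alpha_n$ distributes correctly through $\sigma_n D_{n-1}a_{n-1}^{-1}$, $\sigma_n(B_{n-1}-D_{n-1}a_{n-1}^{-1}b_{n-1})$, and so on, without silently assuming commutativity. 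Once the coefficient matching is carried out term by term for each of $P_{2n+1},\dots,P_{2n-3}$, the recurrence \eqref{recurrence1} follows.
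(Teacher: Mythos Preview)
Your approach is exactly the paper's: the recurrence is obtained by combining Theorem \ref{sopthm1} with the two spectral equations \eqref{spec1}--\eqref{spec2} of Theorem \ref{sopthm2}, and the coefficient bookkeeping you outline is correct. One notational slip to clean up: since $\alpha_n$ depends on $t$, the equality $(\pt+x)(P_{2n}+\alpha_nP_{2n-1})=[(\pt+x)P_{2n}]+\alpha_n[(\pt+x)P_{2n-1}]$ is false, and Theorem \ref{sopthm1} asserts $\pt P_{2n}+\alpha_n\pt P_{2n-1}=\beta_nP_{2n-1}$, not $\pt(P_{2n}+\alpha_nP_{2n-1})=\beta_nP_{2n-1}$; you make the same slip in both steps so the spurious $(\pt\alpha_n)P_{2n-1}$ terms cancel and the final answer is right, but in the write-up simply say ``add \eqref{spec1} to $\alpha_n$ times \eqref{spec2}, then use \eqref{t1} to remove the $\pt$-terms'' and avoid the ambiguous expression $(\pt+x)(P_{2n}+\alpha_nP_{2n-1})$ altogether.
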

This recurrence relation constitutes a spectral problem linking $P_{2n}(x;t)$ and $P_{2n-1}(x;t)$. A complementary relation between $P_{2n}(x;t)$ and $P_{2n+1}(x;t)$is therefore required. To derive it, we first establish the following lemma.
\begin{lemma}
There holds
\begin{align*}
\tilde{P}_{2n+1}(x;t)=\tilde{P}_{2n-1}(x;t)+\tilde{b}_n P_{2n}(x;t)+\tilde{a}_n Q_{2n+1}(x;t),
\end{align*}
where $\tilde{a}_n=\tilde{\sigma}_n^\top C_n+\tilde{\hat{\sigma}}_n^\top D_n$ and $\tilde{b}_n=\tilde{\sigma}_n^\top A_n+\tilde{\hat{\sigma}}_n^\top B_n$.
\end{lemma}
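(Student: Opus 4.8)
The plan is to apply the quasi-Pfaffian identity \eqref{identity} to the single object $\Pf(\bullet,2n,2n+1,\boxed{d_1,x})$ with $\bullet=\{0,\cdots,2n-1\}$, exactly as in the proof of Proposition \ref{prop1} but with the label $d_0$ replaced throughout by $d_1$. Taking $a=2n$, $b=2n+1$, $c=d_1$ and $d=x$ in \eqref{identity}, the left-hand side is $\Pf(\bullet,2n,2n+1,\boxed{d_1,x})=\tilde{P}_{2n+1}(x;t)$, since $\{0,\cdots,2n-1\}\cup\{2n,2n+1\}=\{0,\cdots,2n+1\}$. This is the natural target because $\tilde P$ and $P$ share the same body matrix and differ only in the boxed block, so a single application of \eqref{identity} should already produce the four terms on the right-hand side of the claim.

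The next step is to identify every quasi-Pfaffian produced by \eqref{identity}. The constant term is $\Pf(\bullet,\boxed{d_1,x})=\tilde{P}_{2n-1}(x;t)$, and the right-hand column vector is $(P_{2n}(x;t),\,Q_{2n+1}(x;t))^\top$, recognising $\Pf(\bullet,\boxed{2n,x})=P_{2n}(x;t)$ and $\Pf(\bullet,\boxed{2n+1,x})=Q_{2n+1}(x;t)$ as in Lemma \ref{soplem1}. The central $2\times2$ matrix has entries $\Pf(\bullet,\boxed{2n,2n})$, $\Pf(\bullet,\boxed{2n,2n+1})$, $\Pf(\bullet,\boxed{2n+1,2n})$ and $\Pf(\bullet,\boxed{2n+1,2n+1})$, which were already evaluated following \eqref{abcd}; using Proposition \ref{lem1} for the off-diagonal entry and comparing with \eqref{abcd}, its inverse is precisely $\left(\begin{smallmatrix}A_n&C_n\\ B_n&D_n\end{smallmatrix}\right)$.

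It then remains to rewrite the left-hand row vector $\big(\Pf(\bullet,\boxed{d_1,2n}),\,\Pf(\bullet,\boxed{d_1,2n+1})\big)$. By the relation \eqref{rel1} one has $\Pf(\bullet,\boxed{d_1,2n})=-\Pf(\bullet,\boxed{2n,d_1})^\top=-\tilde{\sigma}_n^\top$ and likewise $\Pf(\bullet,\boxed{d_1,2n+1})=-\tilde{\hat{\sigma}}_n^\top$. Substituting these identifications into \eqref{identity} and absorbing the overall minus sign against these two negatives, the row–matrix–column product becomes $(\tilde{\sigma}_n^\top,\tilde{\hat{\sigma}}_n^\top)\left(\begin{smallmatrix}A_n&C_n\\ B_n&D_n\end{smallmatrix}\right)(P_{2n}(x;t),\,Q_{2n+1}(x;t))^\top$. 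Multiplying the leftmost row against the matrix yields $(\tilde{\sigma}_n^\top A_n+\tilde{\hat{\sigma}}_n^\top B_n,\ \tilde{\sigma}_n^\top C_n+\tilde{\hat{\sigma}}_n^\top D_n)=(\tilde{b}_n,\tilde{a}_n)$ by the definitions of $\tilde{a}_n$ and $\tilde{b}_n$, so that $\tilde{P}_{2n+1}=\tilde{P}_{2n-1}+\tilde{b}_nP_{2n}+\tilde{a}_nQ_{2n+1}$, which is the claim.

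The only delicate point, and hence the step I would watch most carefully, is the sign and anti-involution bookkeeping in the non-commutative setting: one must keep the factors in the prescribed left/right order and transpose whole products via $(ab)^\top=b^\top a^\top$, so that the two cancelling minus signs—one built into \eqref{identity}, one coming from \eqref{rel1}—combine correctly and the transposes land on $\tilde{\sigma}_n$ and $\tilde{\hat{\sigma}}_n$ rather than on the matrix entries $A_n,B_n,C_n,D_n$. Since the argument is structurally identical to Proposition \ref{prop1}, no genuinely new obstacle arises beyond this routine, order-sensitive bookkeeping.
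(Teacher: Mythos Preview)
Your proposal is correct and follows exactly the route the paper intends: the lemma is the $d_1$-analogue of Proposition~\ref{prop1}, obtained by applying the quasi-Pfaffian identity \eqref{identity} to $\Pf(\bullet,2n,2n+1,\boxed{d_1,x})$ with the same $2\times2$ block inverted as in \eqref{abcd}. Your identification of each piece, including the sign handling via \eqref{rel1}, is accurate and matches the paper's implicit argument.
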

Therefore, by recognizing from \eqref{sopd2} that
\begin{align*}
\tilde{P}_{2n+1}(x;t)&=-\left(
1-\int [u_{n+1},u_{n+1}]dt
\right)^{-1}\pt P_{2n+1}(x;t)\\
&\quad-\left(
1-\int [u_{n+1},u_{n+1}]dt
\right)^{-1}u_{n+1}^\top P_{2n+1}(x;t),
\end{align*}
and from \eqref{sopd1} that
\begin{align*}
\tilde{P}_{2n-1}(x;t)=-\sigma_n^{-1}\pt P_{2n}(x;t)+\sigma_n^{-1}\tilde{\sigma}_nP_{2n-1}(x;t),
\end{align*}
we could get 
\begin{align}
\begin{aligned}\label{t2}
\pt P_{2n+1}(x;t)&+\gamma_n \pt P_{2n}(x;t)=-\left(u_{n+1}^\top-
\gamma_n\sigma_n\tilde{a}_na_n^{-1}
\right)P_{2n+1}(x;t)\\&+\gamma_n\sigma_n(\tilde{b}_n-\tilde{a}_na_n^{-1}b_n)P_{2n}(x;t)+\gamma_n(\tilde{\sigma}_n-\sigma_n\tilde{a}_na_n^{-1})P_{2n-1}(x;t),
\end{aligned}
\end{align}
where $\gamma_n=-\left(
1-\int[u_{n+1},u_{n+1}]dt
\right)\sigma_n^{-1}$.
This formula, together with Theorem \ref{sopthm2}, lead to the rest part of the spectral problem.
\begin{theorem}
The $\mathcal{R}$-valued polynomials $\{P_n(x;t)\}_{n\in\mathbb{N}}$ satisfy the following recurrence relation
\begin{align}\label{recurrence2}
\begin{aligned}
x(P_{2n+1}(x;t)&+\gamma_nP_{2n}(x;t))=v_{n+1}P_{2n+2}(x;t)+(u_{n+1}^\top+\kappa_n)P_{2n+1}(x;t)\\
&+\left(
\sigma_{n+1}B_n+\gamma_n(r_n-\sigma_n\tilde{b}_n)-\kappa_nb_n
\right)P_{2n}(x;t)\\
&-\left(
r_n\tilde{\sigma}_n+\int [\sigma_n,\tilde{\sigma}_n]dt D_na_n^{-1}+\kappa_n
\right)P_{2n-1}(x;t)\\
&+\gamma_n\int[\sigma_n,\tilde{\sigma}_n]dt(B_{n-1}-D_{n-1}a_{n-1}^{-1}b_{n-1})P_{2n-2}(x;t)\\
&-\gamma_n\int[\sigma_n,\tilde{\sigma}_n]dtD_{n-1}a_{n-1}^{-1}P_{2n-3}(x;t),
\end{aligned}
\end{align}
where we simply denote $\kappa_n=\sigma_{n+1}D_n+\gamma_n-\gamma_n\sigma_n\tilde{a}_n$.
\end{theorem}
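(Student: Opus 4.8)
The plan is to derive \eqref{recurrence2} algebraically from the derivative identity \eqref{t2} together with the two spectral problems \eqref{spec1}--\eqref{spec2} of Theorem \ref{sopthm2}, mirroring the way the companion relation \eqref{recurrence1} follows from \eqref{t1} and Theorem \ref{sopthm2}. The organizing device is that the spectral parameter $x$ is a scalar and hence commutes with every coefficient in $\mathcal{R}$, so that $xf=(\pt+x)f-\pt f$ for any $f$. Applying this to $P_{2n+1}(x;t)$ and to $P_{2n}(x;t)$ separately, then left-multiplying the second relation by $\gamma_n$ and using that $x$ passes through $\gamma_n$, produces the master identity
\begin{align*}
x\left(P_{2n+1}+\gamma_nP_{2n}\right)&=(\pt+x)P_{2n+1}+\gamma_n(\pt+x)P_{2n}-\left(\pt P_{2n+1}+\gamma_n\pt P_{2n}\right).
\end{align*}
The derivative bracket on the right is \emph{not} $\pt(P_{2n+1}+\gamma_nP_{2n})$ but is exactly the left-hand side of \eqref{t2}, so this master identity already isolates the left-hand side of \eqref{recurrence2} in terms of three quantities whose closed forms are in hand.

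Next I would substitute those three quantities. For $(\pt+x)P_{2n+1}$ I use the odd-order spectral problem \eqref{spec2} with $n$ advanced to $n+1$, which supplies the terms in $P_{2n+2},P_{2n+1},P_{2n},P_{2n-1}$ carrying $v_{n+1},\sigma_{n+1}$ and the block entries $B_n,D_n,a_n,b_n$; for $(\pt+x)P_{2n}$ I use \eqref{spec1}, which contributes the tail down to $P_{2n-3}$ weighted by $\int[\sigma_n,\tilde\sigma_n]dt$; and the bracket $\pt P_{2n+1}+\gamma_n\pt P_{2n}$ is replaced by the right-hand side of \eqref{t2}, involving $P_{2n+1},P_{2n},P_{2n-1}$ with coefficients built from $u_{n+1}^\top,\tilde a_n,\tilde b_n,\tilde\sigma_n$. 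After these substitutions every term is a left $\mathcal{R}$-coefficient multiplying one of $P_{2n+2},\dots,P_{2n-3}$, and it remains to read off the coefficients, being scrupulous about left/right ordering since $\mathcal{R}$ is non-commutative.

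The coefficients of $P_{2n+2}$, $P_{2n-2}$ and $P_{2n-3}$ each receive a contribution from only one of the three ingredients, so they match the claimed values immediately. For $P_{2n+1}$ and $P_{2n}$ all three ingredients contribute; collecting the pieces $\sigma_{n+1}D_n$, $\gamma_n$ and $-\gamma_n\sigma_n\tilde a_n$ and factoring out the common right factor carried by $a_n^{-1}$ is exactly the role of the abbreviation $\kappa_n=\sigma_{n+1}D_n+\gamma_n-\gamma_n\sigma_n\tilde a_n$, and this yields the stated coefficients $u_{n+1}^\top+\kappa_n$ of $P_{2n+1}$ and $\sigma_{n+1}B_n+\gamma_n(r_n-\sigma_n\tilde b_n)-\kappa_nb_n$ of $P_{2n}$.

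The step I expect to be the main obstacle is the coefficient of $P_{2n-1}$, where all three ingredients contribute but now mix $n$-indexed data (from the advanced \eqref{spec2} and from \eqref{t2}) with $(n-1)$-indexed data (from \eqref{spec1}). A direct collection produces residual terms such as $\gamma_n\tilde\sigma_n$ and $\gamma_n\int[\sigma_n,\tilde\sigma_n]dt\,D_{n-1}a_{n-1}^{-1}$ that must be reorganized into the target form $r_n\tilde\sigma_n+\int[\sigma_n,\tilde\sigma_n]dt\,D_na_n^{-1}$. Reconciling the two index levels is precisely where the integrable structure of the system is used: I would invoke the defining relation \eqref{nceq7} for $r_n$, the recurrence \eqref{nceq6} for $v_n$, and the identity $\gamma_n=-\left(1-\int[u_{n+1},u_{n+1}]dt\right)\sigma_n^{-1}$ together with \eqref{nceq2}, and then verify that the interior coefficient collapses to $-\left(r_n\tilde\sigma_n+\int[\sigma_n,\tilde\sigma_n]dt\,D_na_n^{-1}+\kappa_n\right)$. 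Throughout, no factor may be transposed past another, so the careful bookkeeping of non-commutative products is the chief source of difficulty.
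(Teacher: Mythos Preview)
Your proposal is correct and follows the paper's own approach exactly: the paper's entire proof is the single sentence ``This formula [\eqref{t2}], together with Theorem \ref{sopthm2}, lead to the rest part of the spectral problem,'' and you have accurately fleshed out precisely this combination---write $xf=(\pt+x)f-\pt f$, feed in \eqref{spec2} with $n\mapsto n+1$, $\gamma_n$ times \eqref{spec1}, and subtract the right-hand side of \eqref{t2}.

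One remark on your anticipated obstacle. The direct collection you describe for the coefficient of $P_{2n-1}$ indeed gives
\[
-\left(\kappa_na_n^{-1}+\gamma_n\int[\sigma_n,\tilde\sigma_n]\,dt\,D_{n-1}a_{n-1}^{-1}+\gamma_n\tilde\sigma_n\right),
\]
and a similar $a_n^{-1}$ appears in the $P_{2n+1}$ and $P_{2n}$ coefficients (so that e.g.\ one gets $u_{n+1}^\top+\kappa_na_n^{-1}$ rather than $u_{n+1}^\top+\kappa_n$). These discrepancies with the printed formula are most plausibly typographical (a missing $a_n^{-1}$ in the definition of $\kappa_n$, and $\gamma_n,D_{n-1},a_{n-1}$ in place of $r_n,D_n,a_n$ in the $P_{2n-1}$ line), not a sign that further identities are required. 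The paper invokes nothing beyond \eqref{t2} and Theorem \ref{sopthm2}, so the auxiliary relations \eqref{nceq6}--\eqref{nceq7} you propose to call on should not be needed; the recurrence is meant to be a bare transcription of the substitution, in the same spirit as \eqref{recurrence1}.
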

It should be noted that we could write the recurrence relations \eqref{recurrence1} and \eqref{recurrence2} as matrix form $$xL_1P(x;t)=L_2P(x;t),$$ where $P(x)$ is a semi-infinite column vector composed by $P_0(x;t),P_1(x;t),\cdots$. At the same time, the derivative formulas \eqref{t1} and \eqref{t2}
could be simply denoted by $$L_1\pt P(x;t)=L_3P(x;t).$$
The compatibility condition leads to the Lax formalism 
\begin{align*}
\pt(L_1^{-1}L_2)=L_1^{-1}L_3L_1^{-1}L_2-L_1^{-1}L_2L_1^{-1}L_3.
\end{align*}
This is a non-commutative generalization of fractional difference operator considered in \cite{krichever95}.

\section{Concluding remarks}

This paper has introduced a non-commutative generalization of the Pfaffian---the quasi-Pfaffian---defined for matrices over a division ring. Inspired by the theory of quasi-determinants developed by Gelfand and Retakh, this extension enables the analysis of linear systems with non-commutative skew-symmetric coefficients, expressed explicitly in terms of quasi-Pfaffians. We have established fundamental properties of quasi-Pfaffians and demonstrated their utility in the context of non-commutative integrable systems and polynomial theory. 

A natural question arises regarding the non-commutative analogue of the classical identity 
$\det(A)=\Pf(A)^2$. While such a relationship remains an open problem, its resolution would significantly enhance the interplay between quasi-determinants and quasi-Pfaffians. Besides, we mainly consider the matrix elements in quasi-Pfaffian with the anti-involution $\top$, which works similarly as transpose for matrices. However, different from the definition for quantum Pfaffians, 
which is defined on the algebra generated by entries module certain ideal generated by a quadratic relation. The connection established by Krob and Leclerc \cite{krob95} between quasi-determinants and quantum determinants motivates the conjecture that a similarly profound relationship exists between their Pfaffian counterparts.

Future work may also explore deeper connections to combinatorial structures, random processes, and further integrable models in the non-commutative setting. Over the years, Pfaffians with block structures have appeared in diverse studies such as Brownian motions, combinatorics with non-intersecting paths, and so on \cite{li23,li25}. Therefore, it is expected to find more applications of quasi-Pfaffian into different subjects. 

The connection between solutions and polynomials in non-commutative integrable systems, as explored in Section \ref{sec4}, invites further investigation. In the commutative setting, wave functions are known to be constructible from $\tau$-functions \cite{adler02,li22}, suggesting the potential for a matrix decomposition approach to derive dressing operators in the non-commutative case. Previous work on quasi-determinants and matrix-valued orthogonal polynomials has already established relevant decompositions for the non-commutative Toda equation \cite{ariznabarreta14}. Moreover, while a well-defined mapping exists from the commutative Toda lattice to the Pfaff lattice \cite{adler022}, it remains an open and compelling question whether an analogous map connects the non-commutative versions of these hierarchies.

\section*{Acknowledgement}
This work is partially funded by grants (Grant No. NSFC 12371251, 12175155). G.F. Yu was also partially supported by the National Key Research and Development Program of China (Grant No. 2024YFA1014101), Shanghai Frontier Research Institute for Modern Analysis and the Fundamental Research Funds for the Central Universities.

\section*{Conflicts of Interest}
There is no conflict of interest.

\section*{Data Availability Statement}
There is no data needed in this research.

\appendix
\section{Linear system and definition of quasi-Pfaffian}
\subsection{Motivations}\label{sub2.1}
In \cite[Section 1.6]{gelfand05}, the authors showed the solution of linear systems with non-commutative  coefficients could be expressed in terms of quasi-determinants. We start with a linear system with two unknown variables, to demonstrate a way to quasi-Pfaffians. Let $A=(a_{ij})_{i,j=1,2}$ be a $2\times 2$ matrix over a division ring $\mathcal{R}$. For a linear system
\begin{align}\label{linearsys1}
\begin{aligned}
\left\{\begin{array}{l}
a_{11}x_1+a_{12}x_2=b_1,\\
a_{21}x_2+a_{22}x_2=b_2,
\end{array}
\right.
\end{aligned}
\end{align}
where $b_1,b_2\in\mathcal{R}$, the solution could be given by 
\begin{align*}
x_1=-\left|
\begin{array}{ccc}
a_{11}&a_{12}&b_1\\
a_{21}&a_{22}&b_2\\
1&0&\boxed{0}
\end{array}
\right|,\quad x_2=-\left|
\begin{array}{ccc}
a_{11}&a_{12}&b_1\\
a_{21}&a_{22}&b_2\\
0&1&\boxed{0}
\end{array}
\right|.
\end{align*}
In the followings, we first explore the relation between \eqref{linearsys1} and Pfaffians in commutative case. To this end, we assume that $a_{11}=a_{22}=0$, and $a_{12}=-a_{21}$. 
We introduce $A^{ij}$ as a minor deleting $i$-th row and $j$-th column from $A$, a quasi-determinant $|A|_{ij}$ could be expressed as a ratio of determinants. In other words, we have the formula $|A|_{ij}=(-1)^{i+j}\det(A)/\det(A^{ij})$. Thus the solution of \eqref{linearsys1} could be written as
\begin{align*}
x_1=-\frac{\det\left(\begin{array}{ccc}
0&a_{12}&b_1\\
-a_{12}&0&b_2\\
1&0&0
\end{array}
\right)}{\det\left(\begin{array}{cc}
0&a_{12}\\-a_{12}&0
\end{array}
\right)},\quad x_2=-\frac{\det\left(\begin{array}{ccc}
0&a_{12}&b_1\\
-a_{12}&0&b_2\\
0&1&0
\end{array}
\right)}{\det\left(\begin{array}{cc}
0&a_{12}\\-a_{12}&0
\end{array}
\right)}.
\end{align*}
By using the Desnanot-Jacobi identity \cite[Section 2.6.2]{hirota04}
\begin{align*}
\det\left(\begin{array}{cccc}
0&a_{12}&b_1&-1\\
-a_{12}&0&b_2&0\\
-b_1&-b_2&0&0\\
1&0&0&0
\end{array}
\right)\det\left(\begin{array}{cc}
0&a_{12}\\
-a_{12}&0
\end{array}
\right)=\det\left(\begin{array}{ccc}
0&a_{12}&b_1\\
-a_{12}&0&b_2\\
1&0&0
\end{array}
\right)^2,
\end{align*}
we get\footnote{In fact, a general connection between determinants with skew-symmetric structures and Pfaffians was originally given by Cayley \cite{cayley49} . Let's denote $(a_{i,j})_{i,j=2,\cdots,n}$ as a skew-symmetric matrix. For 
even $n$, we have
\begin{align}\label{cayley}
\det\left|\begin{array}{cccc}
a_{xy}&a_{x2}&\cdots&a_{xn}\\
a_{2y}&a_{22}&\cdots&a_{2n}\\
\vdots&\vdots&&\vdots\\
a_{ny}&a_{n2}&\cdots&a_{nn}
\end{array}
\right|=\Pf(x,2,\cdots,n)\Pf(y,2,\cdots,n),
\end{align}
where $\Pf(i,j)=a_{ij},\,\Pf(x,i)=a_{xi},\,\Pf(y,i)=a_{yi}$ and $a_{xi}=-a_{ix}$, $a_{yi}=-a_{iy}$. For odd $n$, we have 
\begin{align*}
\det\left|\begin{array}{cccc}
0&a_{x2}&\cdots&a_{xn}\\
a_{2y}&a_{22}&\cdots&a_{2n}\\
\vdots&\vdots&&\vdots\\
a_{ny}&a_{n2}&\cdots&a_{nn}
\end{array}
\right|=\Pf(x,y,2,\cdots,n)\Pf(2,\cdots,n),
\end{align*}
where we have an extra notation $\Pf(x,y)=0$.}
\begin{align*}
x_1=-\frac{\Pf(1,2,b,c_1)}{\Pf(1,2)},\quad x_2=-\frac{\Pf(1,2,b,c_2)}{\Pf(a_{12})},
\end{align*}
where $\Pf(1,2)=a_{12}$, $\Pf(i,b)=b_i$, $\Pf(i,c_j)=\delta_{i,j}$ and $\Pf(b,c_i)=0$. Thus, applying the expansion formula \eqref{expansion} yields the expression
\begin{align*}
x_1=-\frac{b_2}{a_{12}}, \quad x_2=\frac{b_1}{a_{12}}.
\end{align*}
Now we turn to the non-commutative case. Let's assume $a_{ij}=-a_{ji}^\top$ for all $i,j=1,2$, 
where $a_{11}$ and $a_{22}$ are not necessarily zeros. By using the inverse, we have the solution
\begin{align}\label{sol1}
\left(\begin{array}{c}
x_1\\x_2
\end{array}
\right)=\left(\begin{array}{cc}
a_{11}&a_{12}\\
-a_{12}^\top&a_{22}
\end{array}
\right)^{-1}\left(\begin{array}{c}
b_1\\b_2
\end{array}
\right),
\end{align}
required that the inverse of the coefficient matrix exists. 
Therefore, the solution of the linear system has the quasi-determinant solution
\begin{align*}
x_i=-\left|\begin{array}{cc}
A&b\\
e_i&\boxed{0}\end{array}
\right|,
\end{align*}
where $e_i$ is the $i$-th unit vector, $A$ is the skew-symmetric coefficient matrix and $b$ is the right-hand-side term. It is natural to ask whether we can define a quasi-Pfaffian in terms of quasi-determinant  inspired by the solution. This idea promotes the concept of quasi-Pfaffian in Section \ref{sec2}.
To conclude, we can show that the solution to a linear system governed by a skew-symmetric matrix can be expressed as quasi-Pfaffians.
\begin{theorem}\label{linear}
For a linear system
\begin{align}\label{linearsys}
\sum_{j=1}^{2n}a_{ij}x_j=b_i,\quad i=1,2,\cdots,2n,
\end{align}
where $a_{ij}\in\mathcal{R}$ satisfying $a_{ij}=-a_{ji}^\top$, it admits unique solutions $x_i\in\mathcal{R}$ if and only if $(a_{ij})_{1\leq i,j\leq 2n}$ is invertible. Moreover, the solution could be expressed in terms of quasi-Pfaffian
\begin{align}\label{ci}
x_i=\Pf(1,\cdots,2n,\boxed{c_i,b}),
\end{align}
where $\Pf(i,j)=a_{ij}$, $\Pf(j,b)=b_j$, $\Pf(c_i,b)=0$, and $\Pf(c_i,j)=-\delta_{i,j}$.
\end{theorem}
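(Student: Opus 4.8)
The plan is to handle the two assertions separately: the invertibility criterion is standard linear algebra over a division ring, while the quasi-Pfaffian formula follows by a direct application of the block quasi-determinant expansion already recorded in \eqref{expand}. First I would rewrite the system \eqref{linearsys} as $A\underline{x}=\underline{b}$ with $A=(a_{ij})_{1\le i,j\le 2n}$ and $\underline{x},\underline{b}$ column vectors over $\mathcal{R}$. The assignment $\underline{x}\mapsto A\underline{x}$ is a right-$\mathcal{R}$-linear endomorphism of the free right module $\mathcal{R}^{2n}$, since $A(\underline{x}r)=(A\underline{x})r$ for $r\in\mathcal{R}$. Over a division ring this map is bijective exactly when $A$ is invertible: if $A$ is invertible then $\underline{x}=A^{-1}\underline{b}$ is the unique solution, and conversely, if a unique solution exists for every $\underline{b}$ then the map is surjective and injective, whence Gaussian elimination over $\mathcal{R}$ (licit because every nonzero element is a unit) produces a two-sided inverse of $A$. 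This establishes the ``if and only if'' portion.

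Next I would establish \eqref{ci} under the invertibility hypothesis. Unpacking the labelling conventions $\Pf(k,b)=b_k$, $\Pf(c_i,l)=-\delta_{i,l}$ and $\Pf(c_i,b)=0$, the object $\Pf(1,\cdots,2n,\boxed{c_i,b})$ is, by the defining equation \eqref{def}, the quasi-determinant whose main body is the skew-symmetric matrix $A$, whose final column is $\underline{b}$, whose final row is $-e_i^\top$, and whose boxed entry is $0$; in block form,
\begin{align*}
\Pf(1,\cdots,2n,\boxed{c_i,b})=
\left|\begin{array}{cc}
A & \underline{b}\\
-e_i^\top & \boxed{0}
\end{array}\right|.
\end{align*}
Applying the expansion formula \eqref{expand} with these substitutions gives
\begin{align*}
\Pf(1,\cdots,2n,\boxed{c_i,b})=0-\sum_{k,l=1}^{2n}(-\delta_{i,k})(A^{-1})_{kl}\,b_l=\sum_{l=1}^{2n}(A^{-1})_{il}\,b_l=(A^{-1}\underline{b})_i.
\end{align*}
Since the unique solution is $\underline{x}=A^{-1}\underline{b}$, its $i$-th component is precisely $x_i$, which is exactly \eqref{ci}.

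The computation itself is routine once the notation is decoded, so the only genuine point requiring care is the invertibility statement over a non-commutative division ring: one must ensure that uniqueness of solutions for \emph{all} right-hand sides forces a two-sided inverse of $A$ rather than merely a one-sided one. I expect this to be the main (though modest) obstacle, and it is resolved by the fact that over $\mathcal{R}$ a square matrix defining a bijective linear map admits a two-sided inverse, in harmony with the principal-minor invertibility assumptions that underpin the quasi-Pfaffian construction of Section \ref{sub2.2}. A secondary check is merely bookkeeping, namely confirming that the skew-symmetry $a_{ij}=-a_{ji}^\top$ guarantees that the object on the right of \eqref{ci} is a legitimate quasi-Pfaffian and not just an arbitrary boxed quasi-determinant.
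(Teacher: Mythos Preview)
Your proposal is correct and follows essentially the same approach as the paper: the appendix motivates the result via the quasi-determinant expression $x_i=-\left|\begin{smallmatrix}A&b\\ e_i&\boxed{0}\end{smallmatrix}\right|$ for the solution of $A\underline{x}=\underline{b}$, and your argument simply makes this explicit by plugging the label conventions $\Pf(c_i,j)=-\delta_{i,j}$, $\Pf(j,b)=b_j$, $\Pf(c_i,b)=0$ into the expansion formula \eqref{expand} to obtain $(A^{-1}\underline{b})_i$. The paper does not spell out the invertibility equivalence over a division ring, so your additional paragraph on that point is a welcome clarification rather than a deviation.
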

The commutative case of this linear system was studied by Jacobi in 1827 \cite{jacobi27}, as an analog of ``Cramer's rule'' for the solution of general systems of skew-symmetric linear equations. 
The solution is then given by \cite[Equation 6.1]{knuth96}
\begin{align*}
x_j=\frac{\Pf(1,\cdots,\hat{j},\cdots,2n,b)}{\Pf(1,\cdots,2n)}.
\end{align*}

\end{document}